\definecolor{hkured}{HTML}{EE4123}
\definecolor{hkublue}{HTML}{009BD4}
\definecolor{hkugreen}{HTML}{00B38C}
\definecolor{hkuyellow}{HTML}{FED401}
\declaretheorem[style=definition]{definition}
\declaretheorem{theorem}
\declaretheorem[sibling=theorem]{lemma}
\declaretheorem[sibling=theorem]{corollary}
\newtcolorbox{algorithm}[1]
{
	adjusted title = {#1},
	fonttitle = \bfseries,
	beforeafter skip = 12pt,
}
\newcommand{\OPT}{\mathrm{OPT}}
\newcommand{\ALG}{\mathrm{ALG}}
\newcommand{\A}{\mathcal{A}}
\newcommand{\dif}[1]{\,\text{\rm d} #1}
\newcommand{\utility}{U}
\newcommand{\regularizer}{R}
\newcommand{\defeq}{\stackrel{\textrm{\tiny def}}{=}}
\title{The Long Arm of Nashian Allocation in Online $p$-Mean Welfare Maximization}
\author{
	Zhiyi Huang
	\footnote{The University of Hong Kong. Email: zhiyi@cs.hku.hk, cslee@cs.hku.hk}
	\and
	Chui Shan Lee
	\footnotemark[1]
	\and
	Xinkai Shu
	\footnote{Max Planck Institute for Informatics. Email: xshu@mpi-inf.mpg.de. This work was done when the author was at the University of Hong Kong.}
	\and
	Zhaozi Wang
	\footnote{New York University. Email: zw4460@nyu.edu. This work was done when the author was at
Shanghai Jiao Tong University.}
}
\date{April 2025}
\begin{document}

\begin{titlepage}
	\thispagestyle{empty}
	\maketitle
	
	\begin{abstract}
		\thispagestyle{empty}
		We study the online allocation of divisible items to $n$ agents with additive valuations for $p$-mean welfare maximization, a problem introduced by Barman, Khan, and Maiti~(2022).
Our algorithmic and hardness results characterize the optimal competitive ratios for the entire spectrum of $-\infty \le p \le 1$.
Surprisingly, our improved algorithms for all $p \le \nicefrac{1}{\log n}$ are simply the greedy algorithm for the Nash welfare, supplemented with two auxiliary components to ensure all agents have non-zero utilities and to help a small number of agents with low utilities.
In this sense, the long arm of Nashian allocation achieves near-optimal competitive ratios not only for Nash welfare but also all the way to egalitarian welfare.

	\end{abstract}
\end{titlepage}

\section{Introduction}

Resource allocation is a fundamental challenge in many domains of Computer Science, Economics, and beyond, such as cloud computing \citep[e.g.,][]{GhodsiZHKSS:NSDI:2011} and food banks \citep[e.g.,][]{AleksandrovAGW:IJCAI:2015}.
The primary objectives in these scenarios are efficiency and fairness.
For $-\infty \le p \le 1$, the $p$-means of the agents' utilities, known as the $p$-mean welfare, form an axiomatically justified family of objectives \citep[][Chapter 3]{Moulin:2004} with different tradeoffs between these two factors.
On one extreme when $p = 1$, this is the utilitarian welfare, i.e., the sum of the agents' utilities.
On the other extreme when $p = - \infty$, this is the egalitarian welfare, i.e., the minimum utility among the agents.
In the middle when $p = 0$ is the Nash welfare, which reconciles the two extremes and satisfies several notions of fairness including envy-freeness
and proportionality \citep{Varian:1974}.

This paper studies online resource allocation algorithms for maximizing the $p$-mean welfare among $n$ agents when the items arrive sequentially and need to be allocated immediately upon arrival.
This \emph{online $p$-mean welfare maximization} problem was proposed by \citet*{BarmanKM:AAAI:2022}, assuming that 1) the items are divisible, 2) the agents' utilities are additive, and 3) each agent's utility for receiving all items, a.k.a., its monopolist utility, is equal to $1$.
For the whole spectrum of $-\infty \le p \le 1$, \citet{BarmanKM:AAAI:2022} gave competitive online algorithms and hardness results. 
However, there are gaps between their algorithmic and hardness results.
For any negative constant $p$, the gap is polynomially large in the number of agents.

By contrast, the non-negative regime is better understood.
For any positive $p$, the problem is equivalent to a special case of the \emph{online matching with concave returns} problem introduced by \citet{DevanurJ:STOC:2012}.
Their online primal dual approach gave a $\nicefrac{1}{p}$ competitive algorithm even without the third assumption above about unit monopolist utilities.
We include an exposition of this algorithm in \Cref{app:positive-regime} to be self-contained, and also in \Cref{app:positive-nashian} an improvement for $0 < p \le \nicefrac{1}{\log n}$ using the third assumption. 
For Nash welfare ($p = 0$), \citet*{BanerjeeGGJ:SODA:2022} gave an $O(\log n)$-competitive algorithm,%
\footnote{%
	More precisely, their model assumes having predictions of the agents' monopolist utilities, which can be used to normalize the agents' utilities to restore the assumption of unit monopolist utilities.
	See also \citet*{HuangLSW:WINE:2023} for online algorithms that relaxed this assumption to allow the absence of predictions of agents' monopolist utilities but still require the max-min ratio of monopolist utilities to be bounded.
}
which is the best possible.

\subsection{Our Contributions}

\paragraph{Optimal Competitive Ratios (up to Lower-Order Terms).}
We improve the upper and lower bounds for the competitive ratios of online $p$-mean welfare maximization.
These bounds characterize the optimal competitive ratios for all $-\infty \le p \le 1$.
See \Cref{tab:summary} for a summary.

For $p \ge \nicefrac{1}{\log n}$, we prove that the $\nicefrac{1}{p}$ competitive ratio by \citet{DevanurJ:STOC:2012} is optimal up to a $\log \nicefrac{1}{p}$ term even if the instance satisfies the assumption of unit monopolist utilities.

For the Nashian regime, i.e., when $-\nicefrac{1}{\log n} \le p \le \nicefrac{1}{\log n}$, we improve the $O(\log^3 n)$ competitive ratio by \citet{BarmanKM:AAAI:2022} to $O(\log n)$, and give a corresponding hardness result matching it up to a doubly logarithmic factor. 
Note that for Nash welfare ($p = 0$), our result matches the optimal ratio of $O(\log n)$ obtained by \cite{BanerjeeGGJ:SODA:2022}.

Between the Nashian regime and the harmonic mean, i.e., when $-1 \le p \le - \nicefrac{1}{\log n}$, we give an online algorithm that is $n^{\nicefrac{|p|}{(|p|+1)}}$-competitive, omitting lower-order terms.
This is better than the state-of-the-art by \citet{BarmanKM:AAAI:2022} by a polynomial factor for $p \ge - \nicefrac{1}{4}$, and a poly-logarithmic factor for other values of $p$.
We complement the algorithmic result with a hard instance, showing that no online algorithm can do better up to a logarithmic term.
This hardness result is a polynomial improvement compared to the existing $n^{\nicefrac{|p|}{(2|p|+1)}}$ for any negative constant $p$.

Last but not least, from harmonic mean to egalitarian welfare, i.e., when $-\infty \le p \le -1$, we show that the optimal competitive ratio is $\sqrt{n}$ up to a lower-order term.
Our hardness result is a polynomial improvement on the existing $n^{\nicefrac{|p|}{(2|p|+1)}}$ bound by \citet{BarmanKM:AAAI:2022} for any finite $p$;
our algorithmic result is slightly better than theirs by a $\sqrt{\log n}$ factor.

\begin{table}[t]
\renewcommand{\arraystretch}{1.3}
\setlength{\tabcolsep}{0pt}
\centering
\caption{Summary of Results. We omit constant and lower-order terms for brevity, and defer the precise bounds to the subsequent sections.
The bounds on some rows were known for a special value of $p$: \citet{BanerjeeGGJ:SODA:2022} gave $\log n$ upper and lower bounds for $p = 0$, and \citet{BarmanKM:AAAI:2022} showed a $\sqrt{n}$ lower bound for $p = -\infty$.
}
\label{tab:summary}
\vspace{-6pt}
\begin{tabular}{r@{\hskip 3pt}c@{\hskip 3pt}l@{\hskip 20pt}c@{\hskip 12pt}l@{\hskip 20pt}c@{\hskip 12pt}l}
	\toprule
	&&& \multicolumn{2}{l}{Algorithmic Results} & \multicolumn{2}{l}{Hardness Results} \\
        \midrule
	$\nicefrac{1}{\log n} \le$ & $p$ & $\le 1$ & $\nicefrac{1}{p}$ & \citet{DevanurJ:STOC:2012} & $\nicefrac{1}{p}$ & \Cref{thm:hardness-positive} \\
	$-\nicefrac{1}{\log n} \le$ & $p$ & $\le \nicefrac{1}{\log n}$ & $\log n$ & \Cref{cor:nashian} & $\log n$ & \Cref{thm:hardness-positive} \\
	$-o(1) \le$ & $p$ & $\le - \nicefrac{1}{\log n}$\tablefootnote{The omitted terms are of lower order for $p \le -\omega(\nicefrac{\log\log n}{\log n})$. For $-O(\nicefrac{\log\log n}{\log n}) \le p \le -\nicefrac{1}{\log n}$, our results show that the optimal competitive ratio is poly-logarithmic, but do not characterize the degree of the poly-logarithms.} & $n^{|p|}$ & \Cref{thm:nashian} & $n^{|p|}$ & \Cref{thm:hardness-negative-almost-nashian} \\
	$-1 \le$ & $p$ & $\le -\Omega(1)$ & $n^{\frac{|p|}{|p|+1}}$ & \Cref{thm:nashian-to-harmonic} & $n^{\frac{|p|}{|p|+1}}$ & \Cref{thm:hardness-nashian-to-harmonic} \\
	$-\infty \le$ & $p$ & $\le -1$ & $\sqrt{n}$ & \Cref{thm:harmonic-to-egalitarian}; \citet{BarmanKM:AAAI:2022} & $\sqrt{n}$ & \Cref{thm:hardness-harmonic-to-egalitarian} \\
	\bottomrule
\end{tabular}
\end{table}

\vspace{-3pt}
\paragraph{Long Arm of Nashian Allocation.}
Besides the improved competitive ratios, we find it conceptually interesting that our algorithmic results for the whole spectrum of $-\infty \le p \le \nicefrac{1}{\log n}$ are obtained by just two algorithms, both based on the greedy algorithm for the Nash welfare.
\emph{A priori}, it is surprising that the greedy algorithm for $p = 0$ also achieves nearly optimal competitive ratios for other values of $p$, even those that are far from zero.
This can be viewed as further evidence for the effectiveness of Nash welfare maximization for balancing efficiency and fairness in the context of online optimization, echoing the unreasonable fairness from offline Nash welfare maximization showed by \citet{CaragiannisKMPSW:TEAC:2019}.

More precisely, the improvements for $-1 \le p \le \nicefrac{1}{\log n}$ are obtained by combining the greedy algorithm for the Nash welfare with the common idea of distributing a constant fraction of each item uniformly to all agents to ensure an $\Omega(\nicefrac{1}{n})$ base utility for the agents \citep[see, e.g.,][for previous analyses of this algorithm for $p = 0$]{BanerjeeGGJ:SODA:2022, HuangLSW:WINE:2023}.
The base utilities allow us to avoid the irregularity of $p$-mean welfare when an agent has zero utility, for any $p \le 0$. 

We further address the cases of $-\infty \le p \le -1$ by introducing an auxiliary component that considers the other extreme of the spectrum, greedily maximizing the egalitarian welfare with respect to the agents' \emph{regularized utilities}.
An agent's regularized utility is the sum of its utility and a regularization term that is linear in the agent's monopolist utility for the remaining items.
We stress that the greedy algorithm for the Nash welfare still plays the main role in this regime.
In particular, it ensures that at most $\tilde{O}(\sqrt{n})$ agents may have low regularized utilities; 
the auxiliary component is designed to help these agents by making regularized egalitarian allocation.

\vspace{-3pt}
\paragraph{Our Techniques.}
Next, we sketch the competitive analysis of the greedy algorithm for the Nash welfare.
Consider the algorithm's allocation and any alternative allocation.
Denote each agent $a$'s utilities for these allocations as $\utility_a$ and $\tilde{\utility}_a$ respectively.
We will prove that the ratio of these utilities, i.e., $\nicefrac{\tilde{\utility}_a}{\utility_a}$, is at most $O( \log n )$ averaging over all agents $a$.
This is easy to prove in hindsight, based on 1) the greedy criteria with respect to the Nash welfare, and the fact that 2) the agents' maximum and minimum utilities differ by at most an $O(n)$ multiplicative factor due to the $\Omega(\nicefrac{1}{n})$ base utilities from distributing a constant fraction of each item uniformly to all agents.
Nonetheless, this is powerful enough to derive all subsequent lemmas and competitive ratios in this paper; 
we call it the \emph{Fundamental Lemma of Nashian Allocation} (\Cref{lem:nashian}).

In particular, we will use this fundamental lemma to derive upper bounds for the number of agents whose utilities are smaller than a threshold (\Cref{lem:bad-agents-number}), and the number of agents whose regularized utilities are smaller than a threshold (\Cref{lem:critical-agent-number}).
Intuitively, these bounds are useful because the $p$-mean welfare for any negative $p<0$ may be seen as a softmin function over the agents' utilities.
Hence, lower bounding the $p$-mean welfare of the algorithm's allocation reduces to upper bounding the number of agents whose (regularized) utilities are too small.

Finally, our hardness results consider upper triangular instances that are prevalent in online resource allocation and online matching \citep*[e.g.,][]{KarpVV:STOC:1990, KalyanasundaramP:TCS:2000}.
We supplement the upper triangular instances with items arriving at the end to satisfy the assumption of unit monopolist utilities.

\subsection{Further Related Work}

Online resource allocation problems have been studied extensively, including online packing \citep{AlonAABN:TALG:2006, BuchbinderN:MOR:2009} and online matching \citep{KarpVV:STOC:1990, Mehta:FTTCS:2013, HuangTW:SIGecom:2024}.
\citet{Walsh:ADT:2011} and \citet{AleksandrovAGW:IJCAI:2015} started the study of online fair division with agent and item arrivals respectively.
See \citet{AleksandrovW:AAAI:2020} for a survey.

Closest to this paper is the work by \citet{BarmanKM:AAAI:2022}, which we have already discussed and compared against.
Better results have been achieved in relaxed models of online allocation.
\citet{CohenP:ICALP:2023} studied the problem in a learning-augmented setting, i.e., where the algorithm has access to some extra (machine-learned) information. 
\citet{HajiaghayiPKS:NeurIPS:2022} considered the case of egalitarian welfare and indivisible items, assuming that the items arrive by a random order and the instance satisfies a large-market assumption.

Another line of work considered online resource allocation
for other notions of fairness.
\citet{BenadeKPP:EC:2018} achieved envy-freeness in online allocation in the sense of no-regret learning.
\citet{ZengP:EC:2020} explored the fairness-efficiency tradeoffs in online fair division with respect to envy-freeness and Pareto optimality.
\citet{GkatzelisPT:AAAI:2021} considered the online allocation of divisible items (to two agents) to satisfy envy-freeness and at the same time approximately maximize utilitarian welfare.
\citet{BanerjeeGHJMS:IJCAI:2023} studied online allocation of public goods for proportional fairness, with predictions for the agents' monopolist utilities.

Finally, offline maximization for $p$-mean welfare has also been studied in a long line of work.
Since the case of additive valuations and divisible items reduces to standard convex optimization, researchers focused on the harder case with indivisible items and non-additive valuations.
For general $-\infty \le p \le 1$, \citet{BarmanBKS:ESA:2020} and \citet{ChaudhuryGM:AAAI:2021} independently gave $O(n)$-approximation algorithms for maximizing the $p$-mean welfare when the valuations are subadditive.

For $p = 0$, maximizing the Nash welfare is APX-hard \citep{Lee:IPL:2017}.
The state-of-the-art for additive valuations is an $e^{\nicefrac{1}{e}}$-approximation algorithm by \citet{BarmanKV:EC:2018}, which has been generalized to the weighted case by \citet{FengL:ICALP:2024}.
\citet{LiV:FOCS:2022} obtained a constant approximation for submodular valuations while \citet{GargKK:SODA:2020} showed an $\nicefrac{e}{(e-1)}$ lower bound.
Recently, \citet{DobzinskiLRV:STOC:2024} gave a constant approximation for subadditive valuations.

For $p = -\infty$, maximizing the egalitarian welfare for additive valuations and indivisible items, a.k.a., the Santa Claus problem \citep[e.g.,][]{BansalS:STOC:2006}, is also APX-hard \citep{BezakovaD:SIGecom:2005}.
\citet{ChakrabartyCK:FOCS:2009} gave an $\tilde{O}(n^\varepsilon)$-approximate algorithm that runs in $n^{O(\nicefrac{1}{\varepsilon})}$ time, for any $\varepsilon=\Omega(\nicefrac{\log\log n}{\log n})$.

\section{Preliminaries}
\label{sec:prelim}

We write $\log$ for the natural logarithm in this paper.

\subsection{Model}

\paragraph{Discrete Time.}
Consider the problem of online allocation of $m$ \emph{divisible} items $I$ to $n$ agents $A$.
Agent $a$ has value $v_{ai} \ge 0$ for item $i$.
The items arrive one by one.
The agents' values for an item are unknown initially and revealed when the item arrives.
Upon an item's arrival, the algorithm must allocate it to some agent(s) immediately and irrevocably.
Denote the algorithm's allocation as $x = ( x_{ai} )_{a \in A, i \in I}$, where $x_{ai} \ge 0$ is the portion of item $i$ allocated to agent $a$.
Agent $a$'s \emph{utility} for allocation $x$ is:
\[
	\utility_a = \sum_{i \in I} v_{ai} \cdot x_{ai} 
	~.
\]

For some $p \le 1$, we want to maximize the \emph{$p$-mean welfare}:
\begin{equation}
	\label{eqn:p-norm-welfare}
	\left( \frac{1}{n} \sum_{a \in A} \utility_a^p \right)^{\frac{1}{p}}
	~.
\end{equation}

When $p = 1$, this is the \emph{utilitarian welfare} (up to a $\nicefrac{1}{n}$ factor).
When $p = 0$, \Cref{eqn:p-norm-welfare} is defined as the geometric mean of the agents' utilities, and is known as the \emph{Nash welfare}.
When $p = -1$, this becomes the harmonic mean of the agents' utilities;
we therefore call it the \emph{harmonic welfare}.
Last but not least, when $p = -\infty$, \Cref{eqn:p-norm-welfare} is the minimum utility among the agents, and is called the \emph{egalitarian welfare}.

\paragraph{Continuous Time.}
It will be more convenient to present our algorithms and analyses in the more general continuous-time model.
Next, we present the model and then explain the reduction from the discrete-time model to the continuous-time model.
Consider a continuum of infinitesimal items arriving in time horizon $[0, T)$ with unit arrival rate.
We will refer to the item arriving at time $t$ as item $t$, and thus, let $I = [0, T)$ denote the set of items.
Agent $a$ has unit value $v_a(t)$ for item $t$; we assume that $v_a(t)$ is piecewise constant, changing its value only a finite number of times.
Denote the algorithm's allocation as $x = \big( x_a(t) \big)_{a \in A, t \in I}$ where $x_a(t)$ is the portion of item $t$ allocated to agent $a$.
Correspondingly, agent $a$'s utility for allocation $x$ is:
\[
	\utility_a = \int_0^T v_a(t) x_a(t) \dif{t}
	~.
\]

Given any instance in the discrete-time model where we without loss of generality denote the set of $m$ divisible items as $\{1, 2, \dots, m\}$, we can reinterpret it in the continuous-time model with $T = m$, such that agent $a$'s value for items $i-1 \le t < i$ is $v_a(t) = v_{ai}$, for any $1 \le i \le m$.

\paragraph{Assumption of Unit Monopolist Utilities.}
Following \cite{BarmanKM:AAAI:2022}, we assume that the agents have unit monopolist utilities.
That is, for any agent $a \in A$, we have:
\[
	\int_0^T v_a(t) \dif{t} = 1
	~.
\]

Trivial hardness results exist if the algorithm has no information on the agents' monopolist utilities.
\citet{BanerjeeGGJ:SODA:2022} provided a hard instance that prevents any algorithm from performing better than $\Omega(n)$-competitive in online Nash welfare maximization ($p = 0$). The proof of $\Omega(n)$ hardness can be generalized to any $p \leq 0$ using the same hard instance.

Our results still hold under a weaker assumption that the monopolist utilities are known and within a constant factor of each other (see \Cref{sec:discussion}). This corresponds to having predictions for the agents' monopolist utilities, which could be derived from historical data and machine learning models. Moreover, the agents are of similar importance in the market.
Further relaxing this assumption is an interesting research direction. See \citet{BanerjeeGGJ:SODA:2022} and \citet{HuangLSW:WINE:2023} for some related results for the Nash welfare.

\subsection{Competitive Analysis}

\paragraph{Offline Optimal Allocation.}
We will compare the $p$-mean welfare of the algorithm's allocation to the offline optimal benchmark, obtained from optimizing the allocation based on full knowledge of the instance.
We can compute the offline optimal benchmark by solving a convex program:
\begin{align*}
	\text{maximize} \quad & \bigg( \frac{1}{n} \sum_{a \in A} \utility_a^p \bigg)^{\frac{1}{p}} \\[1ex]
	\text{subject to} \quad & \sum_{a \in A} x_a(t) \le 1 && \forall t \in I \\
	& \utility_a = \int_0^T v_a(t) x_a(t) \dif{t} && \forall a \in A \\[2ex]
	& x_a(t) \ge 0 && \forall a \in A, \forall t \in I	\notag
\end{align*}

We denote the optimal allocation as $x^* = \big( x_a^*(t) \big)_{a \in A, t \in I}$ and the corresponding agents' utilities and $p$-mean welfare as $\{U_a^*\}_{a\in A}$ and $\OPT$ respectively.
If there are multiple allocations that achieve the optimal $p$-mean welfare, we will pick an arbitrary one as $x^*$.

\paragraph{Competitive Online Algorithms.}
Denote the $p$-mean welfare for the online algorithm's allocation as $\ALG$.
An online algorithm is $\Gamma$-competitive if for every instance of online $p$-mean welfare maximization, we have:
\[
	\OPT \le \Gamma \cdot \ALG
	~.
\]

\subsection[Relaxation of Online p-Mean Welfare Maximization]{Relaxation of Online $p$-Mean Welfare Maximization}

We can guarantee a simple lower bound for all agents' utilities with the Uniform Allocation below.

\begin{algorithm}{Uniform Allocation}
	For each item $t \in I$, allocate it uniformly to all agents, i.e., $x_a(t) = \nicefrac{1}{n}$.
\end{algorithm}

\begin{lemma}
    \label{lem:unif}
	Every agent $a \in A$ gets utility $\nicefrac{1}{n}$ from 	Uniform Allocation.
\end{lemma}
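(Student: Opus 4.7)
The plan is to proceed by direct computation from the definition of the agents' utilities in the continuous-time model, invoking the assumption of unit monopolist utilities. Since the Uniform Allocation sets $x_a(t) = \nicefrac{1}{n}$ for every agent $a \in A$ and every item $t \in I$, substituting into the formula
\[
    \utility_a = \int_0^T v_a(t) x_a(t) \dif{t}
\]
lets me pull the constant $\nicefrac{1}{n}$ out of the integral, leaving $\nicefrac{1}{n} \cdot \int_0^T v_a(t) \dif{t}$.

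Then I would apply the assumption of unit monopolist utilities, which guarantees $\int_0^T v_a(t) \dif{t} = 1$ for every agent $a \in A$, to conclude $\utility_a = \nicefrac{1}{n}$. That's the entire argument; there is no real obstacle here, since the lemma is an immediate consequence of the definition combined with the normalization assumption. The only thing worth noting is that the statement uses the continuous-time integral form but equally applies to the discrete-time model via the reduction described earlier in the preliminaries, where $v_a(t) = v_{ai}$ on the unit-length interval corresponding to item $i$.
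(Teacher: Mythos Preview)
Your proposal is correct and matches the paper's approach: the paper states this lemma without proof, treating it as an immediate consequence of the definition of Uniform Allocation together with the unit monopolist utility assumption, which is exactly the direct computation you give.
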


As a result of this simple bound, we can consider a relaxation of online $p$-mean welfare maximization, in which each agent starts with $\nicefrac{1}{n}$ base utility for the algorithm's allocation.
In other words, an agent $a$'s utility for an allocation $x$ is:
\[
	U_a \defeq \frac{1}{n} + \int_0^T v_a(t) x_a(t) \dif{t}
	~.
\]

We further write $\utility_a(t)$ for agent $a$'s utility for the items it received before time $t$, i.e.:
\[
	\utility_a(t) \defeq \frac{1}{n} + \int_0^t v_a(s) x_a(s) \dif{s}
	~.
\]

As for the benchmark, we will still compare against the original optimal $p$-mean welfare without the $\nicefrac{1}{n}$ base utility.

\begin{lemma}
	\label{lem:relaxation}
	If an online algorithm $A$ is $\Gamma$-competitive for the relaxed online $p$-mean welfare maximization problem, then allocating half of each item by Uniform Allocation and the other half by algorithm $A$ is $2 \Gamma$-competitive for the original problem.
\end{lemma}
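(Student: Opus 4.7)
The plan is to show that, on every instance, the hybrid allocation's utility vector is pointwise exactly half of the utility vector that algorithm $A$ achieves on the same instance under the relaxed objective. The bound then follows from positive homogeneity of the $p$-mean and from the fact that the benchmark for the relaxed problem is the same $\OPT$ as for the original problem, so no extra loss is incurred on the comparator side.

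First I would set up notation: let $y_a(t)$ denote the fraction of item $t$ that algorithm $A$, viewed as an algorithm for the relaxed problem on the given instance, allocates to agent $a$, so that $\sum_{a \in A} y_a(t) = 1$ for every $t$. The hybrid allocation is then
\[
    x_a^{\mathrm{hyb}}(t) \;=\; \frac{1}{2n} \;+\; \frac{1}{2}\, y_a(t),
\]
which is feasible since $\sum_a x_a^{\mathrm{hyb}}(t) = 1$. Applying the unit-monopolist assumption $\int_0^T v_a(t) \dif{t} = 1$, we compute
\[
    \int_0^T v_a(t)\, x_a^{\mathrm{hyb}}(t) \dif{t} \;=\; \frac{1}{2n} + \frac{1}{2}\int_0^T v_a(t)\, y_a(t) \dif{t} \;=\; \frac{1}{2}\left( \frac{1}{n} + \int_0^T v_a(t)\, y_a(t) \dif{t} \right),
\]
and the right-hand side equals $\tfrac{1}{2}\, U_a^{A}$, where $U_a^{A}$ denotes agent $a$'s utility under the relaxed objective when $A$ is run on the instance.

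To conclude I would invoke the positive homogeneity of degree one of the $p$-mean in its utility vector, which holds uniformly for every $-\infty \le p \le 1$, including the limiting cases $p = 0$ (geometric mean) and $p = -\infty$ (minimum). This yields $\ALG^{\mathrm{hyb}} = \tfrac{1}{2}\, \ALG^{A}_{\mathrm{relax}}$, and combining with the hypothesis $\OPT \le \Gamma \cdot \ALG^{A}_{\mathrm{relax}}$ gives the desired bound $\OPT \le 2\Gamma \cdot \ALG^{\mathrm{hyb}}$. There is essentially no substantive obstacle in the argument; the only point worth keeping straight is that $A$ is simulated on the full-item values of the original instance and its fractional decisions are merely scaled by $\tfrac{1}{2}$ afterwards, so that the ``$\nicefrac{1}{n}$ base utility'' in the relaxed accounting of $A$ corresponds exactly to the uniform half of the hybrid.
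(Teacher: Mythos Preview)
Your argument is correct and is exactly the natural one: the uniform half supplies the $\nicefrac{1}{n}$ base utility, the $A$-half supplies the remainder, and positive homogeneity of the $p$-mean converts the pointwise factor $\tfrac{1}{2}$ into the extra factor $2$ in the competitive ratio. The paper states this lemma without proof, treating it as evident, and your write-up fills in precisely the obvious computation it omits.
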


\section[Nashian Regime]{Nashian Regime: $- o(1) \le p \le \nicefrac{1}{\log n}$}
\label{sec:nashian}

This section considers the Nashian regime where  $p$ is close to zero. 
We will analyze the greedy algorithm for maximizing the Nash welfare and prove that it is competitive for all values of $p$ in this regime simultaneously. 

Recall that the Nash welfare is the case of $p = 0$, defined as the geometric mean of the agents' utilities.
Equivalently, we may consider maximizing its logarithm (up to a factor $n$):
\[
        \sum_{a \in A} \log \utility_a
	~.
\]

For each item $t$, allocating it to agent $a$ would increase this objective by:
\[
	\frac{v_a(t)}{\utility_a(t)}
	~.
\]

Hence, the Nashian Greedy algorithm allocates each item to an agent to maximize the above increment.
We present below a formal definition of the algorithm.

\begin{algorithm}{Nashian Greedy}
	\emph{Initialization:~}
	Let $x_a(t) = 0$, $\utility_a(0) = \frac{1}{n}$ for any agent $a \in A$ and any time $t \in I$.\\[2ex]
	\emph{Online Decisions:~}
	Allocate each item $t \in I$ to an agent $a$ that maximizes:
	\[
		\frac{v_a(t)}{\utility_a(t)}
		~,
	\]
	to greedily maximize the (logarithm of) Nash welfare:
	\[
		\sum_{a \in A} \log \utility_a(t)
		~.
	\]
\end{algorithm}

The main results of this section are the following theorem and its corollary when $|p| \le \nicefrac{1}{\log n}$.

\begin{theorem}
	\label{thm:nashian}
	For any $|p| = o(1)$, Nashian Greedy is $O \big( n^{|p|} \cdot \log n \big)$-competitive for the relaxed online $p$-mean welfare maximization problem.
\end{theorem}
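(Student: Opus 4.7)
My plan is to build the analysis around a single ``Fundamental Lemma of Nashian Allocation'' controlling the sum $\sum_a \tilde\utility_a/\utility_a$ against any alternative allocation $\tilde\utility$, and then convert this $\ell_1$-style bound into a $p$-mean competitive ratio via one application each of weighted Jensen and Hölder. The Fundamental Lemma is the cheap half; the Jensen/Hölder step with the right choice of exponents is where the $n^{|p|}$ factor comes from.

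I would first prove the Fundamental Lemma. The greedy criterion $v_b(t)/\utility_b(t) \le v_{a^*(t)}(t)/\utility_{a^*(t)}(t)$ gives, after multiplying by $\tilde x_b(t)$ and dividing by $\utility_b$ (using $\utility_b(t) \le \utility_b$), a pointwise upper bound $v_b(t)\tilde x_b(t)/\utility_b \le (v_{a^*(t)}(t)/\utility_{a^*(t)}(t)) \tilde x_b(t)$. Summing over $b$ and using feasibility $\sum_b \tilde x_b(t) \le 1$ collapses everything to $\int_0^T \frac{d}{dt}\log\utility_{a^*(t)}(t)\,\dif t$, which telescopes into $\sum_a \log(n\,\utility_a) \le n\log(2n)$ by $\utility_a \le 1+1/n$. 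Instantiating with $\tilde\utility_a = \utility_a^*$ then yields $\sum_a r_a \le nC$ with $r_a := \utility_a^*/\utility_a$ and $C := \log(2n) = O(\log n)$.

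Step two uses the identity $(\utility_a^*)^p = r_a^p \utility_a^p$ to express the ratio of $p$-mean welfares as a weighted arithmetic mean of $r_a^{|p|}$ (for $p<0$) or $r_a^p$ (for $p>0$). Concavity of $x \mapsto x^{|p|}$ on $[0,1]$ lets me apply weighted Jensen to dominate this mean by the corresponding power of the weighted mean of $r_a$ itself, reducing the whole estimate to bounding a weighted first moment of the $r_a$'s. For $p > 0$, the weights $w_a \propto \utility_a^p$ are controlled by the trivial bound $\utility_a \le 2$, so a single $(1,\infty)$ Hölder together with $\sum_a r_a \le nC$ produces $\sum_a w_a r_a \le 2^p C/\ALG^p$; substituting back and invoking $\ALG \ge 1/n$ from the base utilities leads directly to $\OPT \le 2^p C\,n^p\,\ALG = O(n^{|p|}\log n)\,\ALG$.

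The $p<0$ case is where I expect the main technical subtlety to lie, because now the weights $w_a \propto (\utility_a^*)^p$ pile onto agents with small $\utility_a^*$ and cannot be bounded uniformly. Here I would attack $\sum_a (\utility_a^*)^{-|p|} r_a = \sum_a r_a^{1-|p|} \utility_a^{-|p|}$ directly, applying Hölder at the conjugate exponents $1/(1-|p|)$ and $1/|p|$ to split it into $(\sum_a r_a)^{1-|p|}(\sum_a \utility_a^{-1})^{|p|}$, which is bounded by $(nC)^{1-|p|}(n^2)^{|p|} = n^{1+|p|} C^{1-|p|}$ using \emph{both} the Fundamental Lemma and the base-utility bound $\utility_a \ge 1/n$. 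Dividing by $\sum_a (\utility_a^*)^p = n\,\OPT^p$, applying weighted Jensen, taking $|p|$-th roots, and using $\OPT \le 1$ (from $\sum_a \utility_a^* \le n$ and the power mean inequality $M_p \le M_1$) collapses everything into $\OPT/\ALG \le C^{1-|p|} n^{|p|} \OPT^{|p|} \le C\cdot n^{|p|} = O(n^{|p|}\log n)$, completing the argument.
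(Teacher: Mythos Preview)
Your proof is correct and follows the same high-level architecture as the paper's (Fundamental Lemma $\Rightarrow$ weighted power-mean/Jensen step), but the implementation of the second step is genuinely different. The paper introduces an \emph{auxiliary allocation} $\tilde x = \tfrac{1}{2n} + \tfrac12 x^*$, so that $\tilde\utility_a$ lies in $[\tfrac{1}{2n},\tfrac{n+1}{2n}]$; the weights $z_a \propto \tilde\utility_a^p$ then satisfy $z_a \le \tfrac{1}{n}(n+1)^{|p|}$ uniformly, and a single application of the power-mean inequality plus the Fundamental Lemma finishes both signs of $p$ at once. You instead work directly with $\utility^*$, which forces you to split into cases: for $p>0$ the weights $w_a \propto \utility_a^p$ are bounded via $\utility_a \le 1+\tfrac1n$ and $\ALG \ge \tfrac1n$; for $p<0$ the weights $w_a \propto (\utility_a^*)^{-|p|}$ are \emph{not} uniformly bounded, and you compensate with the H\"older split $\sum_a r_a^{1-|p|}\utility_a^{-|p|} \le (\sum_a r_a)^{1-|p|}(\sum_a \utility_a^{-1})^{|p|}$ together with $\OPT \le 1$.

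What each buys: the paper's auxiliary-allocation trick yields a single, symmetric argument with no case analysis and no explicit appeal to $\OPT\le 1$ or $\ALG\ge 1/n$ beyond the weight bound. Your route is more bare-hands and shows the auxiliary allocation is not essential; it also makes transparent exactly where the two endpoint facts $\utility_a \ge \tfrac1n$ and $\utility_a^* \le 1$ enter. Two minor remarks: the concavity of $x\mapsto x^{|p|}$ you invoke holds on all of $[0,\infty)$, not just $[0,1]$ (the $r_a$ can be large), and you should state the trivial $p=0$ case (AM--GM on the Fundamental Lemma) explicitly for completeness.
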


\begin{corollary}
	\label{cor:nashian}	
	For any $p$ such that $|p| \le \nicefrac{1}{\log n}$, Nashian Greedy is $O(\log n)$-competitive for the relaxed online $p$-mean welfare maximization problem.
\end{corollary}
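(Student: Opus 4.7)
My plan is to analyze Nashian Greedy through the Fundamental Lemma of Nashian Allocation. The lemma states that, for any alternative allocation $x^*$ with resulting utilities $U_a^*$, we have $\sum_a U_a^*/U_a \le O(n\log n)$. It follows from the greedy's choice criterion at each time $t$, namely $\sum_a x_a^*(t)v_a(t)/U_a(t) \le v_{a^*(t)}(t)/U_{a^*(t)}(t) = \tfrac{d}{dt}\sum_a \log U_a(t)$, integrated in time and combined with the monotonicity $U_a(t) \le U_a$ and the bound $U_a \le 1 + 1/n$ (so that the right-hand side evaluates to $\sum_a \log(nU_a) = O(n\log n)$).

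Next, I would deploy the Fundamental Lemma to bound the number of agents with small utility. For any thresholds $\tau$ and $s$, agents with $U_a<\tau$ and $U_a^* \ge s$ satisfy $U_a^*/U_a > s/\tau$, so their count is at most $O(n\tau\log n/s)$ by the Fundamental Lemma. Agents with $U_a^* < s$ number at most $n(s/\OPT)^{|p|}$, since $\sum_a (U_a^*)^p = n\OPT^p$ forces $|\{U_a^*<s\}|\cdot s^p \le n\OPT^p$. Optimizing over $s$ would yield Lemma~\ref{lem:bad-agents-number}, a tail bound on $|\{a:U_a<\tau\}|$ in terms of $\tau$, $\OPT$, $|p|$, and $\log n$.

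I would then translate the tail bound into a lower bound on $\ALG$. Writing $\ALG^p = \tfrac{1}{n}\sum_a U_a^p$ and decomposing over dyadic thresholds $\tau_k = \OPT\cdot 2^{-k}$ for $k = 0, 1, \ldots, O(\log(n\OPT))$, each bucket's contribution is (bucket size)$\times$(maximum $U_a^p$ in bucket), both controlled by the previous step; agents with $U_a \ge \OPT$ contribute only $\OPT^p$ per agent on average, and agents with $U_a < 1/n$ do not exist thanks to the base utilities. Combining everything with $U_a \ge 1/n$ (so $U_a^p \le n^{|p|}$) should yield $\ALG^p \le O((n^{|p|}\log n)^{|p|})\cdot \OPT^p$, which is equivalent to $\OPT \le O(n^{|p|}\log n)\cdot\ALG$.

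The hardest part will be tightly controlling the constants throughout. Because $p$-mean welfare is taken to the $1/p$-th power, any slack $C$ in the bound $\ALG^p \le C\cdot\OPT^p$ amplifies to $C^{1/|p|}$ in the competitive ratio, which is arbitrarily large once $|p|$ approaches $1/\log n$. So the bad-agent count and the dyadic buckets must be calibrated so that the multiplicative factor in $\ALG^p$ is exactly $(n^{|p|}\log n)^{|p|}$ up to universal constants, rather than up to $|p|$-dependent or subpolynomial factors. In particular, the geometric-series aggregation across the dyadic buckets must be summed with common ratio very close to $1$ (namely $2^{|p|^2/(|p|+1)}$) without losing the extra $1/|p|$ or $\log n$ factor that a naive analysis produces.
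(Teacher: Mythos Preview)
Your plan is essentially the argument the paper uses for \Cref{thm:nashian-to-harmonic} (the regime $-1\le p\le -\Omega(1)$), not for \Cref{cor:nashian}, and it does \emph{not} go through when $|p|\le \nicefrac{1}{\log n}$. The obstacle is precisely the one you flag in your last paragraph, and it is fatal rather than merely delicate. The layer-cake (or dyadic) summation of the tail bound $|B_\beta|/n\le(\beta\log(n+1))^{|p|/(|p|+1)}$ unavoidably produces a prefactor $\frac{|p|+1}{|p|}$ in the bound for $(\ALG/\OPT)^p$; equivalently, your geometric series has $K=\Theta(\log n)$ buckets and common ratio $2^{|p|^2/(|p|+1)}=1+\Theta(1/\log^2 n)$, so it sums to $\Theta(\log n)$, not to $1+O(|p|\log\log n)$ as you would need. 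After taking the $1/|p|$-th root, the resulting competitive ratio is $(\log n)^{\Theta(\log n)}$, not $O(\log n)$. No recalibration of thresholds or buckets repairs this, because the contribution from agents with $U_a\ge\OPT$ alone already gives a constant $\ge 1$ that, raised to the $1/|p|$-th power, is polynomial in $n$. (Your sketch also tacitly assumes $p<0$---e.g., ``$U_a^p\le n^{|p|}$''---so the range $0<p\le\nicefrac{1}{\log n}$ is not covered at all.)

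The paper's proof of \Cref{thm:nashian} avoids the blowup by a different mechanism: it never integrates a tail bound. Instead it compares $\ALG$ directly to an auxiliary allocation $\tilde x$ (half uniform, half optimal) with utilities $\tilde U_a\in[\tfrac{1}{2n},\tfrac{1}{2}+\tfrac{1}{2n}]$ and $p$-mean at least $\OPT/2$, writes
\[
\frac{\ALG}{\tilde{\mathrm{welfare}}}
=\bigg(\sum_{a} z_a\Big(\tfrac{\tilde U_a}{U_a}\Big)^{-p}\bigg)^{-1/p},
\qquad z_a=\frac{\tilde U_a^p}{\sum_{a'}\tilde U_{a'}^p},
\]
and applies the generalized mean inequality (the $(-p)$-mean is at most the $1$-mean since $-p<1$) to replace the $(-1/p)$-th root by the arithmetic mean $\sum_a z_a\,\tilde U_a/U_a$ \emph{before} invoking the Fundamental Lemma. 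Because $\tilde U_a$ ranges over a ratio of at most $n{+}1$, each weight satisfies $z_a\le (n+1)^{|p|}/n$, and the Fundamental Lemma bounds the arithmetic mean by $(n+1)^{|p|}\log(n+1)$. The key point is that the $1/p$-th power is absorbed by the power-mean step, so no constant gets amplified; this is what your approach is missing.
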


Next, we present the most important lemma for the analysis of Nashian Greedy.
Despite its simple form and proof, it is the foundation of all subsequent lemmas and competitive analyses in this paper.

\begin{lemma}[Fundamental Lemma of Nashian Allocation]
	\label{lem:nashian}
	Consider any allocation of the items $\tilde{x} = (\tilde{x}_{ai})_{a \in A, i \in I}$.
	For any time $t \in I$ and the agents' utilities for allocation $\tilde{x}$ up to time $t$, denoted as $\tilde{\utility}(t) = (\tilde{\utility}_a(t))_{a \in A}$, we have:
	\[
		\frac{1}{n} \sum_{a \in A} \frac{\tilde{\utility}_a(t)}{\utility_a(t)} \le \log (n+1)
		~.
	\]
\end{lemma}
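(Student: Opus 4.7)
The plan is to introduce a potential function that tracks the quantity we want to bound and compare its growth rate to that of the log-Nash welfare of the greedy allocation. Define
\[
    \Phi(t) \;\defeq\; \sum_{a \in A} \frac{\tilde{\utility}_a(t)}{\utility_a(t)},
\]
where $\tilde{\utility}_a(t) = \int_0^t v_a(s)\tilde{x}_a(s)\dif{s}$ and $\utility_a(t) = \frac{1}{n} + \int_0^t v_a(s)x_a(s)\dif{s}$ as in the preliminaries. Since $\tilde{\utility}_a(0) = 0$ for every $a$, we have $\Phi(0) = 0$, so it suffices to upper bound the integral of $\Phi'(t)$ from $0$ to $t$.

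Differentiating term by term,
\[
    \Phi'(t) \;=\; \sum_{a \in A} \frac{v_a(t)\,\tilde{x}_a(t)}{\utility_a(t)} \;-\; \sum_{a \in A} \frac{\tilde{\utility}_a(t)\,v_a(t)\,x_a(t)}{\utility_a(t)^2}.
\]
The second sum is non-negative, so dropping it only weakens the inequality. For the first sum I would invoke the defining property of Nashian Greedy: if $b(t)$ is the agent chosen at time $t$, then $v_a(t)/\utility_a(t) \le v_{b(t)}(t)/\utility_{b(t)}(t)$ for every $a$. Combined with the feasibility constraint $\sum_a \tilde{x}_a(t) \le 1$, this yields
\[
    \sum_{a \in A} \frac{v_a(t)\,\tilde{x}_a(t)}{\utility_a(t)} \;\le\; \frac{v_{b(t)}(t)}{\utility_{b(t)}(t)}.
\]

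Now the right-hand side is exactly the rate at which the log-Nash welfare of the greedy allocation grows at time $t$: because greedy allocates each infinitesimal item entirely to one agent, $\frac{d}{dt}\sum_a \log \utility_a(t) = v_{b(t)}(t)/\utility_{b(t)}(t)$. Integrating,
\[
    \Phi(t) \;\le\; \sum_{a \in A} \log \utility_a(t) - \sum_{a \in A} \log \utility_a(0) \;=\; \sum_{a \in A} \log\bigl(n\,\utility_a(t)\bigr).
\]
Finally, by the unit-monopolist-utility assumption, $\utility_a(t) \le \frac{1}{n} + \int_0^T v_a(s)\dif{s} = \frac{n+1}{n}$, so each term on the right is at most $\log(n+1)$, giving $\Phi(t) \le n\log(n+1)$ and therefore the claimed bound after dividing by $n$.

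The main obstacle, conceptually, is spotting that $\Phi$ is the right potential and that the greedy choice rule upgrades a sum over all agents (the numerator contribution of $\tilde{x}$) into a single agent's $\log$-derivative; the rest is calculus and the monopolist-utility normalization. A minor care point is to justify dropping the negative term and to handle the piecewise-constant $v_a(\cdot)$ rigorously by interpreting $\Phi$ as absolutely continuous with the stated derivative almost everywhere.
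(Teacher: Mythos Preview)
Your proof is correct and follows essentially the same approach as the paper. The only cosmetic difference is in how the relaxation is packaged: the paper freezes the denominator at its terminal value $\utility_a(t)$ before integrating (so it compares $\frac{\dif}{\dif s}\sum_a \log \utility_a(s)$ directly to $\frac{\dif}{\dif s}\sum_a \tilde{\utility}_a(s)/\utility_a(t)$), whereas you differentiate the full quotient $\tilde{\utility}_a(s)/\utility_a(s)$ and then discard the non-negative subtracted term; these two maneuvers are equivalent and lead to the same integrated inequality $\sum_a \tilde{\utility}_a(t)/\utility_a(t) \le \sum_a \log\big(\utility_a(t)/\utility_a(0)\big)$.
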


\begin{proof}
	Consider any item $s \in [0, t)$.
	Suppose that Nashian Greedy allocates it to agent $a$.
	By the definition of the algorithm, for any agent $\tilde{a}$ we have:
	\[
		\frac{v_a(s)}{\utility_a(s)} \ge \frac{v_{\tilde{a}}(s)}{\utility_{\tilde{a}}(s)}
		~.
	\]
	
	The left-hand-side equals the increment of the logarithm of Nash  welfare for Nashian Greedy's allocation.
	Further, we relax the denominator of the right-hand-side to $U_{\tilde{a}}(t)$.
	We get that:
	\[
		\frac{\dif}{\dif{s}} \sum_{a \in A} \log \utility_a(s) \ge \frac{v_{\tilde{a}}(s)}{\utility_{\tilde{a}}(t)}
		~.
	\]
	
	Multiplying this inequality by $\tilde{x}_a(s)$ and summing over all agents $\tilde{a} \in A$, we get that:
	\[
		\frac{\dif}{\dif{s}} \sum_{a \in A} \log \utility_a(s)  \ge \frac{\dif}{\dif{s}} \sum_{a\in A} \frac{\tilde{\utility}_a(s)}{\utility_a(t)}
	\]
	
	Integrating over $s \in [0, t)$ gives:
	\[
		\sum_{a \in A} \log \frac{\utility_a(t)}{\utility_a(0)} \ge \sum_{a \in A} \frac{\tilde{\utility}_a(t)}{\utility_a(t)} 
		~.
	\]
	
	The lemma now follows by $1+\nicefrac{1}{n} \ge \utility_a(t) \ge \utility_a(0) = \nicefrac{1}{n}$.
\end{proof}

\begin{proof}[Proof of \Cref{thm:nashian}]
	For $p = 0$, the theorem follows by considering \Cref{lem:nashian} with $\tilde{x} = x^*$, i.e., the optimal allocation, and applying the AM-GM inequality to the left-hand-side.
	
	Next, we consider the case when $p \ne 0$.
	Define an auxiliary allocation $\tilde{x}$ that distributes half of each item uniformly to all agents, and the other half following the optimal allocation $x^*$.
	That is, for any agent $a \in A$ and any item $t \in I$:
	\[
		\tilde{x}_a(t) \defeq \frac{1}{2n} + \frac{1}{2} x^*_a(t)
		~.
	\]
	
	Let $\tilde{U}_a$ be agent $a$'s utility for allocation $\tilde{x}$.
	By definition, we have:
	\begin{equation}
		\label{eqn:auxiliary-utility-range}
		\frac{1}{2n} \le \tilde{U}_a \le \frac{1}{2} + \frac{1}{2n}	
		~,
	\end{equation}
	and also:
	\begin{equation}
		\label{eqn:auxiliary-allocation-approximation}
		\bigg( \frac{1}{n} \sum_{a \in A} \tilde{\utility}_a^p \bigg)^{\frac{1}{p}} \ge \frac{\OPT}{2}
		~.
	\end{equation}
	
	We write the $p$-mean welfare of Nashian Greedy's allocation as:
	\begin{equation}
		\label{eq:nashian-p-mean-rewrite}	
		\ALG = \bigg( \frac{1}{n} \sum_{a \in A} \utility_a^p \bigg)^{\frac{1}{p}} = \bigg(\frac{1}{n} \sum_{a \in A} \tilde{\utility}_a^p \cdot \Big( \frac{\tilde{\utility}_a}{\utility_a} \Big)^{-p} \bigg)^{\frac{1}{p}}
		~.
	\end{equation}
	
	Next, we introduce a set of auxiliary variables to denote:
	\[
		z_a \defeq \frac{\tilde{\utility}_a^p}{\sum_{a' \in A} \tilde{\utility}_{a'}^p}
		~.
	\]
	
	Comparing \Cref{eqn:auxiliary-allocation-approximation,eq:nashian-p-mean-rewrite}, it suffices to show that:
	\[
		\bigg( \sum_{a \in A} z_a \cdot \Big( \frac{\tilde{\utility}_a}{\utility_a} \Big)^{-p} \bigg)^{-\frac{1}{p}} \le (n+1)^{|p|} \cdot \log(n+1)
		~.
	\]
	
	By the definition of these auxiliary variables, we have $\sum_{a \in A} z_a = 1$.
	Hence, by applying the generalized mean inequality relating $(-p)$-mean and $1$-mean, where recall that $-p < 1$, we have:
	\[
		\bigg( \sum_{a \in A} z_a \cdot \Big( \frac{\tilde{\utility}_a}{\utility_a} \Big)^{-p} \bigg)^{-\frac{1}{p}}\le \sum_{a \in A}  z_a \cdot \frac{\tilde{\utility}_a}{\utility_a}
		~.
	\]
	
	Further, by the range of $\tilde{\utility}_a$ in \Cref{eqn:auxiliary-utility-range}, we have:
	\[
		z_a \le \frac{1}{n} \bigg( \frac{\max_{a \in A} \tilde{\utility}_a}{\min_{a \in A} \tilde{\utility}_a} \bigg)^{|p|} \le \frac{1}{n} \cdot (n+1)^{|p|}
		~.
	\]
	
	Combining the above with \Cref{lem:nashian}, we get that:
	\[
		\bigg( \sum_{a \in A} z_a \cdot \Big( \frac{\tilde{\utility}_a}{\utility_a} \Big)^{-p} \bigg)^{-\frac{1}{p}}\le (n+1)^{|p|} \cdot \frac{1}{n} \sum_{a \in A} \frac{\tilde{\utility}_a}{\utility_a} \le (n+1)^{|p|} \cdot \log(n+1)
		~.
	\]
\end{proof}

\section[From Nash to Harmonic Welfare]{From Nash to Harmonic Welfare: $-1 \le p \le - \Omega(1)$}
\label{sec:nash-to-harmonic}

In this section, we  continue to analyze the Nashian Greedy algorithm.
Surprisingly, it remains competitive even when $p$ is bounded away from zero.
The resulting competitive ratios are nearly optimal for all values of $p$ from Nash welfare to harmonic welfare.

\begin{theorem}
	\label{thm:nashian-to-harmonic}
	For any $-1 \le p \le -	\Omega(1)$, Nashian Greedy is:
	\[
		O \Big( n^{\frac{|p|}{|p|+1}} (\log n)^{\frac{1}{|p|+1}} \Big)
	\]
	competitive for the relaxed online $p$-mean welfare maximization problem.
\end{theorem}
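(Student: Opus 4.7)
The plan is to build on the analysis of Section~\ref{sec:nashian} by pairing the Fundamental Lemma of Nashian Allocation (Lemma~\ref{lem:nashian}) with a threshold argument that separates agents into ``well served'' and ``poorly served'' groups. As before, I would introduce the auxiliary allocation $\tilde{x}$ that splits each item equally between the uniform allocation and $x^*$. This guarantees $\tilde{U}_a \ge \tfrac{1}{2n}$, and since $p<0$, gives $\sum_a \tilde{U}_a^p \le n \cdot 2^{|p|}\OPT^p$. Applying the Fundamental Lemma to $\tilde{x}$ then yields $\sum_a \tilde{U}_a/U_a \le n\log(n+1)$.

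Since $p<0$, proving a competitive ratio of $\Gamma$ is equivalent (up to absolute constants) to bounding $\sum_a U_a^p \le n \cdot \Gamma^{|p|} \OPT^p$. For a parameter $\beta>0$ to be chosen later, I would partition agents by the ratio $\tilde{U}_a/U_a$: set $B = \{a : \tilde{U}_a/U_a > \beta\}$, so that the Fundamental Lemma gives $|B| \le n\log(n+1)/\beta$. For $a \notin B$, the inequality $U_a \ge \tilde{U}_a/\beta$ together with $p<0$ implies $U_a^p \le \beta^{|p|}\tilde{U}_a^p$, so this group contributes at most $\beta^{|p|}\sum_a \tilde{U}_a^p \le 2^{|p|}\beta^{|p|} n\OPT^p$. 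For $a \in B$, the base utility $U_a \ge 1/n$ gives $U_a^p \le n^{|p|}$, so this group contributes at most $n^{1+|p|}\log(n+1)/\beta$. Combining yields
\[
  \ALG^p \;\le\; 2^{|p|}\beta^{|p|}\OPT^p \;+\; \frac{n^{|p|}\log(n+1)}{\beta}.
\]

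Balancing the two terms by setting $\beta \asymp \big(n^{|p|}\log n / \OPT^p\big)^{1/(|p|+1)}$ gives $\ALG^p \le O(1)\cdot \OPT^{p/(|p|+1)} (n^{|p|}\log n)^{|p|/(|p|+1)}$, which rearranges to $(\OPT/\ALG)^{|p|} \le O(1)\cdot \OPT^{|p|^2/(|p|+1)} (n^{|p|}\log n)^{|p|/(|p|+1)}$. Invoking the elementary bound $\OPT \le 1$ from the unit monopolist utility assumption to discard the $\OPT^{|p|^2/(|p|+1)}$ factor and taking $|p|$-th roots produces the claimed $O\big(n^{|p|/(|p|+1)} (\log n)^{1/(|p|+1)}\big)$ ratio; the hypothesis $|p| = \Omega(1)$ absorbs the spurious $O(1)^{1/|p|}$ into an absolute constant. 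The main obstacle is arranging the partition so that both contributions genuinely decrease in $\beta$ in a compatible way: a naive absolute threshold on $U_a$ would use only $\tilde{U}_a \ge 1/(2n)$ in the ``good'' term and lose the $\OPT$-dependence there, causing the analysis to collapse to the trivial $\Theta(n)$ ratio. Keying the partition on the ratio $\tilde{U}_a/U_a$ retains the connection to $\OPT$ through $\sum_a \tilde{U}_a^p$, and the resulting symmetric dependence on $\beta$ is exactly what produces the $(|p|+1)$-th root that yields the $n^{|p|/(|p|+1)}$ bound.
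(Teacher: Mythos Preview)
Your argument is correct, but it follows a genuinely different route from the paper's proof. The paper defines $\beta$-bad agents by the \emph{absolute} threshold $U_a \le \beta\cdot\OPT$, then uses H\"older's inequality on the optimal utilities $U_a^*$ (combining $\sum_{B_\beta} U_a^* \le \beta n\log(n+1)\cdot\OPT$ from the Fundamental Lemma with $\sum_{B_\beta}(U_a^*)^{-|p|} \le n\,\OPT^p$) to obtain the tail bound $|B_\beta|/n \le (\beta\log(n+1))^{|p|/(|p|+1)}$ valid for \emph{every} $\beta$ (Lemma~\ref{lem:bad-agents-number}), and finishes with a layer-cake integration $\int_0^{n^{|p|}}\cdots\,d\alpha$. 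You instead threshold on the \emph{ratio} $\tilde U_a/U_a$, apply a Markov-type bound directly to the Fundamental Lemma sum to control $|B|$, and balance two contributions at a single choice of $\beta$. Both arrive at the same $n^{|p|/(|p|+1)}(\log n)^{1/(|p|+1)}$ bound. Your argument is more elementary (no H\"older, no integration); the paper's argument has the advantage that its intermediate Lemma~\ref{lem:bad-agents-number} is reused verbatim in Section~\ref{sec:harmonic-to-egalitarian} for the Mixed Greedy analysis. Two minor remarks: (i) the auxiliary allocation $\tilde x$ is not actually needed in your argument---you never use $\tilde U_a \ge \tfrac{1}{2n}$, and working directly with $x^*$ would give $\sum_a (U_a^*)^p = n\,\OPT^p$ without the extra $2^{|p|}$; (ii) the sentence linking ``$\tilde U_a \ge \tfrac{1}{2n}$'' to ``$\sum_a \tilde U_a^p \le n\cdot 2^{|p|}\OPT^p$'' is misleading as written---the latter follows from $\tilde U_a \ge U_a^*/2$, not from the lower bound on $\tilde U_a$.
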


\subsection{Further Properties of Nashian Allocation}

\begin{definition}[Bad Agents]
\label{def:bad}
	For any time $t\in I$ and any $\beta > 0$, let $B_\beta(t)$ be the set of \emph{$\beta$-bad} agents whose utilities at time $t$ are at most a $\beta$ fraction of the optimal $p$-mean welfare, i.e.:
	\[
	    B_{\beta}(t) \defeq \Big\{a\in A:\,\utility_a(t) \leq \beta \cdot \OPT \Big\}
	    ~.
	\]
	Further, we write $B_\beta$ for $B_\beta(T)$, the set of $\beta$-bad agents at the end.
\end{definition}

The next lemma considers a subset of $\beta$-bad agents, and upper bounds the sum of their utilities for the optimal allocation by a linear combination of $\beta \cdot \OPT$, the upper bound of these agents' utilities for the algorithm's allocation, and the sum of these agents' remaining monopolist utilities.

\begin{lemma}
	\label{lem:bad-agents-optimal-utility}
	For any time $t\in I$, any $\beta > 0$, and any subset of $\beta$-bad agents $S \subseteq B_\beta(t)$, we have:
	\[
		\frac{1}{n} \sum_{a\in S} \utility_a^* 
		\leq \beta \log (n+1) \cdot \OPT + \frac{1}{n} \sum_{a\in S} \Big(1-\int_0^t v_a(s) \dif{s} \Big)
	~.
	\]
\end{lemma}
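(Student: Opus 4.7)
The plan is to split each optimal utility $\utility_a^*$ into what the optimum collects before time $t$ and what it collects after time $t$, then bound the two pieces separately. Concretely, for each $a \in S$ I would write
\[
    \utility_a^* = \int_0^t v_a(s)\, x_a^*(s) \dif{s} + \int_t^T v_a(s)\, x_a^*(s) \dif{s}.
\]
The second integral is straightforward: since $x_a^*(s) \le 1$, it is at most $\int_t^T v_a(s) \dif{s} = 1 - \int_0^t v_a(s) \dif{s}$, which after summing over $a \in S$ and dividing by $n$ yields the second term on the right-hand side of the lemma.

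For the first integral, the key move is to apply the Fundamental Lemma of Nashian Allocation (\Cref{lem:nashian}) to the auxiliary allocation $\tilde{x} = x^*$ at time $t$, so that $\tilde{\utility}_a(t) = \int_0^t v_a(s)\, x_a^*(s) \dif{s}$. Because every agent $a \in S$ is $\beta$-bad, we have $\utility_a(t) \le \beta \cdot \OPT$, hence
\[
    \tilde{\utility}_a(t) = \frac{\tilde{\utility}_a(t)}{\utility_a(t)} \cdot \utility_a(t) \le \beta \cdot \OPT \cdot \frac{\tilde{\utility}_a(t)}{\utility_a(t)}.
\]
Summing this over $a \in S$, extending the sum to all $a \in A$ (all terms are nonnegative), dividing by $n$, and invoking \Cref{lem:nashian} gives
\[
    \frac{1}{n} \sum_{a \in S} \int_0^t v_a(s)\, x_a^*(s) \dif{s} \le \beta \cdot \OPT \cdot \frac{1}{n} \sum_{a \in A} \frac{\tilde{\utility}_a(t)}{\utility_a(t)} \le \beta \log(n+1) \cdot \OPT.
\]
Combining the two bounds yields the claim.

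I do not anticipate any serious obstacle: the argument is essentially a time decomposition followed by a direct application of the fundamental lemma. The only subtle point is noticing that one may freely extend the sum from $S$ to $A$ before applying \Cref{lem:nashian} (since the summands are nonnegative), and that the badness condition $\utility_a(t) \le \beta \cdot \OPT$ is exactly what converts the ratio bound of the fundamental lemma into an additive bound in terms of $\OPT$.
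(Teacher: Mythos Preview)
Your proposal is correct and follows essentially the same approach as the paper: split $\utility_a^*$ at time $t$, bound the tail by the remaining monopolist utility, and bound the head by applying the Fundamental Lemma of Nashian Allocation with $\tilde{x}=x^*$ together with the $\beta$-badness inequality $\utility_a(t)\le\beta\cdot\OPT$. The only cosmetic difference is that the paper first writes the full-sum bound from \Cref{lem:nashian} and then restricts to $S$ and relaxes $\utility_a(t)$, whereas you relax first and then extend the sum to $A$; the logic is identical.
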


\begin{proof}
	Recall that $x^*$ denotes the optimal allocation.
	The left-hand-side of the lemma's inequality can be written as:
	\[
		\frac{1}{n} \sum_{a \in S} \int_0^T v_a(s) x^*_a(s) \dif{s}
		~.
	\]
	
	On one hand, observe that:
	\[
		\int_t^T v_a(s) x^*_a(s) \dif{s} \le \int_t^T v_a(s) \dif{s} = 1 - \int_0^t v_a(s) \dif{s}
		~.
	\]
	
	On the other hand, by \Cref{lem:nashian}, we have:
	\[
		\frac{1}{n} \sum_{a \in A} \frac{1}{\utility_a(t)} \int_0^t v_a(s) x^*_a(s) \dif{s} \le \log(n+1) 
		~.
	\]
	
	We now drop all agents outside subset $S$ from the summation on the left-hand-side, and relax $\utility_a(t)$ to its upper bound $\beta \cdot \OPT$ for the remaining $\beta$-bad agents $a \in S$.
	We get that:
	\[
		\frac{1}{n} \sum_{a \in S} \int_0^t v_a(s) x^*_a(s) \dif{s} \le \beta \log (n+1) \cdot \OPT
		~.
	\]	
	
	Putting together these two parts proves the lemma.
\end{proof}

\begin{lemma}
	\label{lem:bad-agents-number}	
	For any $\beta > 0$, the fraction of $\beta$-bad agents at the end is at most:
	\[
		\frac{|B_\beta|}{n} \le \big( \beta \log (n+1) \big)^{\frac{|p|}{|p|+1}}
		~.
	\]
\end{lemma}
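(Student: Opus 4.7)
The plan is to combine Lemma~\ref{lem:bad-agents-optimal-utility} (applied at the end of the time horizon) with the power mean inequality in order to turn a bound on the \emph{sum} of optimal utilities of bad agents into a bound on their \emph{count}.

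First, I would invoke Lemma~\ref{lem:bad-agents-optimal-utility} with $t = T$ and $S = B_\beta$. Because of the unit monopolist utility assumption, $\int_0^T v_a(s)\dif{s} = 1$ for every $a$, so the correction term on the right-hand side vanishes and one gets the clean bound
\[
    \frac{1}{n}\sum_{a\in B_\beta} \utility_a^* \;\le\; \beta\log(n+1)\cdot \OPT.
\]

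Next, I would extract a lower bound for the same quantity in terms of $|B_\beta|/n$. Since $p<0$, every term $(\utility_a^*)^p$ is positive, so dropping the agents outside $B_\beta$ from the definition of $\OPT$ gives
\[
    \OPT^p \;\ge\; \frac{1}{n}\sum_{a\in B_\beta}(\utility_a^*)^p \;=\; \frac{|B_\beta|}{n}\cdot\frac{1}{|B_\beta|}\sum_{a\in B_\beta}(\utility_a^*)^p.
\]
Raising to the power $1/p$ (which flips the inequality since $p<0$), this yields
\[
    \OPT \;\le\; \Big(\frac{n}{|B_\beta|}\Big)^{1/|p|}\cdot \Big(\frac{1}{|B_\beta|}\sum_{a\in B_\beta}(\utility_a^*)^p\Big)^{1/p}.
\]
The power mean inequality ($p$-mean $\le$ $1$-mean for $p \le 1$) bounds the right factor by the arithmetic mean $\frac{1}{|B_\beta|}\sum_{a\in B_\beta}\utility_a^*$, so
\[
    \frac{1}{|B_\beta|}\sum_{a\in B_\beta}\utility_a^* \;\ge\; \Big(\frac{|B_\beta|}{n}\Big)^{1/|p|}\cdot \OPT.
\]

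Finally, I would multiply the last inequality by $|B_\beta|/n$ and combine it with the upper bound from the first step:
\[
    \Big(\frac{|B_\beta|}{n}\Big)^{1 + 1/|p|} \OPT \;\le\; \frac{1}{n}\sum_{a\in B_\beta}\utility_a^* \;\le\; \beta\log(n+1)\cdot \OPT.
\]
Cancelling $\OPT$ and raising to the power $|p|/(|p|+1)$ gives the claimed bound $|B_\beta|/n \le (\beta\log(n+1))^{|p|/(|p|+1)}$.

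I do not expect a substantive obstacle: Lemma~\ref{lem:bad-agents-optimal-utility} does all of the heavy lifting, and the rest is a clean application of the power mean inequality. The only subtlety worth double-checking is the direction of inequalities when manipulating $p$th powers for $p<0$, and verifying that the boundary term in Lemma~\ref{lem:bad-agents-optimal-utility} indeed vanishes at $t=T$ under the unit-monopolist-utility assumption.
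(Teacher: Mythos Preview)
Your proposal is correct and follows essentially the same approach as the paper. The only cosmetic difference is that the paper packages the key step as a direct application of H\"older's inequality to the pair of sums $\frac{1}{n}\sum_{a\in B_\beta}\utility_a^*$ and $\frac{1}{n}\sum_{a\in B_\beta}(\utility_a^*)^{-|p|}$, whereas you reach the same conclusion via the power mean inequality; both routes combine Lemma~\ref{lem:bad-agents-optimal-utility} at $t=T$ with the bound $\OPT^p \ge \frac{1}{n}\sum_{a\in B_\beta}(\utility_a^*)^p$ in exactly the same way.
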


\begin{proof}
	By \Cref{lem:bad-agents-optimal-utility} with $t = T$ and $S = B_\beta$, we have:
	\[
		\frac{1}{n} \sum_{a \in B_\beta} \utility_a^* \leq \beta \log (n+1) \cdot \OPT
		~.
	\]	
	
	Further, recall that $p < 0$.
	We have:
	\[
		\OPT^p = \frac{1}{n} \sum_{a \in A} (\utility_a^*)^{-|p|} \ge \frac{1}{n} \sum_{a \in B_\beta} (\utility_a^*)^{-|p|} 
		~.
	\]
	
	Finally, by H\"{o}lder's inequality:
	\[
		\bigg( \frac{1}{n} \sum_{a \in B_\beta} \utility_a^* \bigg)^{\frac{|p|}{|p|+1}} 
		\bigg( \frac{1}{n} \sum_{a \in B_\beta} (\utility_a^*)^{-|p|} \bigg)^{\frac{1}{|p|+1}} \ge \frac{1}{n} \sum_{a \in B_\beta} 1 = \frac{|B_\beta|}{n}
		~.
	\]
	
	Combining these inequalities proves the lemma.
\end{proof}

\subsection{Proof of \Cref{thm:nashian-to-harmonic}}

We compare the $p$-mean welfare of Nashian Greedy to the optimal benchmark by:
\[
	\bigg( \frac{\ALG}{\OPT} \bigg)^p = \frac{1}{n} \sum_{a\in A} \bigg( \frac{\utility_a}{\OPT} \bigg)^p
	~.
\]

By $\utility_a \ge \nicefrac{1}{n}$ and $\OPT \le 1$, and recalling that $p < 0$, we have :
\[
	\bigg(\frac{\utility_a}{\OPT}\bigg)^p \le n^{|p|}
	~.
\]

Therefore, the above ratio can be written as:
\begin{align*}
	\bigg( \frac{\ALG}{\OPT} \bigg)^p
	& = \int_0^{n^{|p|}} \big(\text{fraction of agents with $\big( \nicefrac{\utility_a}{\OPT} \big)^p \ge \alpha$}\big) \dif{\alpha} \\
	&
	\le \int_0^{n^{|p|}} \big( \alpha^{\frac{1}{p}} \log(n+1) \big)^{\frac{|p|}{|p|+1}} \dif{\alpha} 
	\tag{\Cref{lem:bad-agents-number}} \\
	&
	= \int_0^{n^{|p|}} \big( \log(n+1) \big)^{\frac{|p|}{|p|+1}} \alpha^{-\frac{1}{|p|+1}} \dif{\alpha} \\
	&
	= \frac{|p|+1}{|p|} \cdot n^{\frac{|p|^2}{|p|+1}} \big(\log(n+1) \big)^{\frac{|p|}{|p|+1}}
	~.
\end{align*}

Taking $p$-th root on both sides gives:
\[
	\frac{\ALG}{\OPT} ~\ge \underbrace{\bigg( \frac{|p|}{|p|+1} \bigg)^{\frac{1}{|p|}}}_{\text{$\Omega(1)$ for $|p| = \Omega(1)$}} \cdot~ n^{-\frac{|p|}{|p|+1}} \big(\log(n+1) \big)^{-\frac{1}{|p|+1}}
	~.
\]

\section[From Harmonic to Egalitarian Welfare]{From Harmonic to Egalitarian Welfare: $-\infty \le p \le -1$}
\label{sec:harmonic-to-egalitarian}

This section considers the regime between harmonic welfare and egalitarian welfare. 
In this case, we need to introduce an auxiliary component to obtain a new lower bound of the agents' utilities that is better than the $\nicefrac{1}{n}$ base utility given by Uniform Allocation.

Although Nashian Greedy on its own fails to provide a better bound, we observe that the number of agents who are in trouble cannot exceed $\sqrt{n \log (n+1)}$ (see \Cref{cor:critical-threshold}).
Hence, we may reserve a constant fraction of each item to help these agents.

How can we identify the agents in trouble?
Na\"{i}vely, it is natural to consider an egalitarian greedy algorithm that allocates the item to the agents who currently have the lowest utilities.
However, an agent may have low utility just because its valuable items are yet to arrive.
It would be a mistake to allocate items to such an agent purely based on egalitarian consideration.
Instead, we introduce for each agent $a \in A$ a regularization term $\regularizer_a$, which equals the agent's remaining monopolist utility, i.e., its total utility for the items yet to arrived, scaled by a $\sqrt{n \log (n+1)}$ factor that is driven by the analysis.
Our new component is the egalitarian greedy allocation with respect to the regularized utilities $\utility_a + \regularizer_a$.

We remark that \citet{BarmanKM:AAAI:2022} also introduced an algorithmic sub-routine to identify the agents in trouble, and already observed the necessity to consider both the agents' utilities and their remaining monopolist utilities. 
Our regularized egalitarian allocation gives an alternative, and in our opinion, more natural way to implement this idea.
The resulting competitive ratios are also asymptotically better than those from \citet{BarmanKM:AAAI:2022}.

Below we present the formal definition of this Mixed Greedy algorithm that combines Nashian Greedy and Regularized Egalitarian Greedy.
For cleaner constants in the analysis, we further relax the problem by letting there be two copies of each item.
This relaxation only affects the competitive ratio by a constant factor.

\begin{algorithm}{Mixed Greedy (Nashian Greedy and Regularized Egalitarian Greedy)}
	\emph{Initialization:~}
	Let $x_a(t) = 0$, $\utility_a(0) = \frac{1}{n}$ for any agent $a \in A$ and any time $t \in I$.\\[2ex]
	\emph{Regularization:~}
	Define the regularizer of any agent $a \in A$ as:
	\[
	\regularizer_a(t) = \frac{1}{\Phi} 
	\Big(1 - \int_0^t v_a(s) \dif{s} \Big)
	~,
	\]
	where $\Phi = \sqrt{n \log(n+1)}$.\\[2ex]
	\emph{Online Decisions:~}
	For each item $t \in I$, which comes in two copies:
	\begin{enumerate}
	\item Allocate the \emph{Nashian copy} using Nashian Greedy.
	\item Allocate the \emph{egalitarian copy} to the agents with the smallest regularized utility:
		\[
			\utility_a(t) + \regularizer_a(t)
			~,
		\]
		to greedily maximize the regularized egalitarian welfare:
		\[
			\min_{a \in A} \big( \utility_a(t) + \regularizer_a(t) \big)
			~.
		\]
	\end{enumerate}
\end{algorithm}

This algorithm achieves the same and nearly optimal competitive ratio simultaneously for all values of $p$ from harmonic welfare to egalitarian welfare.

\begin{theorem}
	\label{thm:harmonic-to-egalitarian}
	For any $-\infty \le p \le -1$, Mixed Greedy is $O \big( \sqrt{n \log n} \big)$-competitive for the relaxed online $p$-mean welfare maximization problem.
\end{theorem}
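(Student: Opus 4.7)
For $p\le -1$, since $U_a^p$ is decreasing in $U_a$ on the positive reals, $\ALG^p = \tfrac{1}{n}\sum_{a\in A} U_a^p \le (\min_a U_a(T))^p$; taking $p$-th roots gives $\ALG \ge \min_a U_a(T)$. Thus it suffices to establish a $p$-independent bound $\min_a U_a(T) = \Omega(\OPT/\sqrt{n\log n})$, after which the stated competitive ratio follows uniformly over the whole range $-\infty\le p\le -1$.

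\paragraph{Bounding critical agents.}
Call agent $a$ \emph{critical at time $t$} if $U_a(t) + r_a(t) \le \beta\cdot\OPT$ with $\beta = \Theta(1/\sqrt{n\log n})$. I expect the critical count to be bounded by $O(\sqrt{n\log(n+1)})$ at every time via an analog of \Cref{lem:bad-agents-optimal-utility,lem:bad-agents-number} applied to the Nashian copies only. The parallel argument starts from the Fundamental Lemma (\Cref{lem:nashian}) to bound $\tfrac{1}{n}\sum_{a\in C(t)} U_a^*$ in terms of $\beta\cdot\OPT$ plus a sum of regularizers over $C(t)$, exploiting that $\Phi\cdot r_a(t) = 1-\int_0^t v_a(s)\dif{s}$ is precisely the ``remaining monopolist utility'' slack that appears in \Cref{lem:bad-agents-optimal-utility}; a H\"older step then converts this into the cardinality bound, with the exponent tuned (at $p=-1$ or as $|p|\to\infty$) to yield $|C(t)|\le O(\sqrt{n\log(n+1)})$.

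\paragraph{Egalitarian part keeps the minimum high.}
The regularized egalitarian greedy acts on $m(t)\defeq\min_a(U_a(t)+r_a(t))$. Initially $m(0)=\tfrac{1}{n}+\tfrac{1}{\Phi}=\Omega(1/\sqrt{n\log n})$. At each time the full egalitarian copy is channeled to the current minimizers, contributing at unit total rate to $\sum_a U_a(t)$ over the minimizer set, while each agent's $r_a$ leaks at rate $v_a(t)/\Phi\le 1/\Phi$. Since at most $O(\sqrt{n\log n})=O(\Phi)$ agents can be critical at once by the previous step, the unit-rate egalitarian throughput, once distributed over the minimizers, delivers rate $\Omega(1/\Phi)$ per critical agent, exactly offsetting the leakage; I expect this to yield $m(t)\ge\Omega(\OPT/\sqrt{n\log n})$ throughout. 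Because $r_a(T)=0$ we then get $\min_a U_a(T)=m(T)=\Omega(\OPT/\sqrt{n\log n})$, and combining with the reduction of the first step (and the constant-factor losses from item duplication and from \Cref{lem:relaxation}) gives the theorem.

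\paragraph{Main obstacle.}
The delicate point is the dynamical claim in the third step: the regularizer leaks from every agent simultaneously, the egalitarian copy feeds only the current minimizers, and the critical set can churn as the Nashian copies are distributed. A natural plan is an invariant/contradiction argument: suppose $t_0$ is the first time $m(t_0)$ dips below $\beta\cdot\OPT$; then a narrow cluster of agents near $m(t_0)$ must have absorbed the egalitarian throughput just before $t_0$, yet this cluster has size $\le O(\sqrt{n\log n})$ by the critical-agent bound, so the concentrated throughput $\Omega(1/\Phi)$ per agent strictly exceeds the per-agent leakage $\le 1/\Phi$, contradicting the dip once the constants are fixed. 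The calibration $\Phi=\sqrt{n\log(n+1)}$ is chosen precisely to make per-agent leakage and per-critical-agent throughput balance, which is why the argument pins down the ratio at $\sqrt{n\log n}$.
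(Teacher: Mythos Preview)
Your reduction $\ALG\ge\min_a U_a$ is valid, and your Step~3 correctly anticipates the invariant argument of \Cref{lem:negative-infinity-main-lemma} (aside: the leakage $v_a(t)/\Phi$ is not bounded by $1/\Phi$ in the continuous model, but this is harmless since the egalitarian gain is also proportional to $v_a(t)$). The genuine gap is the claim in Step~2 that a $p$-independent threshold $\beta=\Theta(1/\sqrt{n\log n})$ yields $|C_\beta(t)|\le O(\sqrt{n\log n})$ across all $p\in[-\infty,-1]$. It does not. Plugging $\beta=1/(2\Phi)$ and $p=-1$ into the H\"older argument of \Cref{lem:critical-agent-number}, the constraint reduces to $y^2\le \log(n{+}1)/(2\Phi)+y/2$ for $y=|C_\beta(t)|/n$, which only forces $y\lesssim 1/2$, i.e.\ $|C_\beta(t)|=\Theta(n)$. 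With that many potentially critical agents the egalitarian throughput per agent drops below the regularizer leakage and your invariant cannot be maintained; the best pointwise bound you can extract at $p=-1$ is $\min_a U_a\ge\Omega(\OPT/n)$ (indeed the paper's $\beta^*$ equals $1/(2n)$ there), so Step~1 would only deliver an $O(n)$ ratio.

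The paper sidesteps this by taking $\beta^*$ $p$-dependently, namely $\beta^*=\tfrac12\, n^{-\frac12-\frac{1}{2|p|}}(\log(n{+}1))^{-\frac12+\frac{1}{2|p|}}$, which is the largest threshold for which \Cref{cor:critical-threshold} gives $|C_{\beta^*}(t)|\le\Phi$ and hence $\min_a U_a\ge\beta^*\cdot\OPT$ (\Cref{lem:negative-infinity-main-lemma}). Crucially, it then does \emph{not} pass through $\ALG\ge\min_a U_a$; instead it writes $(\ALG/\OPT)^p$ as a layer-cake integral over $\alpha\in[0,(\beta^*)^p]$ and bounds the tail at each level via the $\beta$-bad-agent count of \Cref{lem:bad-agents-number}. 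This second ingredient---using the full distribution of utilities, not just the minimum---is what absorbs the $p$-dependence of $\beta^*$ and produces the uniform $O(\sqrt{n\log n})$ ratio; your proposal is missing it.
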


\subsection{Properties of the Regularized Utilities}

\begin{definition}[Critical Agents]
\label{def:critical}
	For any time $t \in I$, and for any $\beta  \geq 0$, let $C_\beta(t)$ be the set of \emph{$\beta$-critical} agents at time $t$, whose regularized utilities are at most a $\beta$ fraction of the optimal $p$-mean welfare, i.e.:
	\[
		C_{\beta}(t) = \Big\{a\in A:\,\utility_a(t)+\regularizer_a(t) \leq \beta \cdot \OPT \Big\}
		~.
	\]
\end{definition}

\begin{lemma}
	\label{lem:critical-agent-number}
	For any time $t\in I$ and any $\beta > 0$, the fraction of $\beta$-critical agents at time $t$ is upper bounded by:
	\[
		\frac{|C_\beta(t)|}{n} \le \max \Big\{ \, \big(2 \beta \log(n+1) \big)^{\frac{|p|}{|p|+1}} \,,\, (2\Phi\beta)^{|p|} \,\Big\}
		~.
	\]
\end{lemma}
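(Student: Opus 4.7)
The plan is to mimic the proof of \Cref{lem:bad-agents-number}, but applied to the set $S = C_\beta(t)$ inside \Cref{lem:bad-agents-optimal-utility}. The key observation is that every critical agent $a \in C_\beta(t)$ has both $\utility_a(t) \le \beta \cdot \OPT$ (so $C_\beta(t) \subseteq B_\beta(t)$) and $\regularizer_a(t) \le \beta \cdot \OPT$. Translating the latter through the definition $\regularizer_a(t) = \frac{1}{\Phi}(1 - \int_0^t v_a(s)\dif{s})$ gives a pointwise bound on the remaining monopolist utility: $1 - \int_0^t v_a(s)\dif{s} \le \Phi\beta \cdot \OPT$ for each $a \in C_\beta(t)$.

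Next, I would apply \Cref{lem:bad-agents-optimal-utility} with $S = C_\beta(t)$ to get
\[
\frac{1}{n}\sum_{a \in C_\beta(t)} \utility_a^* \;\le\; \beta \log(n+1) \cdot \OPT \;+\; \frac{|C_\beta(t)|}{n} \cdot \Phi\beta \cdot \OPT ~,
\]
using the pointwise bound above to simplify the second term. Writing $c \defeq |C_\beta(t)|/n$, the right-hand side becomes $\OPT \cdot \big( \beta \log(n+1) + c \Phi \beta \big)$. Now follow exactly the Hölder manipulation from the proof of \Cref{lem:bad-agents-number}: combine $\OPT^p \ge \frac{1}{n}\sum_{a \in C_\beta(t)} (\utility_a^*)^{-|p|}$ with the inequality
\[
\bigg( \frac{1}{n} \sum_{a \in C_\beta(t)} \utility_a^* \bigg)^{\frac{|p|}{|p|+1}} \bigg( \frac{1}{n} \sum_{a \in C_\beta(t)} (\utility_a^*)^{-|p|} \bigg)^{\frac{1}{|p|+1}} \;\ge\; c ~,
\]
yielding the implicit bound
\[
c \;\le\; \big( \beta \log(n+1) + c \Phi \beta \big)^{\frac{|p|}{|p|+1}} ~.
\]

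Finally, resolve this self-referential inequality by a case split on which of the two summands dominates, with $\alpha \defeq \nicefrac{|p|}{(|p|+1)}$. In the first case, $\beta\log(n+1) \ge c\Phi\beta$ gives directly $c \le (2\beta \log(n+1))^{\alpha}$. In the second case, $\beta\log(n+1) < c\Phi\beta$ gives $c \le (2 c \Phi \beta)^{\alpha}$, i.e., $c^{1-\alpha} \le (2\Phi\beta)^{\alpha}$; since $\alpha/(1-\alpha) = |p|$, this rearranges to $c \le (2\Phi\beta)^{|p|}$. Taking the maximum over the two cases yields the lemma. I do not anticipate a significant technical obstacle: the only subtlety is keeping track of the fact that the second term in the upper bound from \Cref{lem:bad-agents-optimal-utility} is itself proportional to $c$, which is what forces the self-referential bound and the case split that produces the $\max$ in the statement.
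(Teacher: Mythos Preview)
Your proposal is correct and follows essentially the same route as the paper's proof: apply \Cref{lem:bad-agents-optimal-utility} to $S=C_\beta(t)$, bound the remaining-monopolist term using $\regularizer_a(t)\le\beta\cdot\OPT$, combine with $\OPT^p\ge\frac{1}{n}\sum_{a\in C_\beta(t)}(\utility_a^*)^{-|p|}$ via H\"older to obtain the self-referential inequality $c\le(\beta\log(n+1)+c\Phi\beta)^{|p|/(|p|+1)}$, and then split into two cases. The only cosmetic difference is that the paper first rearranges this inequality into a sum of two terms that is at least $1$ and then argues one term is at least $\tfrac{1}{2}$, whereas you split directly on which summand inside the parenthesis dominates; both yield the same two bounds.
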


\begin{proof}    
	On one hand, by the definition of $p$-mean welfare and $p < 0$, we have:
	\[
		\OPT^p = \frac{1}{n} \sum_{a\in A} \big(\utility_a^*\big)^{-|p|} \ge \frac{1}{n} \sum_{a \in C_\beta(t)} \big(\utility_a^*\big)^{-|p|}
		~.
	\]
	
	On the other hand, all $\beta$-critical agents are $\beta$-bad.
	We have:
	\begin{align*}
		\frac{1}{n} \sum_{a \in C_\beta(t)} \utility_a^*
		&
		\leq \beta \log (n+1) \cdot \OPT + \frac{1}{n} \sum_{a \in C_\beta(t)} \Big( 1 - \int_0^t v_a(s) \dif{s} \Big)
		\tag{\Cref{lem:bad-agents-optimal-utility}} \\
		&
		= \beta \log (n+1) \cdot \OPT + \frac{1}{n} \sum_{a \in C_\beta(t)} \Phi \cdot \regularizer_a(t) \\
		&
		\le \bigg(\beta \log (n+1) + \frac{|C_\beta(t)|}{n} \cdot \Phi \beta \bigg) \cdot \OPT
		~.
		\tag{$\regularizer_a(t) \le \utility_a(t) + \regularizer_a(t) \le \beta \cdot \OPT$}
	\end{align*}
	
	By H\"{o}lder's inequality:
	\[
		\bigg( \frac{1}{n} \sum_{a \in C_\beta(t)} \big(\utility_a^*\big)^{-|p|} \bigg)^{\frac{1}{|p|+1}} \bigg( \frac{1}{n} \sum_{a \in C_\beta(t)} \utility_a^* \bigg)^{\frac{|p|}{|p|+1}} \ge \frac{1}{n} \sum_{a \in C_\beta(t)} 1 = \frac{|C_\beta(t)|}{n}
		~.
	\]
	
	Combining these inequalities gives:
	\[
		\bigg( \beta \log (n+1) + \frac{|C_\beta(t)|}{n} \cdot \Phi \beta \bigg)^{\frac{|p|}{|p|+1}} \ge \frac{|C_\beta(t)|}{n}
		~.
	\]
	
	Rearranging terms, we have:
	\[
		\beta \log (n+1) \cdot \bigg( \frac{n}{|C_\beta(t)|} \bigg)^{\frac{|p|+1}{|p|}} + \Phi\, \beta \cdot \bigg( \frac{n}{|C_\beta(t)|} \bigg)^{\frac{1}{|p|}} \ge 1
		~.
	\]
	
	Hence, either the first part is at least a half, in which case:
	\[
		\frac{|C_\beta(t)|}{n} \le \big(2 \beta \log(n+1) \big)^{\frac{|p|}{|p|+1}}
		~,
	\]
	or the second part is at least a half, in which case:
	\[
		\frac{|C_\beta(t)|}{n} \le (2\Phi\beta)^{|p|}
		~.
	\]
	
	In either case, the lemma follows.
\end{proof}

\begin{corollary}
	\label{cor:critical-threshold}
	For any time $t \in I$, and:
	\begin{equation}
		\label{eq:beta-star}
		\beta^* = \frac{1}{2} \cdot  n^{-\frac{1}{2}-\frac{1}{2|p|}} (\log (n+1))^{-\frac{1}{2}+\frac{1}{2|p|}}
		~,	
	\end{equation}
	we have:
	\[
		\big|C_{\beta^*} (t)\big| \le \sqrt{n \log (n+1)}
		~.
	\]
\end{corollary}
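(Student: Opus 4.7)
The plan is to apply \Cref{lem:critical-agent-number} directly with $\beta = \beta^*$. The choice of $\beta^*$ in \Cref{eq:beta-star} is engineered to be the precise balancing point at which the two terms inside the maximum in \Cref{lem:critical-agent-number} become equal; so all that is needed is a careful exponent calculation to verify that each of them equals $\sqrt{\log(n+1)/n}$ (so that the count is at most $n \cdot \sqrt{\log(n+1)/n} = \sqrt{n \log(n+1)}$).

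First, I would evaluate the first candidate, $\bigl(2 \beta^* \log(n+1)\bigr)^{|p|/(|p|+1)}$. Substituting the definition of $\beta^*$ yields $2 \beta^* \log(n+1) = n^{-\frac{1}{2}-\frac{1}{2|p|}} (\log(n+1))^{\frac{1}{2}+\frac{1}{2|p|}}$. Raising to the $|p|/(|p|+1)$ power, the exponent of $n$ becomes $-\bigl(\frac{1}{2}+\frac{1}{2|p|}\bigr) \cdot \frac{|p|}{|p|+1} = -\frac{|p|+1}{2(|p|+1)} = -\frac{1}{2}$, and symmetrically the exponent of $\log(n+1)$ becomes $+\frac{1}{2}$. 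Thus this term equals $\sqrt{\log(n+1)/n}$.

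Next, I would evaluate the second candidate, $(2 \Phi \beta^*)^{|p|}$. Using $\Phi = \sqrt{n \log(n+1)}$, we have $2 \Phi \beta^* = n^{-\frac{1}{2|p|}} (\log(n+1))^{\frac{1}{2|p|}} = \bigl(\log(n+1)/n\bigr)^{1/(2|p|)}$. Raising to the $|p|$ power again gives $\sqrt{\log(n+1)/n}$. Hence both candidates coincide, and \Cref{lem:critical-agent-number} yields
\[
	\frac{|C_{\beta^*}(t)|}{n} \le \sqrt{\frac{\log(n+1)}{n}},
\]
which multiplied by $n$ gives the desired $|C_{\beta^*}(t)| \le \sqrt{n \log(n+1)}$.

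There is no real obstacle here; the corollary is essentially a bookkeeping step. The only thing to be careful about is tracking the two symmetric cancellations $(\frac{1}{2} \pm \frac{1}{2|p|}) \cdot \frac{|p|}{|p|+1}$ so one sees that $\beta^*$ is indeed the value that equalizes the two regimes of \Cref{lem:critical-agent-number}, confirming that the choice in \Cref{eq:beta-star} is tight for the analysis of \Cref{thm:harmonic-to-egalitarian}.
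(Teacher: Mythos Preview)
Your proposal is correct and matches the paper's intent: the corollary is stated without proof as an immediate consequence of \Cref{lem:critical-agent-number}, and your exponent computation is exactly the routine verification the paper leaves implicit. Both branches of the maximum indeed evaluate to $\sqrt{\log(n+1)/n}$ at $\beta=\beta^*$, giving the claimed bound.
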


Using \Cref{cor:critical-threshold}, we derive a universal lower bound for all agents' utilities.

\begin{lemma}
	\label{lem:negative-infinity-main-lemma}
	For the choice of $\beta^*$ in \Cref{eq:beta-star}, and any agent $a \in A$, we have:
	\[
		\utility_a \ge \beta^* \cdot \OPT
		~.
	\]
\end{lemma}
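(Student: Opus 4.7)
The plan is to show that the minimum regularized utility $\mu(t) \defeq \min_{a \in A}(U_a(t) + R_a(t))$ stays at least $\beta^* \cdot \OPT$ throughout $[0, T]$; since $R_a(T) = 0$, this yields $U_a = U_a(T) + R_a(T) \ge \mu(T) \ge \beta^* \cdot \OPT$ for every agent $a$. The base case is immediate: $\mu(0) = \frac{1}{n} + \frac{1}{\Phi}$, while $\beta^* \cdot \OPT \le \beta^* \le \frac{1}{2\Phi}$ because $\OPT \le 1$ and $\beta^* \Phi = \frac{1}{2}\bigl(\log(n+1)/n\bigr)^{1/(2|p|)} \le \frac{1}{2}$ for $|p| \ge 1$. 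Hence $\mu(0) > \beta^* \cdot \OPT$.

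For the main argument, I would proceed by contradiction: assume $\mu$ strictly dips below $\beta^* \cdot \OPT$ at some point. By continuity of $\mu$, there exists a time $t^*$ with $\mu(t^*) = \beta^* \cdot \OPT$ after which $\mu$ immediately decreases. Let $B(t) \defeq \{a : U_a(t) + R_a(t) = \mu(t)\}$ be the set of agents tied at the current minimum. Then $B(t^*) \subseteq C_{\beta^*}(t^*)$, and \Cref{cor:critical-threshold} gives $|B(t^*)| \le \Phi$.

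The heart of the proof is a water-filling argument: whenever $|B(t)| \le \Phi$, the Regularized Egalitarian Greedy can split the egalitarian copy so that every $a \in B(t)$ has non-negative rate $\frac{d}{dt}(U_a(t) + R_a(t)) = v_a(t)(x_a(t) - 1/\Phi) \ge 0$. Concretely, for each $a \in B(t)$ with $v_a(t) > 0$ that does not receive the Nashian copy, it suffices to assign that agent an egalitarian share of at least $1/\Phi$; the total mass required is at most $|B(t)|/\Phi \le 1$, which fits within the unit egalitarian budget. Agents that do receive the Nashian copy already have $x_a(t) \ge 1$, and agents with $v_a(t) = 0$ have rate $0$ regardless of the allocation. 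Since Mixed Greedy chooses the allocation that greedily maximizes $\mu$, it does at least as well as this prescription, and $\mu$ cannot decrease at $t^*$ --- contradicting the assumption.

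The main subtlety I expect is the continuous-time dynamics just after $t^*$: agents outside $B(t^*)$ could cross into $B(t)$ as $t$ increases, potentially growing the tied set and threatening the water-filling budget. The rescue is that any newcomer to $B(t)$ necessarily has $U_a(t) + R_a(t) = \mu(t) \le \beta^* \cdot \OPT$, so it lies in $C_{\beta^*}(t)$. Hence \Cref{cor:critical-threshold} maintains $|B(t)| \le \Phi$ throughout the regime $\mu(t) \le \beta^* \cdot \OPT$, so the water-filling argument persists and $\mu$ never strictly drops below $\beta^* \cdot \OPT$.
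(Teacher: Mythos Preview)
Your proposal is correct and takes essentially the same approach as the paper: both prove the invariant $U_a(t) + R_a(t) \ge \beta^* \cdot \OPT$ for all $t$ by checking the base case at $t=0$ and then arguing that the egalitarian copy can be split among the at most $\Phi$ critical agents (via \Cref{cor:critical-threshold}) so each receives share at least $1/\Phi$, offsetting the regularizer's decay rate $-v_a(t)/\Phi$, with the greedy rule doing at least as well. The paper's presentation is slightly more streamlined---it allocates the egalitarian copy equally among \emph{all} $\beta^*$-critical agents rather than tracking your tied set $B(t)$, which sidesteps the newcomer subtlety you raise---but the core logic is identical.
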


\begin{proof}
	We will prove a stronger claim that for any agent $a \in A$ and any time $t \in I$:
	\begin{equation}
		\label{eqn:regularized-utility}
		\utility_a(t) + \regularizer_a(t) \ge \beta^* \cdot \OPT
		~.
	\end{equation}
	
	Then, the lemma holds as the special case when $t = T$ because $\regularizer_a(T) = 0$.

	Initially at time $t = 0$, we have:
	\[
		\regularizer_a(0) = \frac{1}{\Phi} > \beta^* \ge \beta^* \cdot \OPT
		~.
	\]
	
	To prove that \Cref{eqn:regularized-utility} holds at all time $t$, it suffices to show that for any time $t \in I$ when there is at least one $\beta^*$-critical agent $a \in C_{\beta^*}(t)$, the allocation of the egalitarian copy of item $t$ weakly increases the regularized egalitarian welfare.
	
	For any critical agent $a \in C_{\beta^*}(t)$, we have:
	\[
	\frac{\dif}{\dif{t}} \regularizer_a (t) = - \frac{v_a(t)}{\Phi}
	~.
	\] 
	
	Further, \Cref{cor:critical-threshold} asserts that at most $\sqrt{n \log(n+1)}$ agents are $\beta^*$-critical.
	Hence, allocating the egalitarian copy of item $t$ equally among these $\beta^*$-critical agents would have yielded:
	\[
		\frac{\dif}{\dif{t}} \utility_a(t) \ge \frac{v_a(t)}{\sqrt{n \log (n+1)}} = \frac{v_a(t)}{\Phi}
		~,
	\]
	and weakly increased the regularized egalitarian welfare. 
	The greedy allocation of the algorithm would only do better.
\end{proof}

\subsection{Proof of \Cref{thm:harmonic-to-egalitarian}}

The proof is almost verbatim to that of \Cref{thm:nashian-to-harmonic}, except that we will use the newly developed \Cref{lem:negative-infinity-main-lemma} to lower bound the agents' utilities, replacing the basic bound $\utility_a \ge \nicefrac{1}{n}$.

Consider the $p$-th power of the Mixed Greedy algorithm's $p$-mean welfare, normalized by the $p$-th power of $\OPT$:
\[
	\bigg( \frac{\ALG}{\OPT} \bigg)^p = \frac{1}{n} \sum_{a\in A} \bigg( \frac{\utility_a}{\OPT} \bigg)^p
	~.
\]

By \Cref{lem:negative-infinity-main-lemma} and $p < 0$, we have:
\[
	\bigg(\frac{\utility_a}{\OPT}\bigg)^p \le (\beta^*)^p
	~.
\]

The above ratio can therefore be written as:
\begin{align*}
	\bigg( \frac{\ALG}{\OPT} \bigg)^p
	& = \int_0^{(\beta^*)^p} (\text{fraction of agents with $\big( \nicefrac{\utility_a}{\OPT} \big)^p \ge \alpha$)} \dif{\alpha} \\
	&
	\le \int_0^{(\beta^*)^p} \big( \alpha^{\frac{1}{p}} \log(n+1) \big)^{\frac{|p|}{|p|+1}} \dif{\alpha} 
	\tag{\Cref{lem:bad-agents-number}} \\
	&
	= \int_0^{(\beta^*)^p} \big( \log(n+1) \big)^{\frac{|p|}{|p|+1}} \alpha^{-\frac{1}{|p|+1}} \dif{\alpha} \\
	&
	= \frac{|p|+1}{|p|} (\beta^*)^{-\frac{|p|^2}{|p|+1}} \big(\log(n+1) \big)^{\frac{|p|}{|p|+1}} \\
	&
	= \frac{|p|+1}{|p|} 2^{{\frac{|p|^2}{|p|+1}}} \big( n \log(n+1) \big)^{\frac{|p|}{2}}
	~.
\end{align*}

Taking $p$-th root on both sides gives:
\[
	\frac{\ALG}{\OPT} ~\ge~ \underbrace{\bigg( \frac{|p|}{|p|+1} \bigg)^{\frac{1}{|p|}} 2^{-{\frac{|p|}{|p|+1}}}}_{\text{$\Omega(1)$ for $p \le -1$}} ~\cdot~ {\Big(\sqrt{n \log(n+1)}\Big)}^{-1}
	~.
\]

\section[Hardness for the Nashian and Positive Regimes]{Hardness for the Nashian and Positive Regimes: $p \ge -\nicefrac{1}{\log n}$}

Recall that the online primal dual algorithm by \citet{DevanurJ:STOC:2012} is $\nicefrac{1}{p}$\,-competitive for $\nicefrac{1}{\log n} \le p \le 1$ (\Cref{thm:p>0-1/p}), and Nashian Greedy is $O(\log n)$-competitive for $- \nicefrac{1}{\log n} \le p \le \nicefrac{1}{\log n}$ (\Cref{cor:nashian}).
This section will complement these competitive ratios with hardness results that match them up to a lower-order term.
Formally, we will show that:

\begin{theorem}
    \label{thm:hardness-positive}
    For any $p \le 1$ and any online algorithm for online $p$-mean welfare maximization, the competitive ratio is no smaller than:
    \[
    	\begin{cases}
    	    \Omega \big( \frac{1}{p} \big) & \mbox{if $p \ge \frac{\log\log n}{\log n}$;} \\[1ex]
            \Omega \big( \frac{\log n}{\log\log n} \big) & \mbox{otherwise,}
    	\end{cases}
    \]
    even when the agents have binary valuations (and unit monopolist utilities).
\end{theorem}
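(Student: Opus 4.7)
The plan is to construct hard instances by adapting the classical upper triangular construction from online matching, supplemented with ``filler'' items at the very end to enforce the unit monopolist utility assumption. I will parameterize the construction by an integer $K$ depending on $p$: set $K = \Theta(1/p)$ in the positive regime $p \ge \log\log n / \log n$, and $K = \Theta(\log n / \log\log n)$ in the Nashian regime. Concretely, I would organize the $n$ agents into a chain of geometrically shrinking active sets $B_1 \supsetneq B_2 \supsetneq \cdots \supsetneq B_K$ with $|B_k| = n / c^{k-1}$, where the decay factor is $c = 2$ in the positive regime and $c = \Theta(\log n)$ in the Nashian regime (so that $c^K \le n$). The core instance runs in $K$ phases; in phase $k$, a batch of items with binary value arrives, each valued by the agents in $B_k$ and zero by the others. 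After the $K$ phases, filler items are appended, each valued only by a single agent, to bring every agent's monopolist utility to exactly $1$ while keeping all valuations binary.

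For the lower bound, I would apply Yao's minimax principle by randomizing the assignment of the $n$ agents into the chain, i.e., randomizing which agents belong to each level $B_k \setminus B_{k+1}$. Against a random chain, any deterministic online algorithm cannot distinguish during phase $k$ which of the $|B_k|$ active agents will ``retire'' in $B_k \setminus B_{k+1}$ versus continue into $B_{k+1}$. A symmetry argument shows that in expectation the algorithm treats the active agents uniformly, allocating a $1/|B_k|$ fraction of each phase-$k$ item to each active agent; only a $1/c$ fraction of the phase's value therefore reaches the continuing agents. The offline optimum, knowing the chain, can concentrate each phase-$k$ item on agents in $B_k \setminus B_{k+1}$, obtaining a nearly balanced utility profile across all levels. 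Translating this per-phase concentration gap into the $p$-mean welfare yields an $\Omega(K)$ multiplicative ratio.

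The main obstacle I anticipate is making the per-phase-to-total compounding argument tight for the $p$-mean welfare, where the concavity of $u \mapsto u^p$ interacts with the per-level utility distribution. For the positive regime, the compounding gives $\Omega(K) = \Omega(1/p)$ once we take $c = 2$ and $K = 1/p$. In the Nashian regime, we instead need $c \approx \log n$ to fit $K \approx \log n/\log\log n$ phases into an $n$-agent instance, which explains the $\log\log n$ factor loss and the transition at $p \sim \log\log n/\log n$. A further delicate point is the bookkeeping of the filler items, which must bring every agent to unit monopolist while contributing only a lower-order term to the online algorithm's welfare; a careful calibration of the filler values relative to the core-phase values is required to ensure that the filler phase does not trivialize the hardness of the core gadget.
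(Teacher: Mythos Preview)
Your construction is missing the structural idea that actually drives the paper's lower bound, and as stated it does not produce the claimed $\Omega(K)$ gap. In the paper, the active agents in each round are partitioned into subsets of size $M$ (where $M=\Theta(1/p)$ or $M=\Theta(\log n/\log\log n)$ is the target ratio), and each subset receives its \emph{own} item of supply $v_\ell$. Because every item is valued by only $M$ agents, the offline optimum can give the whole item to a single agent, achieving utility $v_\ell$ while spending only $v_\ell$ of monopolist budget; the adversary then adaptively designates the \emph{lowest-utility} agent in each subset as ``bad'', guaranteeing it got at most a $1/M$ share. The $\Omega(M)$ gap thus comes directly from the subset-size parameter, while the number of rounds is fixed at $L=\Theta(\log n)$ independently of $p$. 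In your construction, by contrast, each phase-$k$ item is valued by all $|B_k|$ active agents, so the optimum can give each retiree at most $s_k/|B_k\setminus B_{k+1}|=O(s_k/|B_k|)$, which is only a $c/(c-1)$ factor above the uniform share. This per-phase gap does not compound: with $c=2$ and geometric supplies $s_k=2^{-k}$, a direct calculation shows the surviving agents in $B_K$ actually receive \emph{more} from the online algorithm than from your proposed optimum, and for early retirees the filler items dominate both allocations so the ratio is near~$1$.

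Two further points. First, Yao's principle is unnecessary here: since items are divisible, any randomized algorithm is equivalent to the deterministic one that allocates expected shares, so the paper works directly with an adaptive adversary against a deterministic algorithm---and this adaptivity (picking the minimum-utility agent in each subset after seeing the allocation) is exactly what manufactures the $1/M$ gap. Second, the makeup stage must be asymmetric: in the paper the surviving ``good'' agents each get a private makeup item, whereas the ``bad'' agents all \emph{share} common makeup items, so the online algorithm cannot use that stage to repair their utilities. Your per-agent filler items help the algorithm just as much as the optimum, which is a second reason the gap collapses.
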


This asymptotic lower bound holds for a sufficiently large number of agents.
It implicitly covers three cases. 
If $p$ is a positive constant independent of the number of agents $n$, it characterizes how the competitive ratio changes as $p$ tends to $0$.
If $p(n)$ is a function of the number of agents $n$ that converges to zero as $n$ goes to infinity, e.g., $p(n) = \nicefrac{1}{\log n}$, the theorem asserts that for the competitive ratio is no smaller than $\Omega(\nicefrac{1}{p(n)})$ if $p(n)$ converges to zero slower than function $\nicefrac{\log\log n}{\log n}$, or no smaller than $\Omega(\nicefrac{\log n}{\log\log n})$ otherwise.
The latter bound also applies when $p$ is a negative constant or a negative function $p(n)$ that converges to zero.

Since the items are divisible, we may without loss of generality focus on deterministic online algorithms.
Given any randomized online algorithm, we may instead consider a corresponding deterministic algorithm that allocates each item $i \in I$ to each agent $a \in A$ by the expected amount allocated by the randomized algorithm.

\subsection{Hard Instance}

It suffices to consider $p \le \nicefrac{1}{16}$ and a sufficiently large $n$ to prove the asymptotic bounds in \Cref{thm:hardness-positive}. 
The hard instance depends on two parameters:
\[
	M =
        \begin{cases}
        \frac{1}{4p} & \mbox{if $p \geq \frac{\log\log n}{\log n}$;}\\[1ex]
        \frac{\log n}{4 \log \log n} & \mbox{otherwise.}
        \end{cases}
	\quad ,\quad
	L = \frac{\log n}{2} \ge 2M \log \log n
	~.
\]
By $p \le \nicefrac{1}{16}$ and for a sufficiently large $n$, we have $M \ge 4$.

The items arrive in two stages: an \emph{Upper Triangular Stage} and a \emph{Makeup Stage}.
See \Cref{fig:hardness-nashian} for an illustration of this hard instance.

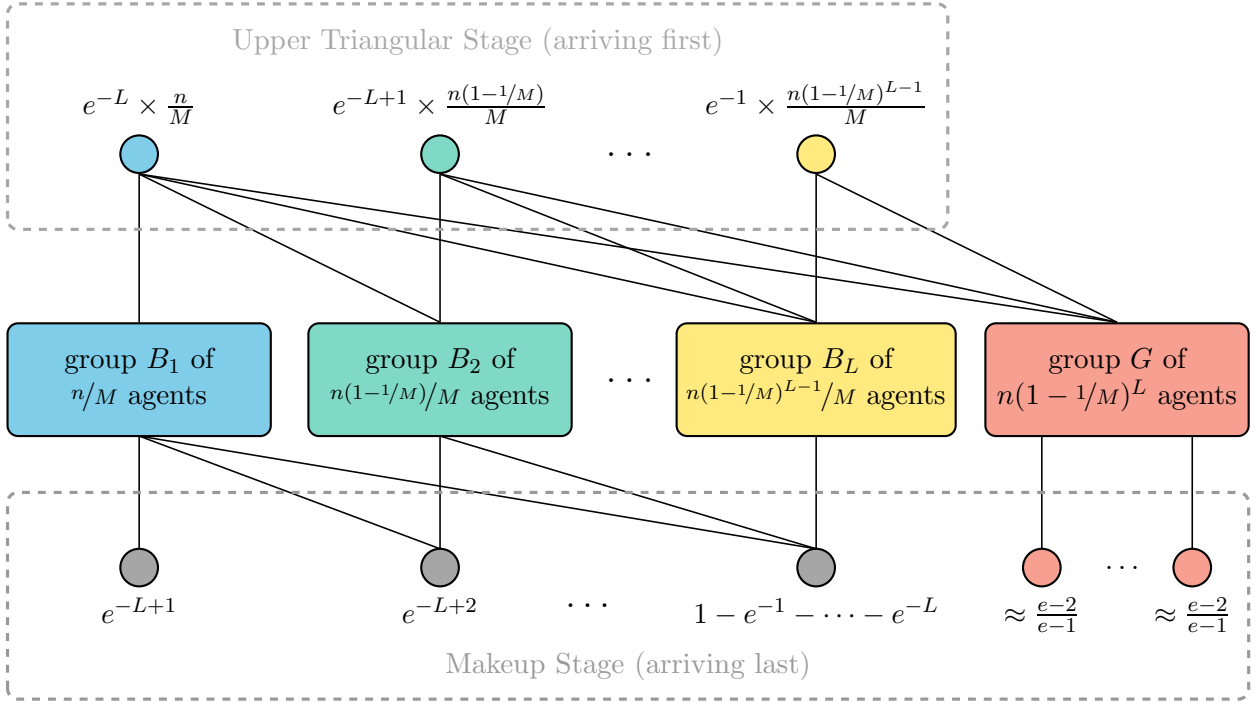
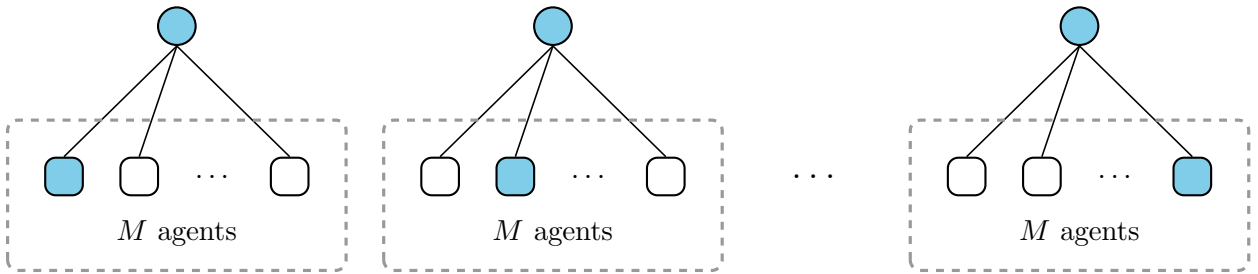
\begin{figure}[htp]
\centering
\begin{subfigure}{\textwidth}
\centering
\begin{tikzpicture}[
		group/.style = {
			draw=black,
			thick,
			rounded corners,
			minimum width=3.5cm,
			minimum height=1.5cm,
			align=center,
		},
		item/.style = {
			minimum size=.5cm,
			circle,
			draw=black,
			thick,
		},
		like/.style = {
			draw=black,
			semithick,
		}
	]
	\node[group,fill=hkublue!50] (B1) at (0,0) {group $B_1$ of\\ $\nicefrac{n}{M}$ agents};
	\node[group,fill=hkugreen!50] (B2) at (4,0) {group $B_2$ of\\ $\nicefrac{n (1-\nicefrac{1}{M})}{M}$ agents};
	\node at (6.57,0) {\LARGE\dots};
	\node[group,fill=hkuyellow!50] (Bk) at (9,0) {group $B_{L}$ of\\ $\nicefrac{n (1-\nicefrac{1}{M})^{L-1}}{M}$ agents};	
	\node[group,fill=hkured!50] (G) at (13,0) {group $G$ of\\ $n(1-\nicefrac{1}{M})^L$ agents};
	\node[item,label=above:{$e^{-L} \times \frac{n}{M}$},fill=hkublue!50] (T1) at (0,3) {};
	\node[item,label=above:{$e^{-L+1} \times \frac{n(1-\nicefrac{1}{M})}{M}$},fill=hkugreen!50] (T2) at (4,3) {};

    \node at (6.57,3) {\LARGE\dots};
	\node[item,label=above:{$e^{-1} \times\frac{n(1-\nicefrac{1}{M})^{L-1}}{M}$},fill=hkuyellow!50] (Tk) at (9,3) {};
	\draw[like] (T1.south)--(B1.north);
	\draw[like] (T1.south)--(B2.north);
	\draw[like] (T1.south)--(Bk.north);
	\draw[like] (T1.south)--(G.north);	
	\draw[like] (T2.south)--(B2.north);
	\draw[like] (T2.south)--(Bk.north);
	\draw[like] (T2.south)--(G.north);
	\draw[like] (Tk.south)--(Bk.north);
	\draw[like] (Tk.south)--(G.north);
	\draw[very thick,draw=gray!70,dashed,rounded corners] (-1.75,2) rectangle (10.75,5);
	\node[align=center] at (4.5,4.5) {\textcolor{gray!70}{Upper Triangular Stage (arriving first)}};
	\node[item,label=below:{$\approx \frac{e-2}{e-1}$},fill=hkured!50] at (12,-2.5) {};
	\node at (13.1,-2.5) {\dots};
	\node[item,label=below:{$\approx \frac{e-2}{e-1}$},fill=hkured!50] at (14,-2.5) {};
	\draw[like] (12,-2.25) -- (12,-0.75);
	\draw[like] (14,-2.25) -- (14,-0.75);
	\node[item,label=below:{$e^{-L+1}$},fill=gray!70] at (0,-2.5) {};
	\draw[like] (0,-2.25) -- (0,-.75);
	\node[item,label=below:{$e^{-L+2}$},fill=gray!70] at (4,-2.5) {};
	\draw[like] (4,-2.25) -- (0,-.75);
	\draw[like] (4,-2.25) -- (4,-.75);
	\node at (6,-3) {\Large\dots};
	\node[item,label=below:{$1-e^{-1}-\dots-e^{-L}$},fill=gray!70] at (9,-2.5) {};
	\draw[like] (9,-2.25) -- (0,-.75);
	\draw[like] (9,-2.25) -- (4,-.75);
	\draw[like] (9,-2.25) -- (9,-.75);
	\draw[very thick,draw=gray!80,dashed,rounded corners] (-1.75,-4.25) rectangle (14.75,-1.5);
	\node at (6.5,-3.8) {\textcolor{gray!80}{Makeup Stage (arriving last)}};
\end{tikzpicture}
\caption{%
	Structure of the hard instance.
	Rounded rectangles represent groups of agents.
	Circles represent items, labeled by their supplies.
	Having an edge between a group of agents and an item means that the agents have value $1$ for the item;
    the parallel edges between the good agents in group $G$ and the corresponding items in the Makeup Stage indicate having a separate item for every agent in the group.
	The items in the Upper Triangular Stage arrive from left to right; each item in the picture corresponds to multiple items in the gadget illustrated in part (b).
	The items' arrival order in the Makeup Stage does not affect the argument.
	The groups of agents and items are colored by the offline optimal allocation;
	some items are colored gray because their allocations do not affect the argument.
}
\end{subfigure}

\vspace{20pt}

\begin{subfigure}{\textwidth}
\centering
\begin{tikzpicture}[
		agent/.style = {
			draw=black,
			thick,
			rounded corners,
			minimum size=.5cm,
			align=center,
		},
		item/.style = {
			minimum size=.5cm,
			circle,
			draw=black,
			thick,
		},
		like/.style = {
			draw=black,
			semithick,
		}
	]
	\node[agent,fill=hkublue!50] (A11) at (0,0) {};
	\node[agent] (A12) at (1,0) {};
	\node at (2,0) {\dots};
	\node[agent] (A13) at (3,0) {};
	\node[agent] (A21) at (5,0) {};
	\node[agent,fill=hkublue!50] (A22) at (6,0) {};
		\node at (7,0) {\dots};
	\node[agent] (A23) at (8,0) {};	
	\node at (10,0) {\Large \dots};
	\node[agent] (A31) at (12,0) {};
	\node[agent] (A32) at (13,0) {};
	\node at (14,0) {\dots};
	\node[agent,fill=hkublue!50] (A33) at (15,0) {};
	\node[item,fill=hkublue!50] (I1) at (1.5,2) {};
	\draw[like] (I1.south) -- (A11.north);
	\draw[like] (I1.south) -- (A12.north);
	\draw[like] (I1.south) -- (A13.north);
	\draw[very thick,draw=gray!80,dashed,rounded corners] (-.75,-1.25) rectangle (3.75,.75);
	\node at (1.5,-.75) {$M$ agents};
	\node[item,fill=hkublue!50] (I2) at (6.5,2) {};
	\draw[like] (I2.south) -- (A21.north);
	\draw[like] (I2.south) -- (A22.north);
	\draw[like] (I2.south) -- (A23.north);	
	\draw[very thick,draw=gray!80,dashed,rounded corners] (4.25,-1.25) rectangle (8.75,.75);
	\node at (6.5,-.75) {$M$ agents};	
	\node[item,fill=hkublue!50] (I3) at (13.5,2) {};
	\draw[like] (I3.south) -- (A31.north);
	\draw[like] (I3.south) -- (A32.north);
	\draw[like] (I3.south) -- (A33.north);		
	\draw[very thick,draw=gray!80,dashed,rounded corners] (11.25,-1.25) rectangle (15.75,.75);
	\node at (13.5,-.75) {$M$ agents};	
\end{tikzpicture}
\caption{%
	Gadget for a single round in the Upper Triangular Stage.
	The (currently) ungrouped agents are partitioned into subsets of size $M$, represented by dashed rounded rectangles.
	For each subset in the partition, there is an item of value $1$ only to the agents in the subset; 
	the item's supply is $v_\ell = e^{-L-1+\ell}$ in round $\ell$.
	The colored agent in each subset is an agent with the lowest utility after the  allocation of this item.
	These colored agents will form the group of bad agents from this round.
}
\end{subfigure}
\caption{%
	Illustration of the hard instance for Nashian and positive regimes, i.e., when $p \ge -\nicefrac{1}{\log n}$.
	}
\label{fig:hardness-nashian}
\end{figure}

\paragraph{Upper Triangular Stage.}
This stage proceeds in $L$ rounds, during which we will adaptively partition the agents into $L+1$ groups.
After each round, we will determine one group of agents in the partition.
We will refer to the agents in these $L$ groups as the \emph{bad agents} because their utilities for the algorithm's allocation will be at most an $O(\nicefrac{1}{M})$ fraction of their utilities for the optimal allocation.
After these $L$ rounds, we will put all remaining agents into the last group, and call them the \emph{good agents} because their utilities for the algorithm's allocation will be greater than or equal to their utilities for the optimal allocation.
However, the fraction of good agents will be small.

We now describe the Upper Triangular Stage in detail.
In the first round, morally speaking, we would like to have one item with supply:
\[
	v_1 \cdot \frac{n}{M}
	\quad
	\text{where}
	\quad
	v_1 = e^{-L} = \frac{1}{\sqrt{n}}
	~,
\]
for which all agents have value $1$.
However, this would violate the assumption of unit monopolist utilities. %
Instead, we use the following gadget from the hard instance by \citet{BanerjeeGGJ:SODA:2022}.
We partition the agents into subsets of $M$ agents arbitrarily.
For each subset $S$ in the partition, let there be an item with supply $v_1$ that  is of value $1$ to the agents in $S$, and $0$ to the others.
Note that the items' total value remains $v_1 \cdot \nicefrac{n}{M}$, as long as we allocate every item to an agent with value $1$ for it.
Meanwhile, the contribution to each agent's monopolist utility reduces to $v_1$.

Since the instance so far is symmetric for all agents, intuitively the online algorithm should simply distribute every item equally to the agents who have value $1$ for it.
In that case, the agents' utilities would be $\nicefrac{v_1}{M}$ and we may let $B_1$ be an arbitrary subset of $\nicefrac{n}{M}$ agents.
In general, we let $B_1$ be comprised of an agent with the \emph{lowest} utility from each subset in the partition.

We repeat this process $L$ times to adaptively define $L$ groups of bad agents $B_1, B_2, \dots, B_L$.
At the beginning of round $1 \le \ell \le L$, there are:
\[
	n_\ell ~\defeq~ n \Big(1 - \frac{1}{M} \Big)^{\ell-1}
\]
agents not in groups $B_1, \dots, B_{\ell - 1}$.
Morally speaking, we would like to have one item with supply:
\[
	v_\ell \cdot \frac{n_\ell}{M}
	\quad
	\text{where}
	\quad
	v_\ell ~\defeq~ e^{-L-1+\ell}
	~,
\]
that is of value $1$ to these $n_\ell$ ungrouped agents, and $0$ to the others.
To reduce the contribution to the agents' monopolist utilities, we partition the ungrouped agents into subsets of $M$ agents arbitrarily.
For each subset $S$ in the partition, let there be an item with supply $v_\ell$ that is of value $1$ to the agents in $S$, and $0$ to the others.
After the allocation, we let $B_\ell$ be comprised of an agent with the lowest utility from each subset in the partition.

After all $L$ rounds of the Upper Triangular Stage, there are still:
\[
	n \Big(1 - \frac{1}{M}\Big)^L
\]
ungrouped agents.
Let $G$ denote this set of agents, which we will refer to as the good agents.

\paragraph{Makeup Stage.}
Observe that the agents' monopolist utilities are still smaller than $1$ after the Upper Triangular Stage. 
Next, we let additional items arrive in the Makeup Stage to achieve unit monopolist utilities.
For the good agents, the algorithm's allocation yields higher utilities than the offline benchmark.
Hence, we will help the offline benchmark by letting each good agent have its own item in the Makeup Stage, so that its utility for the offline benchmark is at least $\Omega(1)$.
For the bad agents, by contrast, we will let them compete for the same items, in order to limit the online algorithm's ability to catch up with the offline benchmark in terms of the bad agents' utilities. 

We now describe the Makeup Stage in detail.
For each good agent $a \in G$, the Makeup Stage has an item with supply: %
\begin{equation}
	\label{eqn:nashian-hardness-good-agent}
	1 - \sum_{\ell=1}^L v_\ell ~=~ 1 - \frac{1}{e} - \frac{1}{e^2} - \dots - \frac{1}{e^L} ~\ge~ \frac{e-2}{e-1}
\end{equation}
and is of value $1$ only to agent $a$.

Finally, we consider the bad agents.
For each group $B_\ell$, the agents therein have monopolist utilities $\sum_{i=1}^\ell v_i$ after the Upper Triangular Stage.
Let there be an item with supply $1 - \sum_{\ell=1}^L v_\ell$ that is of value $1$ to all bad agents, and $0$ to the good agents.
Further, for any $1 < \ell \le L$, let there be an item with supply $v_\ell$ that is of value $1$ to the agents in $B_1$ to $B_{\ell-1}$, and $0$ to the others.
By the construction of these items, all bad agents have unit monopolist utilities at the end.

\subsection{Properties of the Hard Instance}

First, we describe an offline allocation for the above hard instance, and lower bound the agents' utilities for it.
These bounds for the agents' utilities will provide a lower bound for the optimal $p$-mean welfare in our proof of \Cref{thm:hardness-positive}.

\begin{lemma}
	\label{lem:hardness-nashian-optimal}
	There is an allocation such that for any $1 \le \ell \le L$, the agents in group $B_\ell$ have utilities at least $v_\ell$, and the agents in group $G$ have utilities at least $\nicefrac{(e-2)}{(e-1)}$.
\end{lemma}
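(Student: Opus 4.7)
The plan is to exhibit a single explicit offline allocation and then read off the agents' utilities from the construction.

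First I would describe the allocation stage by stage. In round $\ell$ of the Upper Triangular Stage, each ungrouped agent belongs to exactly one of the $M$-agent subsets, and the item of supply $v_\ell$ corresponding to that subset is valued $1$ by precisely its $M$ agents. The key observation is that $B_\ell$ is built by selecting exactly one agent out of each such subset. So I would allocate each round-$\ell$ item in its entirety to the unique agent from its subset that lies in $B_\ell$. Since each agent in $B_\ell$ is associated with exactly one subset in round $\ell$ and does not participate in any subsequent round (because from round $\ell+1$ on we only partition agents outside $B_1 \cup \dots \cup B_\ell$), every $a \in B_\ell$ receives one full round-$\ell$ item of value $1$ and supply $v_\ell$, contributing utility exactly $v_\ell$.

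For the Makeup Stage I would use only the dedicated good-agent items: for each $a \in G$, allocate its private item (of value $1$ only to $a$, and supply at least $\tfrac{e-2}{e-1}$ by \Cref{eqn:nashian-hardness-good-agent}) entirely to $a$. The remaining Makeup items and the unallocated Upper Triangular items (those pointing to agents in the same subset but not designated into $B_\ell$) can be disposed of arbitrarily without affecting the bounds claimed. Feasibility is immediate: every item is used at most once, and no agent receives more of any item than its supply, because each dedicated good-agent item is allocated to a distinct agent and each round-$\ell$ item is allocated to a distinct agent of $B_\ell$.

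Putting the two parts together gives: each $a \in B_\ell$ has utility at least $v_\ell$ (possibly more from Makeup items valued by bad agents, but we do not need this), and each $a \in G$ has utility at least $\tfrac{e-2}{e-1}$, establishing the lemma. There is no genuine obstacle here; the only thing worth being careful about is the bookkeeping that each bad agent appears in the partition of exactly one round (the round in which it is placed into some $B_\ell$), so that the allocation above is well-defined and the claimed per-agent utilities are realized simultaneously.
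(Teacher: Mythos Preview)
Your proposal is correct and matches the paper's proof essentially verbatim: allocate each round-$\ell$ item to the unique $B_\ell$ agent in its subset, and give each good agent its dedicated Makeup item. One small remark: there are in fact no ``unallocated Upper Triangular items''---every subset in round $\ell$ contributes exactly one agent to $B_\ell$, so every round-$\ell$ item is assigned---but this overcaution does not affect the argument.
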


\begin{proof}
	In the Makeup Stage, we let each good agent get the unique item for which it has value $1$.
	Then, the utility bound for the good agents follows by \Cref{eqn:nashian-hardness-good-agent}.
	
	Next, we consider the bad agents' utilities. 
	We allocate every item in round $1 \le \ell \le L$ of the Upper Triangular Stage to the unique agent in $B_\ell$ who has value $1$ for it.
	Then, every agent in $B_\ell$ gets utility $v_\ell$.
\end{proof}

Further, we characterize the limitations of the allocation by any deterministic online algorithm.
For any $1 \le \ell \le L$, let $\bar{\utility}_\ell$ denote the average utility of the agents in group $B_\ell$.
We show that these bad agents on average only receive an $O(\nicefrac{1}{M})$ fraction of what they get in \Cref{lem:hardness-nashian-optimal}.

\begin{lemma}
	\label{lem:hardness-nashian-algorithm}
	For any $1 \le \ell \le L$, we have $\bar{\utility}_\ell \le \nicefrac{3 v_\ell}{M}$.
\end{lemma}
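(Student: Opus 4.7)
My plan is to split $\bar{\utility}_\ell$ into the average utility that $B_\ell$ agents accumulate during the Upper Triangular Stage and the average utility they can possibly receive in the Makeup Stage, and bound these by $2v_\ell/M$ and $v_\ell/M$ respectively. Adding gives $3v_\ell/M$. The two bounds are additive because agents in $B_\ell$ are grouped at the end of round $\ell$ and hence receive no utility from Upper Triangular rounds $\ell+1, \dots, L$.

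For the Upper Triangular part, the defining property of $B_\ell$ is that each $a \in B_\ell$ has the minimum utility (after round $\ell$) within its size-$M$ subset of the round-$\ell$ partition. Hence its utility through round $\ell$ is at most the average utility in that subset, and summing over the $n_\ell/M$ subsets gives
\[
\textstyle\sum_{a \in B_\ell} U_a^{\text{UT}} \;\le\; \tfrac{1}{M} \sum_{a \in A_\ell} U_a(\ell),
\]
where $A_\ell$ is the set of $n_\ell$ agents still ungrouped at the start of round $\ell$. For each round $\ell' \le \ell$ the total utility distributed is at most $v_{\ell'} \cdot (n_{\ell'}/M)$ (one unit-value item of supply $v_{\ell'}$ per subset), and restricting to agents currently in $A_\ell$ only makes this smaller. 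Summing over $\ell' = 1, \dots, \ell$ gives a geometric series in $\ell'$ with common ratio $v_{\ell'+1}n_{\ell'+1}/(v_{\ell'}n_{\ell'}) = e(1 - 1/M) \ge 3e/4 > 2$ (using $M \ge 4$), so it is dominated by the last term and is at most $2 v_\ell n_\ell / M$. Dividing by $|B_\ell| = n_\ell/M$ yields the $2v_\ell/M$ bound.

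For the Makeup part, I use a pure supply argument. The Makeup Stage items valued by agents in $B_\ell$ are precisely the ``big'' item (supply $1 - \sum_{\ell''=1}^L v_{\ell''}$) and the items of supply $v_{\ell''}$ for $\ell'' > \ell$ (since the item of supply $v_{\ell''}$ is valued only by $B_1 \cup \cdots \cup B_{\ell''-1}$). No matter how the algorithm allocates, the total utility $B_\ell$ collectively extracts in the Makeup Stage is at most the combined supply $1 - \sum_{\ell''=1}^\ell v_{\ell''} \le 1$. Dividing by $|B_\ell| = n_\ell/M$, the average Makeup utility is at most $M/n_\ell$, which is at most $v_\ell/M$ as soon as $M^2 \le v_\ell n_\ell$. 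Since $v_\ell n_\ell$ is increasing in $\ell$ (same ratio $e(1-1/M) > 1$), the worst case is $\ell = 1$ with $v_1 n_1 = \sqrt{n}$; for either regime in the construction $M \le \log n/(4\log\log n) \ll n^{1/4}$, so the inequality holds for sufficiently large $n$.

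The only real obstacle is the bookkeeping: one must carefully separate the contributions of the two stages, verify that $A_\ell$ agents in earlier rounds are a subset of that round's ungrouped agents (so the per-round supply bound still applies), and confirm that the geometric-series ratio truly exceeds $2$ under the assumption $M \ge 4$ established earlier. Once these are in place, both halves are immediate and combine to $\bar{\utility}_\ell \le 3 v_\ell / M$.
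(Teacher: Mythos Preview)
Your proposal is correct and follows essentially the same approach as the paper's proof: decompose $\bar{\utility}_\ell$ into the Upper Triangular contribution (bounded by $2v_\ell/M$ via the geometric sum $\sum_{\ell'\le\ell} v_{\ell'} n_{\ell'}/M \le 2v_\ell n_\ell/M$, using that the ratio $e(1-1/M)>2$ when $M\ge 4$) and the Makeup contribution (bounded by $M/n_\ell \le v_\ell/M$ via the supply argument and $M^2 \le v_1 n_1 = \sqrt{n}$). The only cosmetic difference is that you invoke the per-subset minimum property of $B_\ell$ directly, whereas the paper phrases it as ``average over $B_\ell$ $\le$ average over all $n_\ell$ ungrouped agents''; both yield the same inequality.
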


\begin{proof}
	For the agents in group $B_\ell$, it suffices to consider the items that arrived in the first $\ell$ rounds of the Upper Triangular Stage and the items in the Makeup Stage, because they have zero utilities for the remaining items.
	
	The total value of the items in the first $\ell$ rounds of the Upper Triangular Stage equals:
	\begin{align*}
		\sum_{i=1}^\ell \frac{n_i}{M} \cdot v_i
		&
		~=~ \frac{n_\ell\, v_\ell}{M} \,\cdot\, \sum_{i=1}^\ell \Big( \frac{M}{e(M-1)} \Big)^{\ell - i} \\
		&
		~\le~ \frac{n_\ell\, v_\ell}{M} \,\cdot\, \sum_{i=1}^\ell 2^{-\ell+i} 
		\tag{$M \ge 4$} \\
		&
		~\le~ \frac{2 \, n_\ell \, v_\ell}{M}
		~.
	\end{align*}
	
	By definition, $B_\ell$ is comprised of the $\nicefrac{n_\ell}{M}$ agents with the lowest utilities after round $\ell$, among the $n_\ell$ agents who are not in groups $1$ to $\ell - 1$.
	Therefore, their average utility from the items in the first $\ell$ rounds of the Upper Triangular Stage is at most:
	\[
	 	\frac{1}{n_\ell} \cdot \frac{2 \, n_\ell \, v_\ell}{M} = \frac{2\, v_\ell}{M}
		~.
	\]
	
	The items in the Makeup Stage are of total value less than $1$ to the $\nicefrac{n_\ell}{M}$ agents in $B_\ell$.
	Hence, their average utility from it is at most:
	\[
		\frac{M}{n_\ell}
		\le \frac{n_\ell v_\ell}{M}
		~,
	\]
	where the inequality follows by $v_\ell \ge v_1 = \nicefrac{1}{\sqrt{n}}$, 
    $M\geq 4$, and for sufficiently large $n$:
	\[
		n_\ell \ge n \Big(1-\frac{1}{M}\Big)^L \ge n \Big(1-\frac{1}{4}\Big)^L > \sqrt{n}
		~.
	\]

	Combining these two parts proves the lemma.
\end{proof}

\subsection{Proof of \Cref{thm:hardness-positive}}

By \Cref{lem:hardness-nashian-optimal}, we have:
\begin{align*}
	\OPT
	&
	~\ge~ 
	\left( \Big(1-\frac{1}{M}\Big)^L \cdot \Big(\frac{e-2}{e-1}\Big)^p ~+~ \sum_{\ell=1}^L \frac{1}{M} \Big(1 - \frac{1}{M}\Big)^{\ell-1} \cdot v_\ell^p \right)^{\frac{1}{p}} \\
	&
	~\ge~
	\frac{e-2}{e-1}
	\left( \Big(1-\frac{1}{M}\Big)^L ~+~ \sum_{\ell=1}^L \frac{1}{M} \Big(1 - \frac{1}{M}\Big)^{\ell-1} \cdot v_\ell^p \right)^{\frac{1}{p}}
	~.
\end{align*}

By the concavity of $p$-mean welfare in the agents' utilities and \Cref{lem:hardness-nashian-algorithm}, we have:
\begin{align*}
	\ALG
	&
	~\le~
	\left( \Big(1-\frac{1}{M}\Big)^L \cdot 1^p ~+~ \sum_{\ell=1}^L \frac{1}{M} \Big(1 - \frac{1}{M}\Big)^{\ell-1} \cdot \bar{\utility}_\ell^p \right)^{\frac{1}{p}} \\
	&
	~\le~
	\left( \Big(1-\frac{1}{M}\Big)^L ~+~ \sum_{\ell=1}^L \frac{1}{M} \Big(1 - \frac{1}{M}\Big)^{\ell-1} \cdot \Big( \frac{3v_\ell}{M} \Big)^p \right)^{\frac{1}{p}}
	~.
\end{align*}

To simplify notations, let:
\[
	A = \Big( 1 - \frac{1}{M} \Big)^L
	\quad,\quad
	B(p) = \sum_{\ell=1}^L \frac{1}{M} \Big(1 - \frac{1}{M}\Big)^{\ell-1} v_\ell^p
	~.
\]

We get that:
\[
	\frac{\ALG}{\OPT} \le \frac{e-1}{e-2}  
	\left( 
		\frac{ A + B(p) \cdot \big( \frac{3}{M} \big)^p}{A + B(p)} 
	\right)^{\frac{1}{p}}
\]

Next, we bound the ratio between $A$ and $B(p)$.
Recall that $v_\ell = e^{-L-1+\ell}$. 
We have:
\[
	\frac{B(p)}{A} = \frac{1}{M} \sum_{\ell=1}^L \left(\Big(1-\frac{1}{M}\Big) e^p\right)^{-(L+1-\ell)}
	~.
\]

By $1-x \le e^{-x}$ and $p \le \nicefrac{1}{2M}$, we have:
\[
	\Big(1 - \frac{1}{M}\Big) e^p 
	~\le~ 
	e^{-\frac{1}{M}+p}
	~\le~
	e^{-\frac{1}{2M}} 
	~.
\]

Hence, we get that:
\[
	\frac{B(p)}{A}
	~\ge~ 
	\frac{1}{M} \sum_{\ell=1}^L e^{\frac{L+1-\ell}{2M}}
	~=~ 
	\frac{e^\frac{L}{2M}-1}{M(1-e^{-\frac{1}{2M}})} 
	~.
\]

On one hand, we have:
\[
	1-e^{-\frac{1}{2M}} \le \frac{1}{2M}
	~.
\]

On the other hand, recall that $L \ge 2M \log\log n$.
We get that:
\[
	e^\frac{L}{2M} \ge \log n
\]

Therefore, for sufficiently large $n$, we conclude that:
\[
	\frac{B(p)}{A} \ge 2 \big( \log n - 1 \big) \ge \log n
	~.
\]

It remains to bound:
\begin{equation}
	\label{eqn:hardness-positive-checkpoint}
	\left( \frac{A}{A+B(p)} + \frac{B(p)}{A+B(p)} \Big(\frac{3}{M}\Big)^p \right)^{\frac{1}{p}}
	\le 
	\left( \frac{1}{1+\log n} + \frac{\log n}{1+\log n} \Big( \frac{3}{M} \Big)^p \right)^{\frac{1}{p}}
	~.
\end{equation}

Consider any $p \ge \nicefrac{\log\log n}{\log n}$.
The right-hand-side of \Cref{eqn:hardness-positive-checkpoint} equals:
\begin{align*}
	\frac{3}{M} \left( 1 + \frac{1}{1+\log n} \Big( \Big( \frac{M}{3} \Big)^p - 1 \Big) \right)^{\frac{1}{p}}
	~.
\end{align*}

Further, we have:
\begin{align*}
	\Big( \frac{M}{3} \Big)^p 
	&
	~=~ e^{p \log \frac{M}{3}} \\
	&
	~\le~ e^{p \log \frac{1}{p}} 
	\tag{$M = \nicefrac{1}{4p}$} \\
	&
	~\le~ 1 + 2p\log \frac{1}{p}
	~.
	\tag{$e^x \le 1+2x$ for $0 \le x \le 1$}
\end{align*}

Hence, \Cref{eqn:hardness-positive-checkpoint} is at most:
\begin{align*}
	\frac{3}{M} \left( 1 + \frac{1}{1+\log n} 2p\log \frac{1}{p} \right)^{\frac{1}{p}}
	&
	\le
	\frac{3}{M} \big( 1 + p \big)^\frac{1}{p} 
	\tag{$p \ge \frac{1}{\log n}$} \\
	&
	\le \frac{3e}{M}
	~.
\end{align*}

For $p < \nicefrac{\log\log n}{\log n}$, $M$ does not depend on $p$.
Hence, the right-hand-side of \Cref{eqn:hardness-positive-checkpoint} can be seen as a weighted $p$-mean of $1$ and $\nicefrac{3}{M}$.
This is increasing in $p$.
Hence, it reduces to the case of $p = \nicefrac{\log\log n}{\log n}$.

\section[Hardness for the Negative Regime]{Hardness for the Negative Regime: $p \le - \nicefrac{1}{\log n}$}

Recall that Nashian Greedy is $n^{|p|}$-competitive for $-o(1) \le p \le -\nicefrac{1}{\log n}$ and $n^{\nicefrac{|p|}{(|p|+1)}}$-competitive for $-1 \le p \le -\Omega(1)$, up to lower-order and logarithmic factors (\Cref{thm:nashian}).
Moreover, Mixed Greedy is $O(\sqrt{n \log n})$-competitive for $-\infty \le p \le -1$ (\Cref{thm:harmonic-to-egalitarian}).
This section complements these algorithmic results with almost matching lower bounds.

\subsection{Family of Hard Instances}

We consider a family of hard instances $I(L,\alpha)$ parameterized by a positive integer $L$ and a real number $0 \le \alpha < |p|$.
The items arrive in two stages: an Upper Triangular Stage and a Makeup Stage.
See \Cref{fig:hardness-negative} for an illustration of this family of hard instances.

\bigskip

\begin{figure}[ht]
\centering
\begin{tikzpicture}[
		group/.style = {
			draw=black,
			thick,
			rounded corners,
			minimum width=3.5cm,
			minimum height=1.5cm,
			align=center,
		},
		item/.style = {
			minimum size=.5cm,
			circle,
			draw=black,
			thick,
			fill=hkured!50,
		},
		like/.style = {
			draw=black,
			semithick,
		}
	]
	\node[group,fill=hkublue!50] (G1) at (0,0) {group $G_1$ of\\ $n-n^{s_1}$ agents};
	\node[group,fill=hkugreen!50] (G2) at (4,0) {group $G_2$ of\\ $n^{s_1}-n^{s_2}$ agents};
	\node at (6.57,0) {\LARGE\dots};
	\node[group,fill=hkuyellow!50] (GL) at (9,0) {group $G_L$ of\\ $n^{s_{L-1}}-n^{s_L}$ agents};	
	\node[group,fill=hkured!50] (B) at (13,0) {group $B$ of\\ $n^{s_L}$ agents};
	\node[item,label=above:{$1-n^{s_1-1}$}] (T1) at (0,3) {};
	\node[item,label=above:{$n^{s_1-1}-n^{s_2-1}$}] (T2) at (4,3) {};
	
    \node at (6.57,3) {\LARGE\dots};
    
    \node[item,label=above:{$n^{s_{L-1}-1}-n^{s_L-1}$}] (Tk-1) at (9,3) {};
	\draw[like] (T1.south)--(G1.north);
	\draw[like] (T1.south)--(G2.north);
	\draw[like] (T1.south)--(GL.north);
	\draw[like] (T1.south)--(B.north);	
	\draw[like] (T2.south)--(G2.north);
	\draw[like] (T2.south)--(GL.north);
	\draw[like] (T2.south)--(B.north);
	\draw[like] (Tk-1.south)--(GL.north);
	\draw[like] (Tk-1.south)--(B.north);
	\draw[very thick,draw=hkured!50,dashed,rounded corners] (-1.75,2) rectangle (10.75,5);
	\node at (4.5,4.5) {\textcolor{hkured!50}{Upper Triangular Stage (arriving first)}};
	\node[item,label=below:{$n^{s_1-1}$},fill=hkublue!50] at (-1,-2) {};
	\draw[like] (-1,-1.75) -- (-1,-.75);	
	\node[item,label=below:{$n^{s_1-1}$},fill=hkublue!50] at (1,-2) {};
	\draw[like] (1,-1.75) -- (1,-.75);	
	\node at (0,-1.75) {\large\dots};
	\node[item,label=below:{$n^{s_2-1}$},fill=hkugreen!50] at (3,-2) {};
	\draw[like] (3,-1.75) -- (3,-.75);	
	\node[item,label=below:{$n^{s_2-1}$},fill=hkugreen!50] at (5,-2) {};
	\draw[like] (5,-1.75) -- (5,-.75);	
	\node at (4,-1.75) {\large\dots};	
	\node[item,label=below:{$n^{s_L-1}$},fill=hkuyellow!50] at (8,-2) {};
	\draw[like] (8,-1.75) -- (8,-.75);	
	\node[item,label=below:{$n^{s_L-1}$},fill=hkuyellow!50] at (10,-2) {};
	\draw[like] (10,-1.75) -- (10,-.75);	
	\node at (9,-1.75) {\large\dots};
	\node[item,label=below:{$n^{s_L-1}$}] (TB) at (13,-2) {};	
	\draw[like] (TB.north)--(B.south);		
	\draw[very thick,draw=gray!80,dashed,rounded corners] (-1.75,-4) rectangle (14.75,-1.25);
	\node at (6.5,-3.5) {\textcolor{gray!80}{Makeup Stage (arriving last)}};
\end{tikzpicture}	
\caption{%
	Illustration of the family of hard instances for the negative regime. 
	Here, $L \ge 1$ is the number of groups of good agents.
	A decreasing sequence $1=s_0>s_1>\dots>s_L\geq 0$ determines the group sizes. 
	Rounded rectangles represent groups of agents, labeled by the groups' indices and sizes.
	Circles represent items, labeled by the items' supplies.
	Having an edge between a group of agents and an item means that the agents have value $1$ for the item;
	the parallel edges between the good agents and the corresponding items in the Makeup Stage indicate having a separate item for every such agent.	
	The items in the Upper Triangular Stage arrive first from left to right;
	then, those in the Makeup Stage arrive by an arbitrary order.
	The groups of agents and items are colored according to the offline optimal allocation.
	For example, the items in the Upper Triangular Stage are colored red because the offline benchmark allocates them to group $B$ which is also colored red.
	}
\label{fig:hardness-negative}
\end{figure}
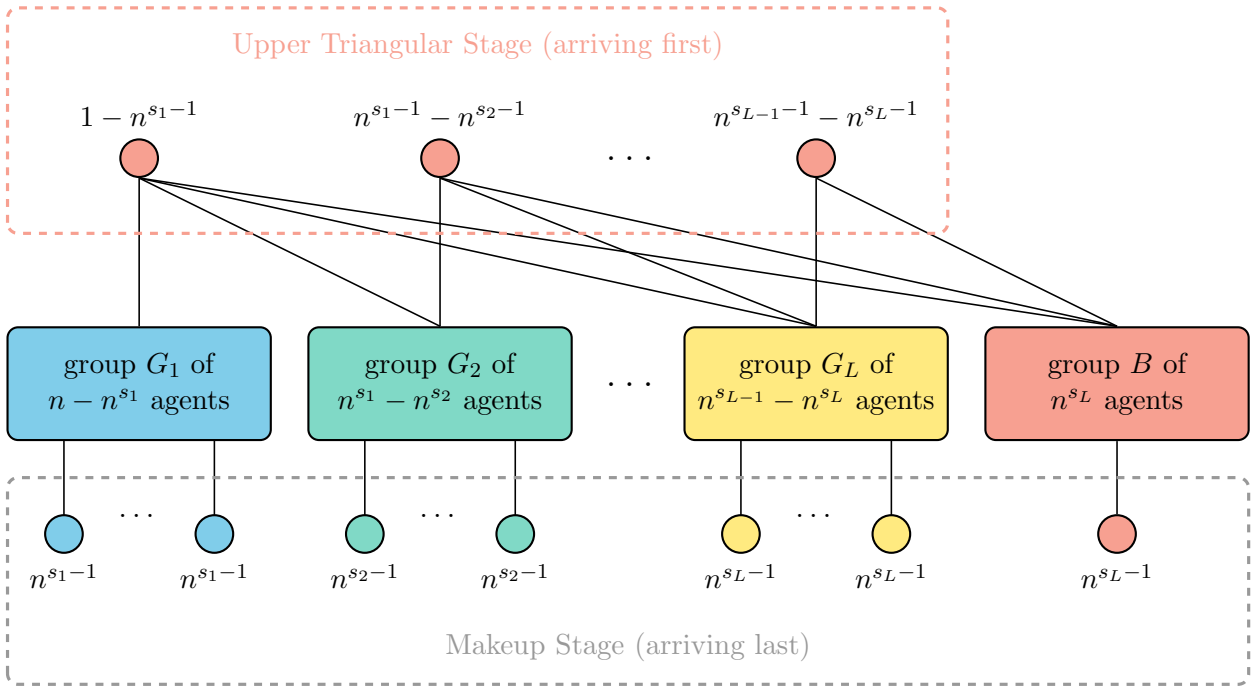

\paragraph{Upper Triangular Stage.}
This stage proceeds in $L$ rounds, during which we will adaptively partition the agents into $L+1$ groups.
After each round, we will determine one group of agents in the partition.
We will refer to the agents in these $L$ groups as the \emph{good agents} because their utilities for the algorithm's allocation might be greater than their utilities for the optimal allocation.
After these $L$ rounds, we will put all remaining agents into the last group, and call them the \emph{bad agents} because their utilities for the algorithm's allocation will be low.

We now describe the Upper Triangular Stage in detail.
Based on parameters $L$ and $\alpha$, we will choose a decreasing sequence:
\[
	1 = s_0 > s_1 > \dots > s_L \geq 0
	~,
\] 
such that the $\ell$-th group of good agents $G_\ell$ has size $n^{s_{\ell-1}} - n^{s_\ell}$.
Correspondingly, the number of bad agents in $B$ equals $n^{s_L}$.
We will first present the structure of the hard instance, deferring the choice of this decreasing sequence to the end of the subsection.

In the first round, let an item arrive with supply $n^{s_0-1} - n^{s_1-1}$ and value $1$ to all agents.
Given the online algorithm's allocation of this item, let $G_1$ be comprised of $n^{s_0} - n^{s_1}$ agents with the \emph{highest} utilities (breaking ties arbitrarily), so that the remaining agents' average utility is at most:
\[
	\frac{1}{n} \cdot \big( n^{s_0-1} - n^{s_1-1} \big) < \frac{1}{n}
	~.
\]

Here, recall that we may consider deterministic online algorithms without loss of generality.

We repeat this process to adaptively define $L$ groups of good agents $G_1, G_2, \dots, G_L$.
In each round $1 \le \ell \le L$, let an item arrive with supply $n^{s_{\ell-1}-1} -n^{s_\ell-1}$ and value $1$ to the ungrouped agents, i.e., those not in $G_1$ to $G_{\ell-1}$.
Given the online algorithm's allocation of this item, let the $G_\ell$ be comprised of the $n^{s_{\ell-1}} - n^{s_\ell}$ ungrouped agents with the highest utilities.

After all $L$ rounds of the Upper Triangular Stage, there are still $n^{s_L}$ ungrouped agents.
Let $B$ denote this set of agents, which we will refer to as the bad agents.

\paragraph{Makeup Stage.}
We next let additional items arrive such that the agents have unit monopolist utilities at the end.
Since the good agents might have high utilities for the algorithm's allocation, we will help the offline benchmark by letting each good agent have its own item in the Makeup Stage, so that the agent's utility for the offline benchmark is at least $\Omega(1)$.
By contrast, the bad agents' utilities for the algorithm's allocation are low.
We will therefore let them compete for the same item in the Makeup Stage, limiting the algorithm's ability to catch up with the offline benchmark in terms of the bad agents' utilities. 

More precisely, for each good agent $a \in G_\ell$, let there be an item with supply $n^{s_\ell-1}$ and value $1$ only to agent $a$.
Let there be another item for the bad agents, where the item has supply $n^{s_L-1}$, and value $1$ only to the bad agents.

\paragraph{Choice of Group Sizes.}
Finally, we define the decreasing sequence $1 = s_0 > s_1 > \dots > s_L \ge 0$, and correspondingly, the group sizes.
Recall that $s_0 = 1$.
We define the rest of the sequence to satisfy a set of linear equations:
\begin{equation}
	\label{eq:lower-bound-neg-recurrence}
	s_{\ell-1} + |p| (1-s_\ell) = s_L(1+|p|) - \alpha
	~,
	\qquad \forall 1 \le \ell \le L~.
\end{equation}

The closed-form solution is:
\begin{equation}
	\label{eq:lower-bound-neg-closed-form}
	s_\ell = 1 - \frac{|p|-\alpha}{2\sum_{i=1}^L |p|^i +1} \sum_{i=L-\ell}^{L-1} |p|^i
	~,
	\qquad 
	\forall 0 \le \ell \le L
	~.
\end{equation}

By this closed-form, the sequence is indeed decreasing when $\alpha<|p|$.

\subsection{Properties of the Hard Instances}

We first show that the bad agents have low utilities for the online algorithm's allocation, because most of the items in the Upper Triangular Stage have been allocated to the good agents.

\begin{lemma}
    \label{lem:hardness-negative}
    The average utility of the agents in $B$ for the algorithm's allocation is less than $\nicefrac{(L+1)}{n}$.
\end{lemma}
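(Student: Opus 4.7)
The plan is to decompose any bad agent's total utility into contributions from (a) the $L$ items arriving during the Upper Triangular Stage and (b) the single Makeup item that serves $B$, then bound each piece. The Makeup part is immediate: the item's supply is $n^{s_L-1}$, so its total contribution to $B$ is at most $n^{s_L-1}$. The real work is bounding the total utility the agents in $B$ accumulate during the Upper Triangular Stage, which I would handle inductively using the fact that $G_\ell$ consists of the \emph{top}-utility ungrouped agents after round $\ell$.

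Concretely, let $V_\ell \defeq n^{s_{\ell-1}-1}-n^{s_\ell-1}$ denote the supply of the round-$\ell$ item, and for each $\ell$ let $D_\ell$ denote the total cumulative utility, from the items in rounds $1,\dots,\ell$, earned by the $n^{s_\ell}$ agents still ungrouped at the end of round $\ell$ (so $D_L$ is exactly the total utility $B$ gets from the Upper Triangular Stage). Since $G_\ell$ is defined as the $n^{s_{\ell-1}}-n^{s_\ell}$ agents of highest cumulative utility among the $n^{s_{\ell-1}}$ ungrouped at the start of round $\ell$, the remaining $n^{s_\ell}$ agents have total cumulative utility at most the proportional share of the aggregate utility held by those $n^{s_{\ell-1}}$ agents. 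That aggregate, in turn, equals $D_{\ell-1}+V_\ell$ (the prior cumulative utility of those same agents, plus the supply of the new item, all of which can only land on ungrouped agents). This yields the clean recurrence
\[
D_\ell \;\le\; \frac{n^{s_\ell}}{n^{s_{\ell-1}}}\bigl(D_{\ell-1}+V_\ell\bigr) \;<\; \frac{n^{s_\ell}}{n^{s_{\ell-1}}}\bigl(D_{\ell-1}+n^{s_{\ell-1}-1}\bigr),
\]
with $D_0=0$.

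A one-line induction then shows $D_\ell \le \ell\cdot n^{s_\ell-1}$: assuming $D_{\ell-1}\le (\ell-1)n^{s_{\ell-1}-1}$, the recurrence gives $D_\ell\le \frac{n^{s_\ell}}{n^{s_{\ell-1}}}\cdot \ell\cdot n^{s_{\ell-1}-1}=\ell\cdot n^{s_\ell-1}$. Adding the Makeup contribution yields $\sum_{a\in B}\utility_a < (L+1)n^{s_L-1}$, and dividing by $|B|=n^{s_L}$ gives the claimed $(L+1)/n$ bound with a strict inequality inherited from $V_\ell<n^{s_{\ell-1}-1}$.

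I do not anticipate a genuine obstacle. The only subtlety worth a sentence of care is making sure the ``top-average dominates overall average'' argument compares the right sets: the agents ungrouped at the \emph{end} of round $\ell$ are a subset of those ungrouped at the \emph{start} of round $\ell$, and the only items that contribute to either set's cumulative utility are those from rounds $1,\dots,\ell$ (items in earlier rounds have value $0$ to any agent that was already grouped, and later items have not yet arrived), so the bookkeeping in the recurrence is clean.
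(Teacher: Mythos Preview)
Your proposal is correct and follows essentially the same approach as the paper. The paper tracks the \emph{average} utility of the ungrouped agents and shows inductively that it stays below $\ell/n$ after round $\ell$; you track the \emph{total} utility $D_\ell$ and prove the equivalent bound $D_\ell < \ell\cdot n^{s_\ell-1}$, using the same two ingredients (each round contributes at most $V_\ell$ to the ungrouped pool, and removing the top-utility agents cannot raise the remaining average). One phrasing nit: the round-$\ell$ item \emph{can} be allocated to already-grouped agents by an arbitrary online algorithm, they simply derive zero utility from it; so the aggregate is $\le D_{\ell-1}+V_\ell$ rather than exactly equal, but your recurrence already uses the inequality in the right direction.
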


\begin{proof}
    We will first show by induction that after the first $0 \le \ell \le L$ rounds of the Upper Triangular Stage, the agents outside groups $G_1$ to $G_{\ell}$ have average utility less than $\nicefrac{\ell}{n}$.

    The base case when $\ell=0$ is trivially true.
    
    Next, suppose the claim holds for some $\ell < L$, i.e., the agents outside groups $G_1$ to $G_\ell$ have average utilities less than $\nicefrac{\ell}{n}$ from the items in the first $\ell$ rounds of the Upper Triangular Stage. 
    Consider the item in round $\ell + 1$.
    Since there are $n^{s_\ell}$ ungroup agents and the item's supply is $n^{s_\ell-1} - n^{s_{\ell+1}-1}$, the ungrouped agents' average utility from this item is at most:
    \[
        \frac{n^{s_\ell-1} - n^{s_{\ell+1}-1}}{n^{s_\ell}} < \frac{1}{n}
        ~.
    \]
    
    Putting together with the induction hypothesis, the ungrouped agents' average utility after round $\ell+1$ is less than $\nicefrac{(\ell+1)}{n}$. 
    The claim for $\ell+1$  follows by our construction of group $G_{\ell+1}$:
    Since $G_{\ell+1}$ consists of a subset of these agents with the highest utilities, the remaining agents' average utility is at most the overall average, and thus, less than $\nicefrac{(\ell+1)}{n}$.
    
    As the special case when $\ell = L$, the bad agents in group $B$ have average utility less than $\nicefrac{L}{n}$ by the end of the Upper Triangular Stage.
    The lemma now follows since the Makeup Stage provides only $n^{s_L-1}$ additional utilities to these $n^{s_L}$ bad agents.
\end{proof}

Next, we upper bound the $p$-mean welfare of the online algorithm's allocation, using the fact that the average utility of the bad agents in group $B$ is low.

\begin{lemma}
	\label{lem:hardness-negative-algorithm}
	For hard instance $I(L, \alpha)$, we have:
	\[
		\ALG \le (L+1) \cdot n^{\frac{1-s_L-|p|}{|p|}}
		~.
	\]	
\end{lemma}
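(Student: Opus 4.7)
}
The plan is to exploit the fact that for $p<0$, the $p$-mean welfare is a softmin and is therefore dominated by the agents with the smallest utilities, namely the bad agents in group $B$. I will simply discard the contributions of the good agents and bound $\ALG$ using only the $n^{s_L}$ bad agents, whose average utility is upper bounded via \Cref{lem:hardness-negative}.

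Concretely, I would start by writing
\[
	\ALG^p \;=\; \frac{1}{n}\sum_{a\in A} \utility_a^p \;=\; \frac{1}{n}\sum_{a\in A} \utility_a^{-|p|} \;\ge\; \frac{1}{n}\sum_{a\in B} \utility_a^{-|p|},
\]
using that all terms are non-negative. Then I would apply Jensen's inequality to the convex function $x\mapsto x^{-|p|}$ restricted to the $|B|=n^{s_L}$ bad agents, yielding
\[
	\frac{1}{|B|}\sum_{a\in B} \utility_a^{-|p|} \;\ge\; \bigg(\frac{1}{|B|}\sum_{a\in B}\utility_a\bigg)^{-|p|}.
\]
By \Cref{lem:hardness-negative} the average utility inside $B$ is at most $(L+1)/n$, so plugging in gives
\[
	\sum_{a\in B}\utility_a^{-|p|} \;\ge\; n^{s_L}\cdot\bigg(\frac{L+1}{n}\bigg)^{-|p|} \;=\; \frac{n^{s_L+|p|}}{(L+1)^{|p|}}.
\]

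To finish, I would combine the two displays to obtain $\ALG^p \ge n^{s_L+|p|-1}/(L+1)^{|p|}$ and then take the $(1/p)$-th power of both sides. Since $p<0$, this power inverts the inequality, turning a lower bound on $\ALG^p$ into the desired upper bound
\[
	\ALG \;\le\; \bigg(\frac{n^{s_L+|p|-1}}{(L+1)^{|p|}}\bigg)^{-\frac{1}{|p|}} \;=\; (L+1)\cdot n^{\frac{1-s_L-|p|}{|p|}}.
\]
There is no real obstacle here; the only subtlety is tracking the sign of $p$ carefully so that the convexity of $x\mapsto x^{-|p|}$ and the inversion under the $(1/p)$-th power are used in the correct direction. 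The key conceptual point is that throwing away the good agents loses nothing because they only make $\ALG^p$ larger (hence $\ALG$ smaller), so the bound derived purely from group $B$ is in fact a valid upper bound on $\ALG$.
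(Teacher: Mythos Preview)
Your proposal is correct and is essentially identical to the paper's proof: both drop the good agents using $B\subset A$ and $p<0$, apply Jensen's inequality to the convex function $x\mapsto x^{-|p|}$ over the $n^{s_L}$ bad agents, plug in the average-utility bound from \Cref{lem:hardness-negative}, and then take the $(1/p)$-th power. The only difference is cosmetic ordering of the steps.
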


\begin{proof}
	By definition, we have:
	\begin{align*}
		\ALG & = \left( \frac{1}{n} \sum_{a \in A} \utility_a^p \right)^{\frac{1}{p}}\\
	    & \leq \left( \frac{1}{n} \sum_{a \in B} \utility_a^p \right)^{\frac{1}{p}} \tag{$B \subset A$ and $p < 0$} \\
	    & \le \left( \frac{1}{n} \cdot n^{s_L} \cdot \Big(\frac{L+1}{n}\Big)^p \right)^{\frac{1}{p}} \tag{\Cref{lem:hardness-negative} and Jensen's inequality} \\
		& = (L+1) \cdot n^{\frac{1-s_L-|p|}{|p|}}~.
	\end{align*}
\end{proof}

Finally, we lower bound the $p$-mean welfare of the offline optimal allocation.

\begin{lemma}
	\label{lem:hardness-negative-optimal}
	For  hard instance $I(L, \alpha)$, we have:
	
	\[
		\OPT \ge \left(1+\frac{L}{n^{\alpha}} \right)^{-\frac{1}{|p|}} n^{\frac{1-s_L(|p|+1)}{|p|}}
	\]	
\end{lemma}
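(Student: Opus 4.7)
Proof proposal. The plan is to exhibit a specific offline allocation, compute each agent's utility under it, and show that the resulting $p$-mean welfare is at least the claimed lower bound. Since $\OPT$ is the maximum $p$-mean welfare, any feasible allocation $\tilde{x}$ gives $\OPT \ge \big( \tfrac{1}{n}\sum_a \tilde{\utility}_a^p \big)^{1/p}$, and because $p<0$, it suffices to upper bound $\tfrac{1}{n}\sum_a \tilde{\utility}_a^p$.

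The allocation I would use is the one hinted at by the coloring in \Cref{fig:hardness-negative}: send \emph{all} Upper Triangular items to the bad agents in $B$, equally distributed; in the Makeup Stage, give each good agent $a \in G_\ell$ its private item (yielding utility $n^{s_\ell-1}$), and split the single item of supply $n^{s_L-1}$ equally among the bad agents. The Upper Triangular items telescope to total supply $n^{s_0-1}-n^{s_L-1}=1-n^{s_L-1}$, which together with the Makeup item for $B$ gives total supply $1$ allocated to $n^{s_L}$ bad agents. Hence every bad agent gets utility $n^{-s_L}$. So under this allocation the utilities are: $n^{s_\ell-1}$ for the $n^{s_{\ell-1}}-n^{s_\ell}$ agents in $G_\ell$, and $n^{-s_L}$ for the $n^{s_L}$ agents in $B$.

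Plugging into $\tfrac{1}{n}\sum_a \tilde{\utility}_a^p$ and using $p=-|p|$, the contribution from group $B$ is
$\tfrac{1}{n}\cdot n^{s_L}\cdot n^{s_L|p|} = n^{s_L(1+|p|)-1}$, while each group $G_\ell$ contributes at most
$\tfrac{1}{n}\cdot n^{s_{\ell-1}}\cdot n^{(1-s_\ell)|p|} = n^{s_{\ell-1}+(1-s_\ell)|p|-1}$. Here the key step is invoking the defining recurrence \Cref{eq:lower-bound-neg-recurrence}, which ensures
$s_{\ell-1}+(1-s_\ell)|p| = s_L(1+|p|)-\alpha$ for every $\ell$. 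Consequently each of the $L$ good-group contributions is bounded by $n^{s_L(1+|p|)-\alpha-1}$, and summing gives
\[
\frac{1}{n}\sum_a \tilde{\utility}_a^p \;\le\; n^{s_L(1+|p|)-1}\left(1+\frac{L}{n^\alpha}\right).
\]

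Taking the $(1/p)$-th power (which, since $p<0$, equals raising to $-1/|p|$ and thus reverses the inequality direction) yields exactly
$\OPT \ge \big(1+L n^{-\alpha}\big)^{-1/|p|}\, n^{(1-s_L(1+|p|))/|p|}$, as desired. There is no real obstacle here; the only subtlety is the careful sign bookkeeping for $p<0$ and the crucial role of the recurrence in making all $L$ middle terms align to the same exponent, which is precisely what the closed form in \Cref{eq:lower-bound-neg-closed-form} was engineered to achieve.
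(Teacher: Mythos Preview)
Your proposal is correct and follows essentially the same approach as the paper: you use the identical offline allocation (all Upper Triangular items plus the bad-agent Makeup item split equally among $B$, each good agent receiving its private item), bound $(n^{s_{\ell-1}}-n^{s_\ell})$ by $n^{s_{\ell-1}}$ inside the sum, and invoke the recurrence \eqref{eq:lower-bound-neg-recurrence} to align all $L$ good-group exponents. The paper's proof is organized as a single chain of inequalities, but the content and the sign bookkeeping for $p<0$ are the same as yours.
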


\begin{proof}
The agents in group $B$ have value $1$ for all items in the Upper Triangular Stage and the last item in the Makeup Stage.
Distribute these items uniformly to the bad agents.
Then, the agents in group $B$ have utilities $n^{-s_L}$.

Further, allocate the every remaining item in the Makeup Stage to the unique agent who has value $1$ for it.
As a result, the agents in group $G_\ell$ for any $1 \le \ell \le L$ have utilities $n^{s_\ell-1}$. 

Hence, the optimal $p$-mean welfare is at least:
\begin{align*}
	\OPT & \geq \left(\frac{1}{n}\left( n^{s_L}\cdot \left( n^{-s_L}\right)^{p} +\sum_{\ell=1}^L \left( n^{s_{\ell-1}} -n^{s_\ell}\right)\cdot \left( n^{s_\ell -1}\right)^{p}\right)\right)^{\frac{1}{p}}\\
	& \geq \left(\frac{1}{n}\left( n^{s_L(|p|+1)} +\sum_{\ell=1}^L n^{s_{\ell-1}}\cdot \left( n^{s_\ell -1}\right)^{p}\right)\right)^{1/p} \tag{recall that $p < 0$} \\
	& =\left( n^{s_L(|p|+1) -1} +\sum_{\ell=1}^L n^{s_{\ell-1} +|p|( 1-s_\ell) -1}\right)^{1/p}\\
	& = \left(1+\frac{L}{n^{\alpha}} \right)^{-\frac{1}{|p|}} n^{\frac{1-s_L(|p|+1)}{|p|}}
	~.
	\tag{\Cref{eq:lower-bound-neg-recurrence}}
\end{align*}
\end{proof}

\begin{corollary}
	\label{cor:hardness-negative}
	For hard instance $I(L, \alpha)$, we have:
	\[
		\frac{\OPT}{\ALG} \ge \frac{\left(1+\frac{L}{n^{\alpha}}\right)^{-\frac{1}{|p|}} \cdot n^{1-s_L}}{L+1}
		~.
	\]
\end{corollary}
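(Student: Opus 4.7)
The plan is to obtain the corollary as a direct ratio of the two bounds already proved in \Cref{lem:hardness-negative-optimal,lem:hardness-negative-algorithm}. That is, I would write
\[
    \frac{\OPT}{\ALG} \ge \frac{\left(1+\tfrac{L}{n^{\alpha}}\right)^{-\frac{1}{|p|}} n^{\frac{1-s_L(|p|+1)}{|p|}}}{(L+1)\, n^{\frac{1-s_L-|p|}{|p|}}}
\]
and then simplify the power of $n$. The key (and only nontrivial) arithmetic step is the exponent cancellation
\[
    \frac{1-s_L(|p|+1)}{|p|} - \frac{1-s_L-|p|}{|p|} = \frac{-s_L|p| + |p|}{|p|} = 1 - s_L,
\]
which immediately yields the claimed bound $\frac{\left(1+L/n^{\alpha}\right)^{-1/|p|}\, n^{1-s_L}}{L+1}$.

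There is no real obstacle: both input bounds are already in hand, and the corollary is purely a bookkeeping step that combines them. The only thing to be careful about is keeping track of the signs of the exponents (since $p < 0$ and $|p|+1 > 0$, but this does not affect the algebraic cancellation above) and noting that the quantity $\left(1 + L/n^{\alpha}\right)^{-1/|p|}$ from \Cref{lem:hardness-negative-optimal} is carried through unchanged because it does not interact with the $n$-exponents. Consequently, the proof will be essentially a one-line computation, and I would present it as such.
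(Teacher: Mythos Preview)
Your proposal is correct and matches the paper's approach exactly: the corollary is stated without proof in the paper, as it is the immediate quotient of \Cref{lem:hardness-negative-optimal} and \Cref{lem:hardness-negative-algorithm}, with the exponent simplification you wrote out.
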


\subsection{Lower Bounds for Competitive Ratios}

In the range of $-O(\nicefrac{\log\log n}{\log n}) \le p \le - \nicefrac{1}{\log n}$, we resort to the $\Omega(\nicefrac{\log n}{\log \log n})$ lower bound by \Cref{thm:hardness-positive}.
Comparing this lower bound with the $O(n^{|p|}\log n)$ upper bound by Nashian Greedy (\Cref{thm:nashian}), we know that the optimal competitive ratio is poly-logarithmic in the number of agents, transitioning toward the polynomial dependence when $p$ is a negative constant.

We next present the lower bounds for $p \le - \omega(\nicefrac{\log\log n}{\log n})$.
\begin{theorem}
	\label{thm:hardness-negative-almost-nashian}
	For any $-o(1) \le	p \le - \omega(\nicefrac{\log\log n}{\log n})$ and any online algorithm for online $p$-mean welfare maximization, the competitive ratio is no smaller than:
    \[
    	n^{|p| - o(|p|)}
    	~,
    \]
    even when the agents have binary valuations (and unit monopolist utilities).
\end{theorem}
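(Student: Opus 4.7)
The plan is to apply \Cref{cor:hardness-negative} to a well-chosen instance $I(L, \alpha)$ from the family constructed above, and then show that the resulting lower bound on $\nicefrac{\OPT}{\ALG}$ is $n^{|p| - o(|p|)}$. Three quantities must be controlled simultaneously: the exponent $1 - s_L$ should be as close to $|p|$ as possible, the prefactor $(1+\nicefrac{L}{n^\alpha})^{-1/|p|}$ should remain $\Omega(1)$, and the denominator $L+1$ should cost at most $n^{o(|p|)}$.

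I would choose $L = \lceil \nicefrac{1}{|p|} \rceil$ and $\alpha = \nicefrac{2 \log L}{\log n}$, so that $n^\alpha = L^2$. Under the hypothesis $|p| = \omega\big(\nicefrac{\log\log n}{\log n}\big)$ and $|p| = o(1)$, we have $\log(\nicefrac{1}{|p|}) = O(\log\log n) = o(|p| \log n)$. This yields $\alpha = O\big(\nicefrac{\log\log n}{\log n}\big) = o(|p|)$, which in turn ensures $\alpha < |p|$, and $L+1 = O(\nicefrac{1}{|p|}) = n^{o(|p|)}$. Moreover $\nicefrac{L}{n^\alpha} = \nicefrac{1}{L} \le |p|$, so $(1+\nicefrac{L}{n^\alpha})^{-1/|p|} \ge e^{-1/(L|p|)} = \Omega(1)$.

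The main technical step is estimating $1 - s_L$ from the closed form in \Cref{eq:lower-bound-neg-closed-form}. Summing the two geometric series in the numerator and denominator gives
\[
	1 - s_L = \frac{(|p| - \alpha)(1 - |p|^L)}{(1+|p|) - 2|p|^{L+1}}
	~.
\]
Since $L \ge \nicefrac{1}{|p|}$ and $|p| = o(1)$, we have $|p|^L \le |p|^{1/|p|} = e^{-\log(1/|p|)/|p|} = o(1)$, so the right-hand side equals $\tfrac{|p|-\alpha}{1+|p|}(1+o(1))$. Expanding $\tfrac{1}{1+|p|} = 1 - |p| + O(|p|^2)$ and using $\alpha = o(|p|)$ together with $|p|^2 = o(|p|)$ gives $1 - s_L = |p| - o(|p|)$. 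Substituting all three estimates into \Cref{cor:hardness-negative} yields
\[
	\frac{\OPT}{\ALG} \ge \frac{\Omega(1) \cdot n^{|p|-o(|p|)}}{n^{o(|p|)}} = n^{|p|-o(|p|)}
	~,
\]
completing the proof. Every item in the construction has value $1$ or $0$ to each agent, and the Makeup Stage is designed to enforce unit monopolist utilities, so the binary-valuations claim is immediate.

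The main obstacle is the simultaneous balancing of $L$ and $\alpha$: $\alpha$ must be $o(|p|)$ so that $1-s_L$ stays near $|p|$, yet $n^\alpha$ must dominate $L$ so that the prefactor is $\Omega(1)$; at the same time $L$ must be at least $\omega(\nicefrac{1}{\log(1/|p|)})$ so that $|p|^L = o(1)$, but small enough that $\log L = o(|p| \log n)$. The hypothesis $|p| = \omega\big(\nicefrac{\log\log n}{\log n}\big)$ is exactly what ensures $\log(\nicefrac{1}{|p|}) = o(|p| \log n)$, which is the single condition that makes all these constraints simultaneously feasible.
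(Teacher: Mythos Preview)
Your proof is correct, but it takes a more elaborate route than the paper's. The paper simply sets $L = 1$ and $\alpha = \nicefrac{2\log\log n}{\log n}$. With $L=1$ the closed form collapses to $1 - s_L = \nicefrac{(|p|-\alpha)}{(2|p|+1)}$, which is immediately $|p| - o(|p|)$ since $\alpha = o(|p|)$ and $|p|^2 = o(|p|)$; the prefactor is $(1+\nicefrac{1}{\log^2 n})^{-1/|p|} = n^{-o(|p|)}$, and the denominator $L+1=2$ is harmless. Your choice $L = \lceil \nicefrac{1}{|p|} \rceil$ forces you to sum the geometric series and argue that $|p|^L = o(1)$, and to balance $L$ against $n^\alpha$ by setting $n^\alpha = L^2$; all of this works, but none of it is needed. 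The single-round instance already achieves the target exponent because in this regime the dominant loss in $1 - s_L$ comes from the factor $\nicefrac{1}{1+2|p|}$ rather than from the finite-$L$ truncation, so pushing $L$ up buys nothing. Your discussion of ``simultaneous balancing'' is thus solving a harder constraint system than necessary, though it does so correctly.
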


\begin{proof}
	Let $\alpha = \nicefrac{2\log \log n}{\log n} = o(|p|)$ and $L = 1$.
	By \Cref{eq:lower-bound-neg-closed-form}, we have:
	\begin{equation*}
	    1 - s_L = \frac{|p| - \alpha}{2|p| +1} = |p| - o(|p|).
	\end{equation*}
	
	Further:
	\[
		\left(1+\frac{1}{n^{\alpha}}\right)^{-\frac{1}{|p|}} \ge e^{-\frac{1}{|p| n^\alpha}} = n^{-\frac{1}{|p| \log^2 n}} = n^{-o(|p|)}
		~. 
	\]
	
	Therefore, by \Cref{cor:hardness-negative}:
	\begin{equation*}
	    \frac{\OPT}{\ALG} \ge \frac{\left(1+\frac{1}{n^{\alpha}}\right)^{-\frac{1}{|p|}} \cdot n^{1-s_L}}{2} = n^{|p|-o(|p|)}.
	\end{equation*}	
\end{proof}

\begin{theorem}
	\label{thm:hardness-nashian-to-harmonic}
	For any $-1 \le p \le \Omega(1)$ and any online algorithm for online $p$-mean welfare maximization, the competitive ratio is no smaller than:
	\[
		n^{\frac{|p|}{|p|+1}-o(1)}
        ,
	\]
    even when the agents have binary valuations (and unit monopolist utilities).
\end{theorem}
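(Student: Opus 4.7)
The plan is to apply the parameterized family of hard instances $I(L,\alpha)$ developed above together with \Cref{cor:hardness-negative}, choosing $L$ and $\alpha$ so that the exponent $1 - s_L$ approaches the target $|p|/(|p|+1)$ while the overhead factor $(L+1)\cdot(1+L/n^\alpha)^{1/|p|}$ remains $n^{o(1)}$. Concretely, I would set $\alpha = (\log\log n)/\log n$ and $L = \lceil \log n \rceil$. Since $|p|=\Omega(1)$ and $\alpha=o(1)$, the condition $\alpha<|p|$ from the construction holds for all sufficiently large $n$, so the instance is well-defined.

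The main calculation extracts the asymptotics of $s_L$ from the closed form \Cref{eq:lower-bound-neg-closed-form}. For $|p|<1$, substituting the geometric-sum identities $\sum_{i=0}^{L-1}|p|^i = (1-|p|^L)/(1-|p|)$ and $\sum_{i=1}^{L}|p|^i = |p|(1-|p|^L)/(1-|p|)$ gives
\[
    1 - s_L = \frac{(|p|-\alpha)(1-|p|^L)}{2|p|(1-|p|^L) + (1-|p|)}.
\]
With $L = \lceil\log n\rceil$ and $|p|$ bounded away from $1$, we have $|p|^L = n^{\log|p|} = o(1)$, and together with $\alpha = o(1)$ this yields $1 - s_L = |p|/(|p|+1) - o(1)$. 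The boundary case $|p|=1$ needs a separate check because the numerator would vanish; the closed form reduces directly to $1-s_L = (1-\alpha)L/(2L+1) = 1/2 - o(1) = |p|/(|p|+1) - o(1)$, matching the general formula, so both regimes behave uniformly.

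For the overhead, $L+1 = O(\log n) = n^{o(1)}$ and $L/n^\alpha = \log n / \log n = 1$, so $(1+L/n^\alpha)^{-1/|p|} = 2^{-1/|p|} = \Omega(1)$, which is $n^{-o(1)}$. Plugging into \Cref{cor:hardness-negative} yields
\[
    \frac{\OPT}{\ALG} \,\ge\, \frac{\Omega(1)\cdot n^{1 - s_L}}{O(\log n)} \,=\, n^{|p|/(|p|+1) - o(1)},
\]
as desired. The main subtlety is balancing two competing requirements: $L$ must grow fast enough that $|p|^L=o(1)$ (so that the geometric series reaches its limiting value and $1-s_L$ hits $|p|/(|p|+1)$ up to $o(1)$), yet slowly enough that $L/n^\alpha$ stays bounded and $L+1$ is $n^{o(1)}$ (so the overhead does not eat into the exponent). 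The choices $L\sim\log n$ and $\alpha\sim\log\log n/\log n$ achieve both; any asymptotic calibration within this window would work, which also makes the argument robust to the edge behavior at $|p|=1$.
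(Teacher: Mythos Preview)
Your proposal is correct and follows essentially the same route as the paper: apply \Cref{cor:hardness-negative} with $L \approx \log n$, verify that $1 - s_L = \tfrac{|p|}{|p|+1} - o(1)$ via the closed form \Cref{eq:lower-bound-neg-closed-form}, and check that the overhead $(L+1)\cdot(1+L/n^\alpha)^{1/|p|}$ is $n^{o(1)}$. The only notable difference is your choice of $\alpha$: the paper simply takes $\alpha = 0$, so the overhead factor becomes $(1+\log n)^{-(1+1/|p|)}$, which is already $n^{-o(1)}$ since $|p| = \Omega(1)$; your choice $\alpha = (\log\log n)/\log n$ makes the factor $\Theta(1)$ but at the cost of tracking an extra $-\alpha/(|p|+1)$ term in the exponent, and it forces you to split off the $|p|=1$ boundary case because your geometric-sum formula becomes $0/0$ there. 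The paper's computation of $1-s_L$ instead bounds the correction term $\tfrac{|p|^{L+1}}{(2\sum_{i=1}^L |p|^i + 1)(|p|+1)}$ directly by $\tfrac{1}{\log n}$ using $|p|^i \ge |p|^{L+1}$, which works uniformly for all $|p|\in(0,1]$ without a case split.
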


\begin{proof}
	Let $\alpha = 0$ and $L = \log n$.
	By \Cref{eq:lower-bound-neg-closed-form}, we have:
	\begin{align*}
	    1 - s_L 
	    &
	    = \frac{\sum_{i=1}^L |p|^i}{2\sum_{i=1}^L |p|^i +1}  \\
	    &
	    = \frac{|p|}{|p|+1} -\frac{|p|^{L+1}}{(2\sum_{i=1}^L |p|^i+1)(|p|+1)} \\
	    &
	    > \frac{|p|}{|p|+1} - \frac{1}{\log n}
	    ~.
	    \tag{$|p|^i \ge |p|^{L+1}$ and $L = \log n$}
	\end{align*}
	
	Therefore, for our choice of $\alpha$ and $L$, \Cref{cor:hardness-negative} gives:
	\begin{equation*}
	    \frac{\OPT}{\ALG} \geq n^{\frac{|p|}{|p|+1} - \frac{1}{\log n}} \cdot \left( 1+ \log n \right)^{-(1+\frac{1}{|p|})} \geq n^{\frac{|p|}{|p|+1}(1-o(1))}.
	\end{equation*}	
\end{proof}

\begin{theorem}
	\label{thm:hardness-harmonic-to-egalitarian}
	For any $-\infty \le p \le -1$ and any online algorithm for online $p$-mean welfare maximization, the competitive ratio is no smaller than:
	\[
		n^{\frac{1}{2}-o(1)}
        .
	\]
\end{theorem}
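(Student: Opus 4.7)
The plan is to apply \Cref{cor:hardness-negative} on the hard instance $I(L, \alpha)$ with $L = \lceil \log n \rceil$ and $\alpha = 1/\log n$, which will handle every finite $p \le -1$. The boundary case $p = -\infty$ requires a separate direct construction because \Cref{cor:hardness-negative} is stated in terms of $|p|$.

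For finite $|p| \ge 1$, I would first rewrite the closed form in \Cref{eq:lower-bound-neg-closed-form} using the identity $\sum_{i=1}^L |p|^i = |p| \sum_{i=0}^{L-1} |p|^i$ to obtain
\[
	1 - s_L = \frac{|p| - \alpha}{2|p| + 1 / \sum_{i=0}^{L-1}|p|^i} \, .
\]
Since $|p| \ge 1$ forces $\sum_{i=0}^{L-1}|p|^i \ge L = \log n$, and $\alpha/|p| \le 1/\log n$, this gives $1 - s_L \ge \tfrac{1}{2} - o(1)$, so $n^{1-s_L} \ge n^{1/2 - o(1)}$. For the other factors appearing in \Cref{cor:hardness-negative}: $L + 1 = O(\log n)$, and since $n^\alpha = e$ we have $(1 + L/n^\alpha)^{-1/|p|} \ge (1 + \log n / e)^{-1}$ using $1/|p| \le 1$, so both are $n^{o(1)}$. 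Combining yields $\OPT/\ALG \ge n^{1/2 - o(1)}$.

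For $p = -\infty$, I would use a direct instance: one item in the Upper Triangular Stage of supply $1 - 1/\sqrt{n}$ and value $1$ to all agents, after which $B$ is defined as the $\sqrt{n}$ agents of lowest utility and $G$ as the remaining $n - \sqrt{n}$; then a Makeup Stage with one personal item of supply $1/\sqrt{n}$ for each good agent and one shared item of supply $1/\sqrt{n}$ for the bad agents. By averaging, the $\sqrt{n}$ bad agents' total utility is at most $(1 - 1/\sqrt{n}) \cdot \sqrt{n}/n$ after the Upper Triangular Stage plus $1/\sqrt{n}$ from the shared Makeup item, so their average — and hence $\ALG = \min_a U_a$ — is at most $2/n$. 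The offline optimum splits both shared items uniformly among the bad agents and hands each good agent its personal item, so every agent achieves utility $1/\sqrt{n}$, giving $\OPT/\ALG \ge \sqrt{n}/2 = n^{1/2 - o(1)}$. The main obstacle is verifying the estimates hold uniformly in $|p| \in [1, \infty)$, especially at $|p| = 1$ where the geometric sums degenerate to arithmetic ones and $1 - s_L$ approaches $1/2$ most slowly; the choice $L = \log n$ is precisely what is needed so that $\sum_{i=0}^{L-1}|p|^i$ dominates the correction $1/(2|p|)$ uniformly.
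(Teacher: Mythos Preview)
Your proof is correct and follows essentially the same approach as the paper: instantiate the family $I(L,\alpha)$ with $L \approx \log n$ and read off the bound from \Cref{cor:hardness-negative}. The paper takes $\alpha = 0$ rather than your $\alpha = 1/\log n$ (both choices work equally well), and it does not single out $p = -\infty$ for a separate argument --- your explicit $L=1$, $s_1 = \tfrac{1}{2}$ construction for that case is a small improvement in rigor over the paper, which tacitly relies on the limiting form of the formulas (or on the already-known $\sqrt{n}$ bound of Barman, Khan, and Maiti for egalitarian welfare).
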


\begin{proof}
	Let $\alpha = 0$ and $L = \log n$, we have:
	\begin{align*}
	    1 - s_L 
	    &
	    = \frac{\sum_{i=1}^L |p|^i}{2\sum_{i=1}^L |p|^i +1} \\
	    &
	    = \frac{1}{2} -\frac{1}{4\sum_{i=1}^L |p|^i+2} \\ 
	    &
	    \geq \frac{1}{2} - \frac{1}{4\log n+2}
	    ~.
	    \tag{$|p| \ge 1$ and $L = \log n$}
	\end{align*}
	
	Therefore, from \Cref{cor:hardness-negative} we derive the bound:
	\begin{equation*}
	    \frac{\OPT}{\ALG} \geq n^{\frac{1}{2}-\frac{1}{4\log n+2}} \cdot (1+\log n)^{-(1+\frac{1}{|p|})} = n^{\frac{1}{2}-o(1)}
	    .
	\end{equation*}	
\end{proof}

\bibliographystyle{abbrvnat}
\bibliography{reference}

\appendix

\section{Sufficiently Positive Regime: $p \ge \nicefrac{1}{\log n}$}
\label{app:positive-regime}
	
\subsection{Online Primal Dual Framework}

In the positive regime, we follow the online primal dual framework under Fenchel duality.
First, we present an equivalent primal program of the offline optimal benchmark for any $p\in (0,1]$, where we transform the $p$-mean welfare by raising to the $p$-th power:
\begin{equation}
\tag{\text{\rm\bf Primal}}
\label{eq:p>0-primal}
\begin{aligned}
	\text{maximize} \quad & P ~\defeq~ \frac{1}{p} \sum_{a\in A} \utility_a^p \\
	\text{subject to} \quad & \sum_{a \in A} x_a(t) \le 1 && \forall t \in I \\
	& \utility_a = \int_0^T x_a(t) v_a(t) \dif{t} && \forall a \in A \\[2ex]
	& x_a(t) \ge 0 && \forall a \in A, ~\forall t \in I
\end{aligned}
\end{equation}

Consequently, an online algorithm is $\Gamma$-competitive for online $p$-mean welfare maximization if and only if it is $\Gamma^p$-competitive for \ref{eq:p>0-primal}.

The corresponding Fenchel dual convex program of \ref{eq:p>0-primal} is:
\begin{align*}
	\text{minimize} \quad & \int_0^T \alpha(t) \dif{t} + \left( \frac{1}{p} - 1 \right) \sum_{a\in A} \beta_a^{-\frac{p}{1-p}} \\
	\text{subject to} \quad & \alpha(t) \geq v_{a}(t)\beta_a && \forall a \in A, ~\forall t \in I \\[2ex]
	& \beta_a \geq 0 && \forall a \in A, ~\forall t \in I
\end{align*}

It is more convenient in the subsequent analysis to change variables with $\gamma_a = {\beta_a}^{- \nicefrac{1}{(1-p)}}$, to make the (online) relation between $\utility_a$ and $\gamma_a$ ``almost homogeneous''.
Hence, we rewrite the dual program as follows:
\begin{equation}
\tag{\text{\rm\bf Dual}}
\label{eq:p>0-dual}
\begin{aligned}
	\text{minimize} \quad & 
	D ~\defeq~ \int_0^T \alpha(t) \dif{t} + \left( \frac{1}{p} - 1 \right) \sum_{a\in A} \gamma_a^{p}  \\
	\text{subject to} \quad & \alpha(t) \geq v_{a}(t) \gamma_a^{-(1-p)} && \forall a \in A, ~\forall t \in I \\[2ex]
	& \gamma_a \geq 0 && \forall a \in A, ~\forall t \in I
\end{aligned}
\end{equation}

An online primal dual algorithm will maintain not only an online allocation $x$ and the agents' corresponding utilities $\utility_a$'s as the primal assignment, but also a dual assignment $\alpha$ and $\gamma$.
The next lemma summarizes the sufficient conditions under which an online primal-dual algorithm is $\Gamma$-competitive.
Recall that $P$ and $D$ denote the primal and dual objective values.

\begin{lemma}
	\label{lem:p>0-primal-dual}
	Suppose that an online primal dual algorithm $\A$ satisfies the following conditions:
	\begin{enumerate}[label=(\alph*)]
		\item $x = \big( x_a(t) \big)_{a \in A, t \in I}$ forms a feasible assignment of \ref{eq:p>0-primal}; \label{cond:p>0-primal-feasibility}
		\item $\alpha = \big( \alpha(t) \big)_{t \in I}$ and $\gamma = \big( \gamma_a \big)_{a \in A}$ form a feasible assignment of \ref{eq:p>0-dual}; and \label{cond:p>0-dual-feasibility}
		\item $\Gamma^p \cdot P \geq D$~. \label{cond:p>0-ratio}
	\end{enumerate}
	Then, the algorithm is $\Gamma$-competitive for online $p$-mean welfare maximization.
\end{lemma}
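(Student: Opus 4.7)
The plan is to derive competitiveness from weak duality. First, I would show that conditions \ref{cond:p>0-primal-feasibility} and \ref{cond:p>0-dual-feasibility} already imply $P \le D$ for any primal-feasible $x$ and dual-feasible $(\alpha,\gamma)$ maintained by the algorithm, i.e.\ weak duality for this Fenchel pair. Applying this to the offline optimum, I obtain $D \ge P^\star$, where $P^\star$ denotes the optimal value of \ref{eq:p>0-primal}. Chained with condition \ref{cond:p>0-ratio}, this yields $\Gamma^p \cdot P \ge D \ge P^\star$.

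The final step is to translate between the primal objective and the $p$-mean welfare. By definition, $P = \frac{1}{p}\sum_{a} \utility_a^p = \frac{n}{p} \cdot \ALG^{\,p}$, and analogously $P^\star = \frac{n}{p} \cdot \OPT^{p}$. Substituting these and taking a $p$-th root (which preserves the inequality since $p>0$) gives $\Gamma \cdot \ALG \ge \OPT$, which is exactly the claim.

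The only non-routine step is verifying weak duality. I would prove it directly by the tangent-line inequality for the concave map $u \mapsto \frac{1}{p} u^p$ (valid for $0<p\le 1$): for each agent $a$,
\[
    \tfrac{1}{p}\utility_a^{\,p} \;\le\; \Big(\tfrac{1}{p}-1\Big)\gamma_a^{p} \;+\; \gamma_a^{-(1-p)}\,\utility_a.
\]
Summing over $a$, expanding $\utility_a = \int_0^T x_a(t) v_a(t)\dif{t}$, applying the dual constraint $\alpha(t) \ge v_a(t)\gamma_a^{-(1-p)}$ inside the integral, and finally using the primal packing constraint $\sum_a x_a(t) \le 1$, the right-hand side collapses exactly to $D$. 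Hence $P \le D$, establishing weak duality.

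The main obstacle I anticipate is really just picking the right linearization so that the change of variables $\gamma_a = \beta_a^{-1/(1-p)}$ (made precisely to render the dual ``almost homogeneous'') lines up cleanly with the dual constraint; the tangent-line inequality above is tailored to do this in one step, and modulo that choice, the remainder is a straightforward chain of feasibility substitutions. After that, the conclusion is purely mechanical: combine weak duality with hypothesis \ref{cond:p>0-ratio} and take a $p$-th root.
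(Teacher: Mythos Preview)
Your proposal is correct and follows essentially the same route as the paper: weak duality gives $P^\star \le D$, condition~\ref{cond:p>0-ratio} gives $D \le \Gamma^p P$, and then you translate back to $p$-mean welfare via $P=\tfrac{n}{p}\ALG^p$ and $P^\star=\tfrac{n}{p}\OPT^p$. The only difference is cosmetic: the paper simply invokes weak duality as a black box (via the intermediate $D^\star$), whereas you supply an explicit one-line proof of it using the tangent-line (Fenchel--Young) inequality for $u\mapsto\tfrac1p u^p$, which is a nice self-contained touch.
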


\begin{proof}
	First, condition~\ref{cond:p>0-primal-feasibility} ensures that algorithm $\A$ makes a feasible allocation.
	
	Next, we prove the competitive ratio.
	Recall that $\OPT$ denotes the optimal $p$-mean welfare.
	Denote $P^*$ and $D^*$ as the optimal primal and dual objective values.
	By the relation between \ref{eq:p>0-primal} and the $p$-mean welfare maximization problem, we have:
	\[
		\OPT ~=~ \Big(\frac{p}{n} \cdot P^*\Big)^{\nicefrac{1}{p}} 
		~.
	\]
	
	By weak duality, we have $P^* \le D^*$. 
	By condition~\ref{cond:p>0-dual-feasibility}, the dual assignment is feasible, which implies $D^* \leq D$ because the dual is a minimization problem.
    Hence, we get that:
    \[
    	\OPT 
        ~\leq~ 
        \Big(\frac{p}{n} \cdot D\Big)^{\nicefrac{1}{p}} 
        ~.
    \]
    
    Combining with condition~\ref{cond:p>0-ratio} that $D \le \Gamma^p \cdot P$, we conclude that:
	\begin{equation*}
        \OPT 
        ~\leq~ 
        \Gamma \cdot \Big(\frac{p}{n} \cdot P(x)\Big)^{\nicefrac{1}{p}}
        ~.
	\end{equation*}
	
	Finally, by the relation between \ref{eq:p>0-primal} and the $p$-mean welfare maximization problem, the right hand side equals $\Gamma \cdot \ALG$.
	In other words, the algorithm is $\Gamma$-competitive.
\end{proof}
	
The design and analysis of online primal dual algorithms are driven by the primal and dual programs.
The Karush-Kuhn-Tucker (KKT) conditions assert that the optimal primal solution $x^*$, $\utility^*$ and the optimal dual solution $\alpha^*$, $\beta^*$, and the corresponding $\gamma^*$ satisfy that:
\[
	\beta_a^* = \big( \utility_a^* \big)^{-(1-p)}
\]
and, thus, by $\gamma_a^* = (\beta_a^*)^{-\nicefrac{1}{(1-p)}}$ we have:
\[
	\gamma_a^* = \utility_a^*
	~.
\]

Further, if an item $t$ is (partially) allocated to an agent $a$, i.e., if $x_a(t) > 0$, then the corresponding dual constraint is tight:
\[
	\alpha^*(t) = v_a(t) \big(\gamma_a^*\big)^{-(1-p)} = v_a(t) \big(\utility_a^*\big)^{-(1-p)}
	~.
\]

Since the partial derivative of $U_a$ in $x_a(t)$ equals $v_a(t) U_a^{-(1-p)}$, the above condition says that every item is allocated to the agents that yield the maximum marginal increase in the objective of \ref{eq:p>0-primal}, conditioned on the allocation of the other items.	

When an online algorithm decides how to allocate item $t$, it only knows an agent $a$'s current utility $\utility_a$ but not the final one.
Hence, the algorithm has insufficient information to decide which agent yields the maximum increase in the objective with respect to the final utility.
The online primal dual algorithms in this section will project the agents' final utilities based on the current ones, and choose the primal and dual assignments based on the projected utilities.

\subsection{Greedy}

\ref{eq:p>0-primal} as an online algorithm problem is a special case of online matching with concave returns studied by \citet{DevanurJ:STOC:2012}.
Applying their results to degree-$p$ polynomial functions, we have a $p^{-p}$ competitive ratio for \ref{eq:p>0-primal}, and correspondingly a $\nicefrac{1}{p}$ competitive ratio for $p$-mean welfare maximization.
We summarize this result below as a theorem, and present their algorithm and analysis to be self-contained.
         
\begin{theorem}[c.f.\ \citet{DevanurJ:STOC:2012}]
	\label{thm:p>0-1/p}
	For any $p\in (0,1]$, there is a $\nicefrac{1}{p}$-competitive primal dual algorithm for online $p$-mean welfare maximization.
\end{theorem}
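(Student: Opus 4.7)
The plan is to apply \Cref{lem:p>0-primal-dual} with $\Gamma = 1/p$. I will use a greedy primal allocation together with a scaled dual assignment. Specifically, allocate each item $t$ entirely to some agent $a^*(t) \in \argmax_{a} v_a(t) \utility_a(t)^{-(1-p)}$, i.e., greedily maximize the instantaneous marginal of the primal objective $P = \frac{1}{p} \sum_a \utility_a^p$. For the dual, set $\gamma_a = c \cdot \utility_a$ for a positive constant $c$ to be determined later, and $\alpha(t) = c^{-(1-p)} \cdot v_{a^*(t)}(t)\, \utility_{a^*(t)}(t)^{-(1-p)}$ at the moment of arrival.

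Primal feasibility is immediate from allocating each item to a single agent. Dual feasibility follows from the monotonicity of $\utility_a$: since utilities only grow, $\utility_a \geq \utility_a(t)$ implies $\utility_a^{-(1-p)} \leq \utility_a(t)^{-(1-p)}$, and the greedy choice of $a^*(t)$ then yields $v_a(t)\, \gamma_a^{-(1-p)} = c^{-(1-p)} v_a(t)\, \utility_a^{-(1-p)} \leq \alpha(t)$ for every $a$ and $t$. For the ratio condition, the key primal identity is that under the greedy rule, $\int_0^T \max_a v_a(t)\, \utility_a(t)^{-(1-p)} \dif{t} = \frac{1}{p} \sum_a \utility_a^p = P$, obtained by integrating $\frac{d}{dt} P = v_{a^*(t)}(t)\, \utility_{a^*(t)}(t)^{p-1}$ along the algorithm's trajectory. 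Substituting, $\int \alpha(t) \dif{t} = c^{-(1-p)} P$ and $\bigl(\tfrac{1}{p} - 1\bigr) \sum_a \gamma_a^p = (1-p) c^p P$, so $D = \bigl(c^{-(1-p)} + (1-p) c^p\bigr) P$. A one-variable optimization gives the unique minimizer $c = 1/p$, at which $D = p^{-p} P = \Gamma^p P$, completing condition~(c) of \Cref{lem:p>0-primal-dual}.

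The main obstacle is identifying the correct scaling $c = 1/p$. The naive choice $c = 1$, i.e., $\gamma_a = \utility_a$, which is suggested by the KKT conditions of the offline optimum (where $\gamma_a^* = \utility_a^*$), gives only $D = (2-p) P$ and hence a competitive ratio of $(2-p)^{1/p}$; this blows up like $2^{1/p}$ as $p \to 0$ and is strictly worse than $1/p$ for every $p \in (0,1)$. Recognizing that the online dual must be inflated by an exponent-matched factor of $1/p$ relative to the offline KKT value, and correspondingly that $\alpha(t)$ must be deflated by $c^{-(1-p)} = p^{1-p}$, is what makes the primal-dual bookkeeping collapse cleanly to $p^{-p}$.
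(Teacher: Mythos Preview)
Your proposal is correct and is essentially the same argument as the paper's: greedy allocation by $\argmax_a v_a(t)\,\utility_a(t)^{-(1-p)}$ together with the dual scaling $\gamma_a = \utility_a/p$, checked against \Cref{lem:p>0-primal-dual}. The only cosmetic difference is that the paper fixes $c=1/p$ up front and verifies $\dif P/\dif t = p^p\,\dif D/\dif t$ differentially, whereas you leave $c$ free, compute $D = \bigl(c^{-(1-p)} + (1-p)c^p\bigr)P$, and then optimize to recover $c=1/p$.
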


\paragraph{Greedy as an Online Primal Dual Algorithm.}
Na\"ively, we may project the final utility to be $C$ times larger than the current utility for some factor $C > 1$, and assign values to the primal and dual variables based on the projected utilities.
The online primal dual algorithm of \citet{DevanurJ:STOC:2012} chooses $C = \nicefrac{1}{p}$.
Hence, we let $\gamma_a = \nicefrac{\utility_a}{p}$, $\alpha(t) = \max_{a \in A} v_a(t) \gamma_a^{-(1-p)}$, and allocate each item $t$ to an agent that achieves this maximum.

Interestingly, the projected utilities of the agents merely scale their current utilities by the same factor.
Hence, the primal allocation rules with and without the scaling are identical.
In other words, the algorithm of \citet{DevanurJ:STOC:2012} is identical to the greedy algorithm that allocates each item to maximize the immediate increase of the $p$-mean welfare;
the projected utilities mainly play their parts in the dual assignment and the analysis.

\begin{algorithm}{Greedy (as an Online Primal Dual Algorithm)}%
    \emph{Initialization:~}
   	Let $x_a(t) = \utility_a = \alpha(t) = \gamma_a = 0$ for any agent $a\in A$ and any time $t\in I$.\\[2ex]
	\emph{Invariants:~}
	Maintain at all time that for any agent $a\in A$:
    \begin{equation}
    	\label{eq:invariant-devanur-jain}
    	\utility_a = \int_0^T v_a(t) x_a(t) \dif{t} \quad,\quad \gamma_a ~=~ \gamma(\utility_a) ~\defeq~ \frac{\utility_a}{p}
		~.
	\end{equation}

	\emph{Online Decisions:~}
	For each item $t\in I$:
	\begin{enumerate}
		\item Let $\alpha(t) = \max_{a \in A} v_a(t) \gamma_a^{-(1-p)}$.
		\item Allocate the item to an agent that achieves the above maximum.
	\end{enumerate}
\end{algorithm}
	
\medskip

\begin{proof}[Proof of Theorem~\ref{thm:p>0-1/p}]
	We will prove the theorem by verifying that Greedy as a primal dual algorithm satisfies conditions \ref{cond:p>0-primal-feasibility}-\ref{cond:p>0-ratio} in \Cref{lem:p>0-primal-dual} with $\Gamma= \nicefrac{1}{p}$.
	
	Condition \ref{cond:p>0-primal-feasibility} holds because the algorithm allocates each item to one agent and maintains the first invariant in \Cref{eq:invariant-devanur-jain}.
	
	Next, consider condition \ref{cond:p>0-dual-feasibility}. 
	By definition, we set $\alpha(t) = \max_{a \in A} v_a(t) \gamma_a^{-(1-p)}$ \emph{at the time when we allocate item $t$}.
	Hence, for any agent $a \in A$ we have the desired dual constraint:
	\[
		\alpha(t) \geq v_a(t)\gamma_a^{-(1-p)}
		~,
	\]
	for the values of $\gamma_a$'s at that time.
	It remains to verify that $\gamma_a$ can only become larger at the end.
	By the second invariant in \Cref{eq:invariant-devanur-jain}, $\gamma_a$ is non-negative and increasing in $\utility_a$.
	Further, an agent $a$'s utility $\utility_a$ is non-decreasing over time.
	We conclude that $\gamma_a$ is non-decreasing over time.
		
	Finally, we turn to condition \ref{cond:p>0-ratio}. 
	By the initial primal and dual assignments, we have that $P(x) = D(\alpha, \gamma) = 0$ at the beginning, which satisfies the condition.
	Next, consider any time $t \in I$.
	By definition, the item is allocated to an agent $a$ with the maximum $v_a(t) \gamma_a^{-(1-p)}$, which is equivalent to having the maximum $v_a(t) \utility_a^{-(1-p)}$.
	
	As a result of the allocation at time $t$, the primal changes at rate:
	\begin{equation*}
		\frac{\dif{P}}{\dif{t}} ~=~ \underbrace{\vphantom{\Big|} v_a(t)}_{\nicefrac{\dif{\utility_a}}{\dif{t}}} \,\cdot~ \underbrace{\vphantom{\Big|} \utility_a^{-(1-p)}}_{\nicefrac{\dif{P}}{\dif{\utility_a}}}
		~.
	\end{equation*}

	The dual changes at rate:
	\[
		\frac{\dif{D}}{\dif{t}}
		~=~ 
		\underbrace{\vphantom{\bigg|} v_a(t) \gamma_a^{-(1-p)}}_{\alpha(t)} 
		~+~ 
		\frac{1-p}{p} 
		\,\cdot\, 
		\underbrace{\vphantom{\bigg|} p \cdot \gamma_a^{-(1-p)}}_{\nicefrac{\dif{D}}{\dif{\gamma_a}}} 
		\,\cdot\,
		\underbrace{\vphantom{\bigg|} \frac{1}{p}}_{\nicefrac{\dif{\gamma_a}}{\dif{\utility_a}}} 
		\,\cdot\,
		\underbrace{\vphantom{\bigg|} v_a(t)}_{\nicefrac{\dif{\utility_a}}{\dif{t}}}
		~=~
		\frac{1}{p} v_a(t) \gamma_a^{-(1-p)}
		~.		
	\]
	
	By $\gamma_a = \nicefrac{\utility_a}{p}$, we get that:
	\[
		\frac{\dif{P}}{\dif{t}} ~=~ p^p \cdot \frac{\dif{D}}{\dif{t}}
		~.
	\]
	
	Hence, condition~\ref{cond:p>0-ratio} holds with equality for $\Gamma = \nicefrac{1}{p}$ at all time.
\end{proof}

\section{Online Primal Dual in the Positive Nashian Regime}
\label{app:positive-nashian}

Our result is an $\Theta(\log n)$-competitive primal dual algorithm for any $p\in [0,1]$.
Recall that it suffices to prove this competitive ratio for the relaxed problem in which agents start with $\nicefrac{1}{n}$ base utilities.
We modify the online primal dual algorithm of \citet{DevanurJ:STOC:2012} by designing a suitable regularized utility function.

\begin{algorithm}{Regularized Greedy (as an Online Primal Dual Algorithm)}%
	\emph{Initialization:~}
	For any agent $a\in A$ and any time $t\in I$, let:
	\[
		x_a(t) = 0 ~,\quad \utility_a = \frac{1}{n} ~,\quad \alpha(t) = 0 ~,\quad \gamma_a = \frac{\log(n+1)+1}{n}
		~.
	\]
	\emph{Invariants:~}
	Maintain at all time that for any agent $a\in A$:
	\begin{equation}
		\label{eq:invariant-log(n)}
		\utility_a = \int_0^T v_a(t) x_a(t) \dif{t} \quad,\quad \gamma_a = \gamma(\utility_a) \defeq \utility_a \cdot \bigg(1+\log\Big(1+\frac{1}{n}\Big)-\log \utility_a\bigg)
		~.
	\end{equation}
	
	\emph{Online Decisions:~}
	For each item $t\in I$:
	\begin{enumerate}
		\item Allocate the item to an agent $a$ that with the maximum $v_a(t) \cdot \gamma_a^{-(1-p)}$.	
		\item Let $\alpha(t) = v_a(t) \cdot \gamma_a^{-(1-p)}$.
	\end{enumerate}
\end{algorithm}

\begin{lemma}
    \label{lem:p>0-invariant-gamma_a-properties}
    The regularized utility $\gamma(\utility)$ is increasing in $\utility \in [\nicefrac{1}{n}, 1+\nicefrac{1}{n}]$.
\end{lemma}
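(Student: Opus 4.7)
The plan is to differentiate $\gamma(\utility)$ with respect to $\utility$ and show that the derivative is non-negative throughout the interval $[\nicefrac{1}{n}, 1+\nicefrac{1}{n}]$, which will immediately imply that $\gamma$ is (weakly) increasing. Since $\gamma(\utility) = \utility \cdot \bigl(1 + \log(1+\nicefrac{1}{n}) - \log \utility\bigr)$ is a product of $\utility$ with a term that is constant in $\utility$ plus $-\utility \log \utility$, the derivative is easy to compute via the product rule.

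Concretely, I would write $\gamma(\utility) = C\, \utility - \utility \log \utility$, where $C \defeq 1 + \log(1+\nicefrac{1}{n})$ does not depend on $\utility$. Differentiating,
\[
    \gamma'(\utility) = C - \log \utility - 1 = \log\!\left(1 + \tfrac{1}{n}\right) - \log \utility = \log\!\left(\tfrac{1+\nicefrac{1}{n}}{\utility}\right).
\]
For any $\utility \le 1 + \nicefrac{1}{n}$, the argument of the logarithm is at least $1$, and hence $\gamma'(\utility) \ge 0$ on the entire interval (with equality only at the right endpoint). Therefore $\gamma$ is non-decreasing on $[\nicefrac{1}{n}, 1+\nicefrac{1}{n}]$, which is the claim.

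There is no real obstacle here: the lemma is a one-line calculus computation once the derivative is written out. The only thing worth noting is that the $-\utility \log \utility$ term would by itself have its maximum at $\utility = \nicefrac{1}{e}$, so the monotonicity depends crucially on the chosen constant $C$ being large enough to guarantee $\log \utility \le C - 1 = \log(1 + \nicefrac{1}{n})$ throughout the relevant range; this is exactly why the initialization sets $\gamma_a = (\log(n+1)+1)/n$ at $\utility_a = \nicefrac{1}{n}$ and caps the relevant range at $1 + \nicefrac{1}{n}$.
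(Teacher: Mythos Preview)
Your proof is correct; the paper states this lemma without proof, and your direct derivative computation $\gamma'(\utility)=\log\bigl((1+\nicefrac{1}{n})/\utility\bigr)\ge 0$ on the interval is exactly the natural verification.
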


\begin{theorem}
    \label{thm:p>0-log(n)}
    For any $0 < p \le \nicefrac{1}{\log n}$, Regularized Greedy is $O(\log n)$-competitive for online $p$-mean welfare maximization.
\end{theorem}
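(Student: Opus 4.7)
The plan is to invoke \Cref{lem:p>0-primal-dual} with $\Gamma = 1 + \log(n+1) = O(\log n)$; note that $\Gamma$ does not depend on $p$ and that $\Gamma^p \le e$ for every $p \in (0, 1/\log n]$, which is the headroom we will use. I would verify the three conditions in turn. Primal feasibility is immediate: the algorithm fully allocates each item to a single agent, and the first invariant in \Cref{eq:invariant-log(n)} is maintained by construction. For dual feasibility, each $U_a$ is non-decreasing in time and remains in $[1/n, 1+1/n]$, so by \Cref{lem:p>0-invariant-gamma_a-properties} each $\gamma_a$ is non-decreasing. When the algorithm sets $\alpha(t) = v_a(t) \gamma_a^{-(1-p)}$ at the arrival of item $t$, this value equals $\max_{a' \in A} v_{a'}(t) \gamma_{a'}^{-(1-p)}$ with the values of $\gamma_{a'}$ at that moment; at any later time $\gamma_{a'}$ can only be larger, so $v_{a'}(t) \gamma_{a'}^{-(1-p)}$ is only smaller, and the dual constraint persists to termination.

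Condition (c), namely $D \le \Gamma^p P$, is the substantive part. I would prove it by verifying the inequality at time $0$ and then showing $\tfrac{dD}{dt} \le \Gamma^p \tfrac{dP}{dt}$ whenever an item is allocated. For the initial check, $P_0 = n \cdot (1/n)^p/p = n^{1-p}/p$ and $D_0 = (1/p-1)\, n \cdot \bigl((1+\log(n+1))/n\bigr)^p = (1-p)\, n^{1-p}(1+\log(n+1))^p/p$, so
\[
	D_0/P_0 = (1-p)(1+\log(n+1))^p \le \Gamma^p.
\]
For the differential step, introduce $h(U) = 1 + \log(1+1/n) - \log U$ so that $\gamma(U) = U\, h(U)$ and $\gamma'(U) = h(U) - 1$. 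When item $t$ is assigned to the agent $a$ maximizing $v_a(t)\,\gamma_a^{p-1}$, direct calculation yields
\[
	\frac{dP}{dt} = v_a(t)\, U_a^{p-1}, \qquad \frac{dD}{dt} = v_a(t)\, \gamma_a^{p-1}\bigl(1 + (1-p)\gamma'(U_a)\bigr),
\]
and therefore
\[
	\frac{dD/dt}{dP/dt} = \bigl(\gamma_a/U_a\bigr)^{p-1}\bigl(p + (1-p)\, h(U_a)\bigr) = p\, h(U_a)^{p-1} + (1-p)\, h(U_a)^p.
\]
Viewed as a function of $h$, the right-hand side equals $h^{p-1}\bigl(p + (1-p)h\bigr)$, which is at most $h^p$ for all $h \ge 1$ because $p + (1-p)h \le h$ is equivalent to $h \ge 1$. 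Since $h(U_a) \in [1, 1 + \log(n+1)]$, we conclude
\[
	\frac{dD/dt}{dP/dt} \le (1+\log(n+1))^p = \Gamma^p,
\]
and integrating in $t$ together with the initial inequality gives $D \le \Gamma^p P$ throughout.

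The step I expect to be the main obstacle is not the verification itself, which is essentially algebraic once the framework is set, but rather the \emph{choice} of the regularized utility function $\gamma(U) = U(1 + \log(1+1/n) - \log U)$ that is baked into the statement of the theorem. This particular $\gamma$ is engineered so that $\gamma(U)/U = h(U)$ stays in the tight interval $[1, 1+\log(n+1)]$ on the feasible range of utilities, while simultaneously $1 + (1-p)\gamma'(U) = p + (1-p)h(U)$ combines with the factor $(\gamma/U)^{p-1}$ to telescope into the clean expression $p\, h^{p-1} + (1-p)h^p$ controlled by $h^p$. Once the regularizer is fixed, the three conditions of the primal--dual lemma follow in a manner parallel to the proof of \Cref{thm:p>0-1/p}, with $\gamma_a = U_a/p$ replaced by the logarithmically regularized version and the uniform bound $h \le 1 + \log(n+1)$ playing the role of $1/p$.
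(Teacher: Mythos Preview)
Your proposal is correct and follows essentially the same primal--dual approach as the paper: you verify the three conditions of \Cref{lem:p>0-primal-dual} with $\Gamma = 1+\log(n+1)$, handle the initial state, and then bound the ratio of derivatives via the substitution $h(U)=\gamma(U)/U$. The only cosmetic difference is in how the key inequality $\Gamma^p h^{1-p} \ge p+(1-p)h$ is verified: the paper checks it at $h=1$ and then compares derivatives, whereas you factor it as $h^{p-1}(p+(1-p)h)\le h^p\le \Gamma^p$ via the elementary $p+(1-p)h\le h$ for $h\ge 1$; both arguments are equivalent.
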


\begin{proof}%
	We will prove the theorem by verifying that Regularized Greedy as a primal dual algorithm satisfy conditions \ref{cond:p>0-primal-feasibility}-\ref{cond:p>0-ratio} in \Cref{lem:p>0-primal-dual} with:
	\[
		\Gamma=\log(n+1)+1
		~.
	\]
	
	Condition \ref{cond:p>0-primal-feasibility} holds because the algorithm allocates each item to one agent and maintains the first invariant in \Cref{eq:invariant-log(n)}.

     Next, consider condition \ref{cond:p>0-dual-feasibility}. 
	By definition, we set $\alpha(t) = \max_{a \in A} v_a(t) \gamma_a^{-(1-p)}$ \emph{at the time when we allocate item $t$}.
	Hence, for any agent $a \in A$ we have the desired dual constraint:
	\[
		\alpha(t) \geq v_a(t)\gamma_a^{-(1-p)}
		~,
	\]
	for the values of $\gamma_a$'s at that time.
	Further, $\gamma_a$ is non-decreasing over time because $U_a$ is non-decreasing over time, and $\gamma_a$ is non-decreasing in $U_a$ (first property of \Cref{lem:p>0-invariant-gamma_a-properties}).
	Hence, the dual constraint also holds for the final value of $\gamma_a$.

	Finally, we turn to condition \ref{cond:p>0-ratio}. 
	By the initial primal and dual assignments, we have that:
	\[
		P = n\cdot \frac1p \cdot \left({\frac{1}{n}}\right)^p
		~,
	\]
	and:
	\[
		D = n\cdot \left(\frac1p -1\right) \cdot
	\left(\frac{\log(n+1)+1}{n}\right)^p = n\cdot \left(\frac1p -1\right) \cdot
	\left(\frac{\Gamma}{n}\right)^p
	\]
 	at the beginning, which satisfies the condition.
 	
	Next, consider any time $t \in I$.
	Suppose that the algorithm allocates item $t$ to agents $a$.
	The primal changes at rate:
	\begin{equation*}
		\frac{\dif{P}}{\dif{t}} ~=~ \underbrace{\vphantom{\Big|} v_a(t)}_{\nicefrac{\dif{\utility_a}}{\dif{t}}} \,\cdot~ \underbrace{\vphantom{\Big|} \utility_a^{-(1-p)}}_{\nicefrac{\dif{P}}{\dif{\utility_a}}}
		~.
	\end{equation*}

	The dual changes at rate:
	\begin{align*}
	\frac{\dif{D}}{\dif{t}}
	& ~=~ 
	\underbrace{\vphantom{\bigg|} v_a \gamma_a^{-(1-p)}}_{\alpha(t)}
	~+~ 
	\frac{1-p}{p} 
	\,\cdot\, 
	\underbrace{\vphantom{\bigg|} p \cdot \gamma_a^{-(1-p)}}_{\nicefrac{\dif{D}}{\dif{\gamma_a}}} 
	\,\cdot\,
	\frac{\dif{\gamma_a}}{\dif{\utility_a}}
	\,\cdot\,
	\underbrace{\vphantom{\bigg|} v_a(t)}_{\nicefrac{\dif{\utility_a}}{\dif{t}}} ~.
	\end{align*}
	
	By multiplying both sides of $\Gamma^p \cdot \frac{\dif{P}}{\dif{t}} \geq \frac{\dif{D}}{\dif{t}}$ by $\frac{\gamma_a^{1-p}}{v_a(t)}$, we get that:
	\begin{equation*}
		\Gamma^p \cdot \left(\frac{\gamma_a}{\utility_a}\right)^{1-p}
		~\geq~
		1 + (1-p) \frac{\dif{\gamma_a}}{\dif{\utility_a}}~.
	\end{equation*}
	
	Define an auxiliary variable:
	\[
		z_a = z (\utility_a) \defeq \dfrac{\gamma (\utility_a)}{\utility_a} = 1 + \log\Big(1+\frac{1}{n}\Big)-\log \utility_a
		~.
	\]
	
	We have:
	\[
		\frac{\dif{\gamma_a}}{\dif{\utility_a}}
    	= \log\Big(1+\frac{1}{n}\Big)-\log \utility_a
    	= z_a - 1
    	~.
	\]
	
	Hence, the condition can be equivalently stated as:
	\begin{equation}
		\label{eq:p>0-ratio-desired-ineq-transformed-log(n)}
		\Gamma ^p \cdot z_a^{1-p}
		~\geq~ 1 + (1-p) (z_a - 1)~.
	\end{equation}
	
	This holds for $z_a = 1$ because $\Gamma > 1$. 

	It remains to verify that the derivative of the left-hand-side is greater than or equal to that of the right-hand-side, i.e.:
	\begin{equation*}
		(1-p) \bigg(\frac{\Gamma}{z_a}\bigg)^p \ge 1-p
		~.
	\end{equation*}
	
	This holds because $U_a \ge \nicefrac{1}{n}$, which implies $z_a \le \log (n+1)+1 = \Gamma$.
\end{proof}

\section{Generalizing to Non-Unit Monopolist Utilities}
\label{sec:discussion}

This paper introduces algorithms designed to maximize $p$-mean welfare for various ranges of $p$. To facilitate a clearer comparison with previous work (e.g., \citet{BarmanKM:AAAI:2022}), we initially focused on a simplified model where all agents have unit monopolist utilities. This approach allowed us to isolate the complexities introduced by non-unit monopolist utilities. 

In this section, we extend our discussion to generalize these algorithms for scenarios where agents possess non-unit monopolist utilities. Assume that each agent $a$'s monopolist utility is accurately predicted to be $V_a\geq 0$, and the max-min ratio of the monopolist utilities is a constant, i.e.,
\[K \defeq \frac{\max_{a\in A} V_a}{\min_{a\in A} V_a}=O(1).\]

Our results in this general setting are summarized as follows:

\begin{enumerate}
	\item For $\nicefrac{1}{\log n}\le p\le 1$, the online primal dual algorithm by \citet{DevanurJ:STOC:2012}, remains $\nicefrac{1}{p}$-competitive. Indeed, the result naturally extends to the more general setting where the algorithm has zero information about the monopolist utilities of the agents. The details can be found in \Cref{app:positive-regime}.
	
	\item For $-1\le p\le \nicefrac{1}{\log n}$, the competitive results (\Cref{thm:nashian}, \Cref{cor:nashian}, and \Cref{thm:nashian-to-harmonic}) are preserved with a slight adjustment to Nashian Greedy.
	
	\item For $-\infty\le p\le -1$, the competitive result (\Cref{thm:harmonic-to-egalitarian}) is preserved with a slight adjustment to Mixed Greedy.
\end{enumerate}

In the remainder of this section, we discuss how to maintain the competitive results in this general setting for $-\infty \leq p \leq \nicefrac{1}{\log n}$.

\subsection{Preliminaries}

We first restate the definition of Uniform Allocation.

\begin{algorithm}{Uniform Allocation}
	For each item $t \in I$, allocate it uniformly to all agents, i.e., $x_a(t) = \nicefrac{1}{n}$.
\end{algorithm}

The base utility guaranteed by Uniform Allocation now varies for each individual agent based on their own monopolistic utilities.

\begin{lemma}[c.f.\ \Cref{lem:unif}]
	\label{lem:unif-app}
	Every agent $a \in A$ gets utility $\nicefrac{V_a}{n}$ from Uniform Allocation.
\end{lemma}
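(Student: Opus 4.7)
The plan is essentially a one-line computation that unpacks the definitions, since this is the direct analog of \Cref{lem:unif} under the relaxed assumption that monopolist utilities may differ across agents. First I would recall that under Uniform Allocation, the allocation is $x_a(t) = \nicefrac{1}{n}$ for every agent $a \in A$ and every item $t \in I$, and that by the definition of monopolist utility in this generalized setting we have
\[
\int_0^T v_a(t) \dif{t} = V_a.
\]

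Then I would substitute $x_a(t) = \nicefrac{1}{n}$ into the definition of agent $a$'s utility and factor out the constant $\nicefrac{1}{n}$:
\[
\utility_a = \int_0^T v_a(t) \cdot \frac{1}{n} \dif{t} = \frac{1}{n} \int_0^T v_a(t) \dif{t} = \frac{V_a}{n}.
\]
This concludes the proof.

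There is no real obstacle here; the statement is a direct computation from the definition of Uniform Allocation, and the only change from \Cref{lem:unif} is that the total value $\int_0^T v_a(t)\dif{t}$ is now $V_a$ rather than $1$. The lemma is stated separately mainly to serve as a building block for subsequent adjustments of Nashian Greedy and Mixed Greedy in the non-unit monopolist regime, where the per-agent base utility $\nicefrac{V_a}{n}$ will play the role previously played by the uniform $\nicefrac{1}{n}$.
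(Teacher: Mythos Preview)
Your proposal is correct and is exactly the straightforward computation the paper has in mind; in fact, the paper states \Cref{lem:unif-app} (like \Cref{lem:unif}) without proof because it follows immediately from substituting $x_a(t)=\nicefrac{1}{n}$ into the utility definition and using $\int_0^T v_a(t)\dif{t}=V_a$.
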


As before, we consider a relaxation of the online $p$-mean welfare maximization, in which each agent $a$ starts with the base utility $U_a(0) = \nicefrac{V_a}{n}$ for the algorithm's allocation, and the benchmark is the original optimal $p$-mean welfare without the $\nicefrac{V_a}{n}$ base utility.
\begin{lemma}[c.f.\ \Cref{lem:relaxation}]
	\label{lem:relaxation-app}
	If an online algorithm $A$ is $\Gamma$-competitive for the relaxed online $p$-mean welfare maximization problem with accurate predictions of monopolist utilities, then allocating half of each item by Uniform Allocation and the other half by algorithm $A$ is $2 \Gamma$-competitive for the original problem with accurate predictions of monopolist utilities.
\end{lemma}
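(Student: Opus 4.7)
The plan is to adapt the combining argument used for \Cref{lem:relaxation}, with the only new ingredient being that the base utility provided by Uniform Allocation is now agent-specific, namely $V_a/n$ (from \Cref{lem:unif-app}) rather than $1/n$. First I would run algorithm $A$ as a thought experiment on the full instance to obtain an allocation $x^A = (x^A_a(t))_{a\in A, t\in I}$ and associated utilities $U_a^A = \int_0^T v_a(t) x^A_a(t)\dif{t}$. By hypothesis, $A$ is $\Gamma$-competitive for the relaxed problem, so the relaxed welfare
\[
\bigg(\frac{1}{n}\sum_{a\in A}\Big(\tfrac{V_a}{n}+U_a^A\Big)^p\bigg)^{1/p}\ \geq\ \frac{\OPT}{\Gamma}
\]
(with the obvious interpretation for $p=0$ or $p=-\infty$).

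Next I would implement the combined allocation: for each item $t$, give $\tfrac{1}{2n}$ of it to each agent (the Uniform half), and give $\tfrac{1}{2}x^A_a(t)$ of it to each agent $a$ (the $A$ half). Feasibility is immediate since $\sum_a \tfrac{1}{2n}+\sum_a \tfrac{1}{2}x^A_a(t)\le \tfrac{1}{2}+\tfrac{1}{2}=1$. Linearity of the utilities then yields
\[
U_a^{\mathrm{comb}}\ =\ \frac{V_a}{2n}+\frac{U_a^A}{2}\ =\ \frac{1}{2}\Big(\frac{V_a}{n}+U_a^A\Big).
\]

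Finally I would invoke the $1$-homogeneity of the $p$-mean to conclude
\[
\bigg(\frac{1}{n}\sum_{a\in A}(U_a^{\mathrm{comb}})^p\bigg)^{1/p}\ =\ \frac{1}{2}\bigg(\frac{1}{n}\sum_{a\in A}\Big(\tfrac{V_a}{n}+U_a^A\Big)^p\bigg)^{1/p}\ \geq\ \frac{\OPT}{2\Gamma},
\]
which is exactly the desired $2\Gamma$-competitiveness for the original problem. There is really no obstacle here: the step that one must be careful about is simply the bookkeeping that Uniform Allocation on a \emph{half}-item stream yields base utility $V_a/(2n)$ (half of what \Cref{lem:unif-app} provides on the full stream), and that $A$'s guarantee applies to the simulated full-stream execution because the monopolist utilities $V_a$ are assumed to be known and hence the reference base utility $V_a/n$ is well-defined.
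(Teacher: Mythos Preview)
Your proof is correct and is precisely the natural argument. The paper itself states \Cref{lem:relaxation-app} (and the analogous \Cref{lem:relaxation}) without proof, treating the claim as immediate from \Cref{lem:unif-app} and the $1$-homogeneity of the $p$-mean; your write-up is exactly the intended justification.
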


To apply the reduction in \Cref{lem:relaxation-app}, the initialization of the base utilities of the agents in Nashian Greedy and Mixed Greedy should be modified in accordance with \Cref{lem:unif-app}.

In alignment with the earlier organization, we first re-define Nashian Greedy and defer the re-definition of Mixed Greedy to \Cref{sec:harmonic-to-egalitarian-app}.

\begin{algorithm}{Nashian Greedy (Extended Version)}
	\emph{Initialization:~}
	Let $x_a(t) = 0$, $\utility_a(0) = \frac{V_a}{n}$ for any agent $a \in A$ and any time $t \in I$.\\[2ex]
	\emph{Online Decisions:~}
	Allocate each item $t \in I$ to an agent $a$ that maximizes:
	\[
	\frac{v_a(t)}{\utility_a(t)}
	~,
	\]
	to greedily maximize the (logarithm of) Nash welfare:
	\[
	\sum_{a \in A} \log \utility_a(t)
	~.
	\]
\end{algorithm}

A key observation is that the Fundamental Lemma of Nashian Allocation (\Cref{lem:nashian}) remains valid under this extended definition of the Nashian Greedy algorithm.

\begin{lemma}[Fundamental Lemma of Nashian Allocation, c.f.\ \Cref{lem:nashian}]
	\label{lem:nashian-app}
	Consider any allocation of the items $\tilde{x} = (\tilde{x}_{ai})_{a \in A, i \in I}$.
	For any time $t \in I$ and the agents' utilities for allocation $\tilde{x}$ up to time $t$, denoted as $\tilde{\utility}(t) = (\tilde{\utility}_a(t))_{a \in A}$, we have:
	\[
	\frac{1}{n} \sum_{a \in A} \frac{\tilde{\utility}_a(t)}{\utility_a(t)} \le \log (n+1)
	~.
	\]
\end{lemma}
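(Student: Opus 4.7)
The plan is to observe that the proof of \Cref{lem:nashian} carries over essentially verbatim to the extended setting, because the only place where the unit-monopolist assumption enters is the ratio $\utility_a(t)/\utility_a(0)$, and the factor $V_a$ cancels in that ratio. First I would replay the pointwise greedy step: for any infinitesimal item at time $s \in [0,t)$ that Nashian Greedy (Extended Version) allocates to agent $a$, the greedy criterion gives $v_a(s)/\utility_a(s) \ge v_{\tilde a}(s)/\utility_{\tilde a}(s)$ for every $\tilde a \in A$. As in the original proof, the left-hand side equals the instantaneous rate of increase of $\sum_{a \in A}\log \utility_a(s)$ at time $s$, and relaxing the denominator on the right by monotonicity $\utility_{\tilde a}(s) \le \utility_{\tilde a}(t)$ yields the same pointwise differential inequality. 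None of these steps depend on the monopolist utilities being unit, nor on the constant $K$.

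I would then multiply by $\tilde{x}_{\tilde a}(s)$, sum over $\tilde a \in A$, and integrate over $s \in [0, t)$, exactly as in the original argument, to reach the intermediate bound
\[
	\sum_{a \in A} \log \frac{\utility_a(t)}{\utility_a(0)} \;\ge\; \sum_{a \in A} \frac{\tilde{\utility}_a(t)}{\utility_a(t)}.
\]
The only step that must be re-examined is the final bound on $\utility_a(t)/\utility_a(0)$. With the extended initialization $\utility_a(0) = V_a/n$, and the observation that $\utility_a(t) \le V_a/n + V_a = V_a(1+1/n)$ (since the contribution to agent $a$'s utility from actual allocations cannot exceed its monopolist utility $V_a$), the factor $V_a$ cancels and we recover $\utility_a(t)/\utility_a(0) \le n+1$, uniformly in $a$. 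This is exactly the ratio that produced $\log(n+1)$ in the unit-monopolist proof, and the lemma follows by the same final step: divide by $n$ and use $\log(n+1)$ as the uniform upper bound.

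I do not anticipate any real obstacle: the argument is a scale-invariant translation of \Cref{lem:nashian}. The conceptual point worth emphasizing is that the greedy criterion $v_a(t)/\utility_a(t)$ is homogeneous of degree zero in $(\utility_a)_{a \in A}$, so rescaling each agent's base utility from $1/n$ to the proportional $V_a/n$ preserves both the dynamics produced by the algorithm and every ratio that appears in the analysis. In particular, the assumption $K = O(1)$ plays no role in this fundamental lemma itself; it will be invoked only in the subsequent arguments that compare quantities across agents, not here.
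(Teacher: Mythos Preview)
Your proposal is correct and follows essentially the same route as the paper's proof: the greedy inequality, relaxation of the denominator, summation and integration to obtain $\sum_a \log(\utility_a(t)/\utility_a(0)) \ge \sum_a \tilde{\utility}_a(t)/\utility_a(t)$, and then the observation that $\utility_a(0)=V_a/n$ and $\utility_a(t)\le V_a(1+1/n)$ so the $V_a$ cancels in the ratio. Your additional remarks about scale invariance and the irrelevance of $K$ for this lemma are accurate and worth keeping.
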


\begin{proof}
	Consider any item $s \in [0, t)$.
	Suppose that Nashian Greedy allocates it to agent $a$.
	By the definition of the algorithm, for any agent $\tilde{a}$ we have:
	\[
	\frac{v_a(s)}{\utility_a(s)} \ge \frac{v_{\tilde{a}}(s)}{\utility_{\tilde{a}}(s)}
	~.
	\]
	
	The left-hand-side equals the increment of the logarithm of Nash  welfare for Nashian Greedy's allocation.
	Further, we relax the denominator of the right-hand-side to $U_{\tilde{a}}(t)$.
	We get that:
	\[
	\frac{\dif}{\dif{s}} \sum_{a \in A} \log \utility_a(s) \ge \frac{v_{\tilde{a}}(s)}{\utility_{\tilde{a}}(t)}
	~.
	\]
	
	Multiplying this inequality by $\tilde{x}_a(s)$ and summing over all agents $\tilde{a} \in A$, we get that:
	\[
	\frac{\dif}{\dif{s}} \sum_{a \in A} \log \utility_a(s)  \ge \frac{\dif}{\dif{s}} \sum_{a\in A} \frac{\tilde{\utility}_a(s)}{\utility_a(t)}
	\]
	
	Integrating over $s \in [0, t)$ gives:
	\[
	\sum_{a \in A} \log \frac{\utility_a(t)}{\utility_a(0)} \ge \sum_{a \in A} \frac{\tilde{\utility}_a(t)}{\utility_a(t)} 
	~.
	\]
	
	The lemma now follows by $V_a(1+\nicefrac{1}{n}) \ge \utility_a(t) \ge \utility_a(0) = \nicefrac{V_a}{n}$.
\end{proof}

\subsection{Nashian Regime: $-o(1)\leq p\leq \nicefrac{1}{\log n}$}

The same results for $-o(1)\leq p\leq \nicefrac{1}{\log n}$ are established using the extended version of Nashian Greedy.

\begin{theorem}[c.f.\ \Cref{thm:nashian}]
	\label{thm:nashian-app}
	For any $|p| = o(1)$, Nashian Greedy is $O \big( n^{|p|} \cdot \log n \big)$-competitive for the relaxed online $p$-mean welfare maximization problem with accurate predictions of monopolist utilities.
\end{theorem}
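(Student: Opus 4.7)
The plan is to mirror the proof of \Cref{thm:nashian} almost verbatim, since \Cref{lem:nashian-app} (the Fundamental Lemma in the non-unit setting) is already stated and takes the same form as its unit-monopolist counterpart. The only bookkeeping needed is to track how the factors of $V_a$ enter the auxiliary utility bounds, and to verify that the assumption $K = O(1)$ absorbs these factors into lower-order constants.

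For $p = 0$, I would apply \Cref{lem:nashian-app} with $\tilde{x} = x^*$ and use AM-GM on $\frac{1}{n} \sum_{a \in A} \utility_a^* / \utility_a$ to conclude $\OPT / \ALG \le \log(n+1)$. For the case $|p| = o(1)$ with $p \ne 0$, I would reuse the auxiliary allocation $\tilde{x}_a(t) = \tfrac{1}{2n} + \tfrac{1}{2} x_a^*(t)$. Writing $\tilde{\utility}_a = \tfrac{V_a}{2n} + \tfrac{1}{2} \utility_a^*$ and using $\utility_a^* \le V_a$, I obtain the replacement for \Cref{eqn:auxiliary-utility-range}:
\[
    \frac{V_a}{2n} \,\le\, \tilde{\utility}_a \,\le\, \frac{V_a (n+1)}{2n}
    ~.
\]
Concavity of the $p$-mean in the utility vector (for $p \le 1$) still gives $\bigl( \tfrac{1}{n} \sum_a \tilde{\utility}_a^p \bigr)^{1/p} \ge \OPT/2$. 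Then I would carry through the identical decomposition $\ALG^p = \tfrac{1}{n} \sum_a \tilde{\utility}_a^p \, (\tilde{\utility}_a / \utility_a)^{-p}$, introduce $z_a \defeq \tilde{\utility}_a^p / \sum_{a'} \tilde{\utility}_{a'}^p$, and apply the generalized mean inequality between the $(-p)$-mean and the $1$-mean of $\tilde{\utility}_a / \utility_a$ weighted by $z_a$, together with \Cref{lem:nashian-app}.

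The only substantive new step, and the main potential obstacle, is the upper bound on $z_a$. In the unit-monopolist proof the ratio $\max_a \tilde{\utility}_a / \min_a \tilde{\utility}_a$ was at most $n+1$; here the same computation gives at most $K(n+1)$, so
\[
    z_a \,\le\, \frac{1}{n} \bigl( K(n+1) \bigr)^{|p|}
    ~.
\]
Since $K = O(1)$ and $|p| \le 1$, the factor $K^{|p|}$ is bounded, so this inflates the bound on $z_a$ only by a universal constant. Plugging this into the same chain of inequalities as in \Cref{thm:nashian} yields $\ALG \ge \Omega\bigl(\OPT / (n^{|p|} \log n)\bigr)$, which is the claimed $O(n^{|p|} \log n)$ competitive ratio. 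Conceptually, the proof works because the bounded max-min ratio $K$ preserves the $\Theta(n)$ range between an agent's base utility and its maximum achievable utility, which was the only structural property of the unit-monopolist case that the original argument exploited.
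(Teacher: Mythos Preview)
Your proposal is correct and follows essentially the same route as the paper's proof: the same auxiliary allocation $\tilde{x}_a(t) = \tfrac{1}{2n} + \tfrac{1}{2} x^*_a(t)$, the same decomposition via the weights $z_a$, the same application of the generalized mean inequality together with \Cref{lem:nashian-app}, and the same observation that the only change from the unit-monopolist case is the bound $z_a \le \tfrac{1}{n}\bigl(K(n+1)\bigr)^{|p|}$, which is absorbed by $K = O(1)$.
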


\begin{corollary}[c.f.\ \Cref{cor:nashian}]
	\label{cor:nashian-app}	
	For any $p$ such that $|p| \le \nicefrac{1}{\log n}$, Nashian Greedy is $O(\log n)$-competitive for the relaxed online $p$-mean welfare maximization problem with accurate predictions of monopolist utilities.
\end{corollary}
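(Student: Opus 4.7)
The plan is to obtain this corollary as an immediate specialization of the preceding Theorem \ref{thm:nashian-app}. Since the hypothesis $|p| \le \nicefrac{1}{\log n}$ certainly satisfies $|p| = o(1)$ for large $n$, Theorem \ref{thm:nashian-app} applies and delivers an $O\!\left(n^{|p|} \log n\right)$ competitive ratio for the extended Nashian Greedy on the relaxed problem with accurate predictions of monopolist utilities. The only remaining arithmetic step is to observe that $n^{|p|} = \exp(|p| \log n) \le \exp(1) = e$ whenever $|p| \le \nicefrac{1}{\log n}$. Substituting this constant bound into the ratio of Theorem \ref{thm:nashian-app} collapses $n^{|p|}$ into an absolute constant, and the competitive ratio becomes $O(\log n)$, which is the claimed bound.

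There is essentially no obstacle once Theorem \ref{thm:nashian-app} is granted. The only point worth double-checking is that the hidden constant in Theorem \ref{thm:nashian-app} does not secretly depend on $p$ in a way that would spoil the $O(\log n)$ conclusion. Inspecting the proof of Theorem \ref{thm:nashian}, on which Theorem \ref{thm:nashian-app} is modeled, the hidden constant comes from the factor $(n+1)^{|p|}$ times $\log(n+1)$ together with the max–min ratio of the auxiliary utilities $\tilde{U}_a$; in the extended setting this ratio is bounded by $K \cdot (n+1)$ with $K = O(1)$ by assumption, so $K^{|p|}$ and the $2^{|p|}$ from the uniform-half construction are themselves $O(1)$ and get absorbed. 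Hence the constant is genuinely independent of $p$ in our regime, and the one-line calculation $n^{|p|} \le e$ completes the proof.
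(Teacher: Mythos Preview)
Your proposal is correct and matches the paper's approach: the corollary is stated as an immediate consequence of Theorem~\ref{thm:nashian-app}, and the only missing step is precisely the observation that $n^{|p|} \le e$ when $|p| \le \nicefrac{1}{\log n}$. Your additional check that the hidden constants (from $K^{|p|}$ and the halving) remain $O(1)$ in this regime is accurate and more explicit than the paper itself.
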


\begin{proof}[Proof of \Cref{thm:nashian-app}]
	For $p = 0$, the theorem follows by considering \Cref{lem:nashian-app} with $\tilde{x} = x^*$, i.e., the optimal allocation, and applying the AM-GM inequality to the left-hand-side.
	
	Next, we consider the case when $p \ne 0$.
	Define an auxiliary allocation $\tilde{x}$ that distributes half of each item uniformly to all agents, and the other half following the optimal allocation $x^*$.
	That is, for any agent $a \in A$ and any item $t \in I$:
	\[
	\tilde{x}_a(t) \defeq \frac{1}{2n} + \frac{1}{2} x^*_a(t)
	~.
	\]
	
	Let $\tilde{U}_a$ be agent $a$'s utility for allocation $\tilde{x}$.
	By definition, we have:
	\begin{equation}
		\label{eqn:auxiliary-utility-range-app}
		\frac{V_a}{2n} \le \tilde{U}_a \le V_a \left(\frac{1}{2} + \frac{1}{2n} \right)
		~,
	\end{equation}
	and also:
	\begin{equation}
		\label{eqn:auxiliary-allocation-approximation-app}
		\bigg( \frac{1}{n} \sum_{a \in A} \tilde{\utility}_a^p \bigg)^{\frac{1}{p}} \ge \frac{\OPT}{2}
		~.
	\end{equation}
	
	We write the $p$-mean welfare of Nashian Greedy's allocation as:
	\begin{equation}
		\label{eq:nashian-p-mean-rewrite-app}	
		\ALG = \bigg( \frac{1}{n} \sum_{a \in A} \utility_a^p \bigg)^{\frac{1}{p}} = \bigg(\frac{1}{n} \sum_{a \in A} \tilde{\utility}_a^p \cdot \Big( \frac{\tilde{\utility}_a}{\utility_a} \Big)^{-p} \bigg)^{\frac{1}{p}}
		~.
	\end{equation}
	
	Next, we introduce a set of auxiliary variables to denote:
	\[
	z_a \defeq \frac{\tilde{\utility}_a^p}{\sum_{a' \in A} \tilde{\utility}_{a'}^p}
	~.
	\]
	
	Comparing \Cref{eqn:auxiliary-allocation-approximation-app,eq:nashian-p-mean-rewrite-app}, it suffices to show that:
	\[
	\bigg( \sum_{a \in A} z_a \cdot \Big( \frac{\tilde{\utility}_a}{\utility_a} \Big)^{-p} \bigg)^{-\frac{1}{p}} \le \big(K(n+1)\big)^{|p|} \cdot \log(n+1)
	~.
	\]
	
	By the definition of these auxiliary variables, we have $\sum_{a \in A} z_a = 1$.
	Hence, by the relation of $(-p)$-mean and $1$-mean, where recall that $-p < 1$, we have:
	\[
	\bigg( \sum_{a \in A} z_a \cdot \Big( \frac{\tilde{\utility}_a}{\utility_a} \Big)^{-p} \bigg)^{-\frac{1}{p}}\le \sum_{a \in A}  z_a \cdot \frac{\tilde{\utility}_a}{\utility_a}
	~.
	\]
	
	Further, by the range of $\tilde{\utility}_a$ in \Cref{eqn:auxiliary-utility-range-app}, we have:
	\[
	z_a \le \frac{1}{n} \bigg( \frac{\max_{a \in A} \tilde{\utility}_a}{\min_{a \in A} \tilde{\utility}_a} \bigg)^{|p|} \le \frac{1}{n} \cdot \big(K(n+1)\big)^{|p|}
	~.
	\]
	
	Combining the above with \Cref{lem:nashian-app}, we get that:
	\[
	\bigg( \sum_{a \in A} z_a \cdot \Big( \frac{\tilde{\utility}_a}{\utility_a} \Big)^{-p} \bigg)^{-\frac{1}{p}}\le \big(K(n+1)\big)^{|p|} \cdot \frac{1}{n} \sum_{a \in A} \frac{\tilde{\utility}_a}{\utility_a} \le \big(K(n+1)\big)^{|p|} \cdot \log(n+1)
	~.
	\]
\end{proof}

\subsection[From Nash to Harmonic Welfare]{From Nash to Harmonic Welfare: $-1 \le p \le - \Omega(1)$}

The same result for $-1\leq p\leq -\Omega(1)$ are also established using the extended version of Nashain Greedy.

\begin{theorem}[c.f.\ \Cref{thm:nashian-to-harmonic}]
	\label{thm:nashian-to-harmonic-app}
	For any $-1 \le p \le -	\Omega(1)$, Nashian Greedy is:
	\[
	O \Big( n^{\frac{|p|}{|p|+1}} (\log n)^{\frac{1}{|p|+1}} \Big)
	\]
	competitive for the relaxed online $p$-mean welfare maximization problem with accurate predictions of monopolist utilities.
\end{theorem}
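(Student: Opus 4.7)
The plan is to mirror the proof of \Cref{thm:nashian-to-harmonic} essentially line by line, tracking how the bound $K = O(1)$ on the max-min ratio of monopolist utilities enters each step. First, by \Cref{lem:relaxation-app} it suffices to work in the relaxed problem where agent $a$ starts with base utility $U_a(0) = V_a/n$. The Fundamental Lemma of Nashian Allocation has already been re-established in this setting as \Cref{lem:nashian-app}, so it remains to re-derive the two quantitative consequences of it that drive the proof: analogues of \Cref{lem:bad-agents-optimal-utility} and \Cref{lem:bad-agents-number}.

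For the analogue of \Cref{lem:bad-agents-optimal-utility}, I would reuse the proof verbatim, with the single change that $1 - \int_0^t v_a(s)\dif{s}$ becomes $V_a - \int_0^t v_a(s)\dif{s}$, since agent $a$'s total value for the stream is now $V_a$ rather than $1$. Evaluating at $t = T$ kills this remaining-monopolist-utility term, so the analogue of \Cref{lem:bad-agents-number} reads exactly
\[
\frac{|B_\beta|}{n} \le \bigl(\beta \log(n+1)\bigr)^{\frac{|p|}{|p|+1}}
\]
with no dependence on $K$ at all; the H\"older inequality step goes through unchanged.

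Next I would mimic the integration argument in the proof of \Cref{thm:nashian-to-harmonic}. The only arithmetic that changes is the a priori upper bound on $(U_a/\OPT)^p$: since $U_a \ge V_a/n \ge (\min_a V_a)/n$ and $\OPT \le \max_a V_a = K \min_a V_a$, we now have $(U_a/\OPT)^p \le (Kn)^{|p|}$ rather than $n^{|p|}$. The subsequent computation becomes
\[
\bigg(\frac{\ALG}{\OPT}\bigg)^p \le \int_0^{(Kn)^{|p|}} \bigl(\alpha^{\frac{1}{p}}\log(n+1)\bigr)^{\frac{|p|}{|p|+1}}\dif{\alpha} = \frac{|p|+1}{|p|}(Kn)^{\frac{|p|^2}{|p|+1}}\bigl(\log(n+1)\bigr)^{\frac{|p|}{|p|+1}},
\]
and taking $p$-th roots yields
\[
\frac{\ALG}{\OPT} \ge \bigg(\frac{|p|}{|p|+1}\bigg)^{\frac{1}{|p|}} \cdot (Kn)^{-\frac{|p|}{|p|+1}} \bigl(\log(n+1)\bigr)^{-\frac{1}{|p|+1}}.
\]
Since $K = O(1)$ and $|p| = \Omega(1)$, the factor $K^{|p|/(|p|+1)}$ is absorbed into the $O(\cdot)$, recovering the claimed $O(n^{|p|/(|p|+1)}(\log n)^{1/(|p|+1)})$ ratio.

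There is no real obstacle here; the whole point of the \emph{Fundamental Lemma of Nashian Allocation} is that it is insensitive to the rescaling induced by heterogeneous monopolist utilities, and the Hölder-based bound on bad agents only sees ratios against $\OPT$. The one place to stay careful is to track that $K$ enters exclusively through the upper limit of integration and not through the bad-agent inequality itself, so that when $|p|$ is a constant the extra factor is a constant and the asymptotic rate is preserved.
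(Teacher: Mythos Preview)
Your proposal is correct and follows essentially the same route as the paper: re-derive \Cref{lem:bad-agents-optimal-utility} with $V_a$ in place of $1$, note that \Cref{lem:bad-agents-number} goes through unchanged, and then run the integration argument with the upper limit $(Kn)^{|p|}$ instead of $n^{|p|}$. The one redundancy is your appeal to \Cref{lem:relaxation-app}; the theorem is already stated for the relaxed problem, so that step is not needed.
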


\subsubsection{Further Properties of Nashian Allocation}

\begin{definition}[Bad Agents, \Cref{def:bad} restated]
	For any time $t\in I$ and any $\beta > 0$, let $B_\beta(t)$ be the set of \emph{$\beta$-bad} agents whose regularized utilities at time $t$ are at most a $\beta$ fraction of the optimal $p$-mean welfare, i.e.:
	\[
	B_{\beta}(t) \defeq \Big\{a\in A:\,\utility_a(t) \leq \beta \cdot \OPT \Big\}
	~.
	\]
	Further, we write $B_\beta$ for $B_\beta(T)$, the set of $\beta$-bad agents at the end.
\end{definition}

\begin{lemma}[c.f.\ \Cref{lem:bad-agents-optimal-utility}]
	\label{lem:bad-agents-optimal-utility-app}
	For any time $t\in I$, any $\beta > 0$, and any subset of $\beta$-bad agents $S \subseteq B_\beta(t)$, we have:
	\[
	\frac{1}{n} \sum_{a\in S} \utility_a^* 
	\leq \beta \log (n+1) \cdot \OPT + \frac{1}{n} \sum_{a\in S} \Big(V_a-\int_0^t v_a(s) \dif{s} \Big)
	~.
	\]
\end{lemma}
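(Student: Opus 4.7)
The plan is to mirror the proof of \Cref{lem:bad-agents-optimal-utility} essentially verbatim, with the only adjustment being that the total value of an agent $a$'s items is now $V_a$ rather than $1$. Concretely, I would start by rewriting the left-hand side using the optimal allocation $x^*$:
\[
\frac{1}{n}\sum_{a\in S}\utility_a^* \;=\; \frac{1}{n}\sum_{a\in S}\int_0^T v_a(s)\, x_a^*(s)\dif{s},
\]
and then split the time integral at $t$ into a past part $\int_0^t$ and a future part $\int_t^T$. This matches the original argument's two-part structure.

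For the future part, the trivial bound $x_a^*(s)\le 1$ and $\int_0^T v_a(s)\dif{s}=V_a$ give
\[
\int_t^T v_a(s)\, x_a^*(s)\dif{s} \;\le\; \int_t^T v_a(s)\dif{s} \;=\; V_a - \int_0^t v_a(s)\dif{s},
\]
which, summed over $a\in S$ and scaled by $\nicefrac{1}{n}$, produces exactly the second term in the target bound.

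For the past part, I would apply the Fundamental Lemma of Nashian Allocation in its extended form (\Cref{lem:nashian-app}) with $\tilde{x}=x^*$ restricted to times in $[0,t)$, yielding
\[
\frac{1}{n}\sum_{a\in A}\frac{1}{\utility_a(t)}\int_0^t v_a(s)\,x_a^*(s)\dif{s} \;\le\; \log(n+1).
\]
Dropping every agent outside $S$ from the sum (all terms are non-negative) and then using $\utility_a(t)\le \beta\cdot\OPT$ for $a\in S\subseteq B_\beta(t)$ to clear the denominators, I obtain
\[
\frac{1}{n}\sum_{a\in S}\int_0^t v_a(s)\,x_a^*(s)\dif{s} \;\le\; \beta\log(n+1)\cdot \OPT,
\]
which is exactly the first term in the target bound. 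Adding the two parts completes the proof.

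I do not anticipate any real obstacle here: the only conceptual change relative to the unit-monopolist case is replacing $1$ by $V_a$ in the future-part bound, and the Fundamental Lemma already goes through with the new base utilities $\utility_a(0)=\nicefrac{V_a}{n}$ because the endpoint correction $1+\nicefrac{1}{n}\ge \utility_a(t)/\utility_a(0)\ge 1$ needed in its proof still holds (the ratio does not depend on $V_a$). So the proof is mechanically the same as the original.
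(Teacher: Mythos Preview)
Your proposal is correct and follows essentially the same argument as the paper's own proof: split $\utility_a^*$ at time $t$, bound the future part by $V_a-\int_0^t v_a(s)\dif{s}$, and bound the past part via \Cref{lem:nashian-app} followed by restricting to $S$ and using $\utility_a(t)\le\beta\cdot\OPT$. One minor slip in your parenthetical justification: the endpoint ratio in the proof of \Cref{lem:nashian-app} is $\utility_a(t)/\utility_a(0)\le n+1$, not $1+\nicefrac{1}{n}$; but since you are only citing \Cref{lem:nashian-app} (which the paper already proves), this does not affect your argument.
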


\begin{proof}
	Recall that $x^*$ denotes the optimal allocation.
	The left-hand-side of the lemma's inequality can be written as:
	\[
	\frac{1}{n} \sum_{a \in S} \int_0^T v_a(s) x^*_a(s) \dif{s}
	~.
	\]
	
	On one hand, observe that:
	\[
	\int_t^T v_a(s) x^*_a(s) \dif{s} \le \int_t^T v_a(s) \dif{s} = V_a - \int_0^t v_a(s) \dif{s}
	~.
	\]
	
	On the other hand, by \Cref{lem:nashian-app}, we have:
	\[
	\frac{1}{n} \sum_{a \in A} \frac{1}{\utility_a(t)} \int_0^t v_a(s) x^*_a(s) \dif{s} \le \log(n+1) 
	~.
	\]
	
	We now drop all agents outside subset $S$ from the summation on the left-hand-side, and relax $\utility_a(t)$ to its upper bound $\beta \cdot \OPT$ for the remaining $\beta$-bad agents $a \in S$.
	We get that:
	\[
	\frac{1}{n} \sum_{a \in S} \int_0^t v_a(s) x^*_a(s) \dif{s} \le \beta \log (n+1) \cdot \OPT
	~.
	\]	
	
	Putting together these two parts proves the lemma.
\end{proof}

\begin{lemma}[c.f.\ \Cref{lem:bad-agents-number}]
	\label{lem:bad-agents-number-app}
	For any $\beta > 0$, the fraction of $\beta$-bad agents at the end is at most:
	\[
	\frac{|B_\beta|}{n} \le \big( \beta \log (n+1) \big)^{\frac{|p|}{|p|+1}}
	~.
	\]
\end{lemma}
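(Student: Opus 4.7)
The plan is to mimic the proof of \Cref{lem:bad-agents-number} verbatim, observing that the presence of non-unit monopolist utilities $V_a$ does not affect the argument once we evaluate at the final time $t = T$. The extended version of the supporting lemma (\Cref{lem:bad-agents-optimal-utility-app}) differs from its unit-monopolist analogue only by the appearance of $V_a$ in the residual term $V_a - \int_0^t v_a(s)\dif{s}$, which measures agent $a$'s remaining monopolist utility after time $t$. Since $\int_0^T v_a(s)\dif{s} = V_a$ by the very definition of $V_a$, this residual vanishes at $t=T$, and the inequality collapses to the same clean form used in the unit-monopolist proof.

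Concretely, I would first apply \Cref{lem:bad-agents-optimal-utility-app} with $t = T$ and $S = B_\beta$ to obtain
\[
\frac{1}{n}\sum_{a\in B_\beta} \utility_a^* \le \beta \log(n+1)\cdot \OPT~.
\]
Next, using $p < 0$ and the definition of $\OPT$ as a $p$-mean,
\[
\OPT^p = \frac{1}{n}\sum_{a\in A} (\utility_a^*)^{-|p|} \ge \frac{1}{n}\sum_{a\in B_\beta} (\utility_a^*)^{-|p|}~.
\]
Finally, I would invoke H\"older's inequality with conjugate exponents $\tfrac{|p|+1}{|p|}$ and $|p|+1$:
\[
\Big(\tfrac{1}{n}\sum_{a\in B_\beta} \utility_a^*\Big)^{\!\frac{|p|}{|p|+1}} \Big(\tfrac{1}{n}\sum_{a\in B_\beta} (\utility_a^*)^{-|p|}\Big)^{\!\frac{1}{|p|+1}} \ge \tfrac{|B_\beta|}{n}~,
\]
and combine the three displays to extract the desired bound $|B_\beta|/n \le (\beta \log(n+1))^{|p|/(|p|+1)}$.

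There is no genuine obstacle in this proof; the only subtlety worth highlighting is why the $V_a$'s disappear. Conceptually, the generalization works because the Fundamental Lemma of Nashian Allocation (\Cref{lem:nashian-app}) holds with the same constant $\log(n+1)$ regardless of $V_a$ (the range $V_a/n \le \utility_a(t) \le V_a(1+\tfrac{1}{n})$ gives the same logarithmic ratio), and because the ``total available value'' equals $V_a$ by definition. Thus the bound on $|B_\beta|/n$ depends only on $n$, $p$, and $\beta$, and is insensitive to both the individual monopolist utilities and the max-min ratio $K$.
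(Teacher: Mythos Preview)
Your proposal is correct and follows the paper's proof essentially verbatim: apply \Cref{lem:bad-agents-optimal-utility-app} at $t=T$ so the residual $V_a-\int_0^T v_a(s)\dif{s}$ vanishes, then combine the resulting bound on $\tfrac{1}{n}\sum_{a\in B_\beta}\utility_a^*$ with the trivial bound $\OPT^p\ge\tfrac{1}{n}\sum_{a\in B_\beta}(\utility_a^*)^{-|p|}$ via H\"older's inequality. Your added commentary on why the $V_a$'s disappear is a helpful clarification not spelled out in the paper.
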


\begin{proof}
	By \Cref{lem:bad-agents-optimal-utility-app} with $t = T$ and $S = B_\beta$, we have:
	\[
	\frac{1}{n} \sum_{a \in B_\beta} \utility_a^* \leq \beta \log (n+1) \cdot \OPT
	~.
	\]	
	
	Further, recall that $p < 0$.
	We have:
	\[
	\OPT^p = \frac{1}{n} \sum_{a \in A} (\utility_a^*)^{-|p|} \ge \frac{1}{n} \sum_{a \in B_\beta} (\utility_a^*)^{-|p|} 
	~.
	\]
	
	Finally, by H\"{o}lder's inequality:
	\[
	\bigg( \frac{1}{n} \sum_{a \in B_\beta} \utility_a^* \bigg)^{\frac{|p|}{|p|+1}} 
	\bigg( \frac{1}{n} \sum_{a \in B_\beta} (\utility_a^*)^{-|p|} \bigg)^{\frac{1}{|p|+1}} \ge \frac{1}{n} \sum_{a \in B_\beta} 1 = \frac{|B_\beta|}{n}
	~.
	\]
	
	Combining these inequalities proves the lemma.
\end{proof}

\subsubsection{Proof of \Cref{thm:nashian-to-harmonic-app}}

We compare the $p$-mean welfare of Nashian Greedy to the optimal benchmark by:
\[
\bigg( \frac{\ALG}{\OPT} \bigg)^p = \frac{1}{n} \sum_{a\in A} \bigg( \frac{\utility_a}{\OPT} \bigg)^p
~.
\]

By $\utility_a \ge \nicefrac{V_a}{n} \ge \nicefrac{\min_{a\in A} V_a}{n}$ and $\OPT \le \max_{a\in A} V_a$, and recalling that $p < 0$, we have :
\[
\bigg(\frac{\utility_a}{\OPT}\bigg)^p \le (Kn)^{|p|}
~.
\]

Therefore, the above ratio can be written as:
\begin{align*}
	\bigg( \frac{\ALG}{\OPT} \bigg)^p
	& = \int_0^{(Kn)^{|p|}} \big(\text{fraction of agents with $\big( \nicefrac{\utility_a}{\OPT} \big)^p \ge \alpha$}\big) \dif{\alpha} \\
	&
	\le \int_0^{(Kn)^{|p|}} \big( \alpha^{\frac{1}{p}} \log(n+1) \big)^{\frac{|p|}{|p|+1}} \dif{\alpha} 
	\tag{\Cref{lem:bad-agents-number-app}} \\
	&
	= \int_0^{(Kn)^{|p|}} \big( \log(n+1) \big)^{\frac{|p|}{|p|+1}} \alpha^{-\frac{1}{|p|+1}} \dif{\alpha} \\
	&
	= \frac{|p|+1}{|p|} \cdot (Kn)^{\frac{|p|^2}{|p|+1}} \big(\log(n+1) \big)^{\frac{|p|}{|p|+1}}
	~.
\end{align*}

Taking $p$-th root on both sides gives:
\[
\frac{\ALG}{\OPT} ~\ge \underbrace{\bigg( \frac{|p|}{|p|+1} \bigg)^{\frac{1}{|p|}}}_{\text{$\Omega(1)$ for $|p| = \Omega(1)$}} \cdot~ (Kn)^{-\frac{|p|}{|p|+1}} \big(\log(n+1) \big)^{-\frac{1}{|p|+1}}
~.
\]

\subsection[From Harmonic to Egalitarian Welfare]{From Harmonic to Egalitarian Welfare: $-\infty \le p \le -1$}
\label{sec:harmonic-to-egalitarian-app}

We now present the extended definition of this Mixed Greedy algorithm that combines Nashian Greedy and Regularized Egalitarian Greedy.
In line with the earlier definition, we further relax the problem by letting there be two copies of each item.
This relaxation only affects the competitive ratio by a constant factor.

\begin{algorithm}{Mixed Greedy (Extended Version)}
	\emph{Initialization:~}
	Let $x_a(t) = 0$, $\utility_a(0) = \frac{V_a}{n}$ for any agent $a \in A$ and any time $t \in I$.\\[2ex]
	\emph{Regularization:~}
	Define the regularizer of any agent $a \in A$ as:
	\[
	\regularizer_a(t) = \frac{1}{\Phi} 
	\Big(V_a - \int_0^t v_a(s) \dif{s} \Big)
	~,
	\]
	where $\Phi = \sqrt{n \log(n+1)}$.\\[2ex]
	\emph{Online Decisions:~}
	For each item $t \in I$, which comes in two copies:
	\begin{enumerate}
		\item Allocate the \emph{Nashian copy} using Nashian Greedy.
		\item Allocate the \emph{egalitarian copy} to the agents with the smallest regularized utility:
		\[
		\utility_a(t) + \regularizer_a(t)
		~,
		\]
		to greedily maximize the regularized egalitarian welfare:
		\[
		\min_{a \in A} \big( \utility_a(t) + \regularizer_a(t) \big)
		~.
		\]
	\end{enumerate}
\end{algorithm}

The same result for $-\infty \le p\le -1$ is established using the extended version of Mixed Greedy.

\begin{theorem}[c.f.\ \Cref{thm:harmonic-to-egalitarian}]
	\label{thm:harmonic-to-egalitarian-app}
	For any $-\infty \le p \le -1$, Mixed Greedy is $O \big( \sqrt{n \log n} \big)$-competitive for the relaxed online $p$-mean welfare maximization problem.
\end{theorem}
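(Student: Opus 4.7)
The plan is to follow the blueprint of Theorem~\ref{thm:harmonic-to-egalitarian} step by step, adapting each lemma to the extended setting where $U_a(0) = V_a/n$ and $R_a(t) = (V_a - \int_0^t v_a(s)\dif{s})/\Phi$. Since $K = \max_a V_a / \min_a V_a = O(1)$, all monopolist utilities are comparable up to a constant factor, and I expect each replacement of ``$1$'' by ``$V_a$'' to contribute only $O(1)$ to the final ratio.

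First, I would reintroduce the critical-agent set $C_\beta(t) = \{a \in A : U_a(t) + R_a(t) \le \beta \cdot \OPT\}$ and re-derive Lemma~\ref{lem:critical-agent-number}. The extended Lemma~\ref{lem:bad-agents-optimal-utility-app} gives
\[
    \frac{1}{n}\sum_{a \in C_\beta(t)} U_a^* \le \beta\log(n+1)\cdot \OPT + \frac{1}{n}\sum_{a \in C_\beta(t)} \Phi R_a(t),
\]
and since $R_a(t) \le \beta\cdot\OPT$ for critical agents, the rest of the proof of Lemma~\ref{lem:critical-agent-number} (which uses only H\"older's inequality and the formula $\OPT^p = \frac{1}{n}\sum_a (U_a^*)^{-|p|}$) goes through verbatim. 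Consequently Corollary~\ref{cor:critical-threshold} survives unchanged: with the same choice of $\beta^*$, at most $\sqrt{n\log(n+1)}$ agents are $\beta^*$-critical at any time.

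Next I would re-establish Lemma~\ref{lem:negative-infinity-main-lemma} by proving the stronger invariant $U_a(t) + R_a(t) \ge \beta^*\cdot \OPT$ for all $t$. Initially $R_a(0) = V_a/\Phi \ge (\min_a V_a)/\Phi$; since $\OPT \le \max_a V_a \le K \cdot \min_a V_a$, we have $R_a(0) \ge \beta^* \cdot \OPT$ for sufficiently large $n$ (this is where the $K = O(1)$ assumption enters, costing only a constant factor inside $\beta^*$ if needed). The inductive step is identical to the unit-monopolist case: at any time when there is a $\beta^*$-critical agent, the regularizer of each critical agent decreases at rate $v_a(t)/\Phi$, but allocating the egalitarian copy evenly among the at most $\sqrt{n\log(n+1)} = \Phi$ critical agents would give each a utility increase of at least $v_a(t)/\Phi$, and the greedy rule does at least as well. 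Hence the regularized egalitarian welfare is non-decreasing, and at $t = T$ we obtain $U_a \ge \beta^* \cdot \OPT$ for every agent.

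The final step follows the proof of Theorem~\ref{thm:harmonic-to-egalitarian} essentially verbatim: write $(\ALG/\OPT)^p = \frac{1}{n}\sum_a (U_a/\OPT)^p$, use $U_a \ge \beta^*\cdot \OPT$ to cap the integrand by $(\beta^*)^p$, and apply the extended Lemma~\ref{lem:bad-agents-number-app} inside an integral over $\alpha \in [0, (\beta^*)^p]$. Taking the $p$-th root yields the same $O(\sqrt{n\log n})$ competitive ratio, with $K$ absorbed as a constant. The main obstacle I anticipate is purely bookkeeping rather than conceptual: checking that the constant $K$ does not accumulate anywhere other than as a multiplicative $O(1)$ factor, in particular in the initialization of $R_a(0)$ and in the comparison between $\OPT$ and $\min_a V_a$. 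Everything else is dictated by the Fundamental Lemma of Nashian Allocation (Lemma~\ref{lem:nashian-app}), which already holds unchanged in the extended setting.
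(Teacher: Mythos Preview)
Your proposal is correct and follows essentially the same approach as the paper's proof. The one small discrepancy is in how the constant $K$ is handled in the analogue of Lemma~\ref{lem:negative-infinity-main-lemma}: the paper keeps $\beta^*$ unchanged and proves the slightly weaker invariant $U_a(t)+R_a(t)\ge (\beta^*/K)\cdot\OPT$ (so the initialization $R_a(0)=V_a/\Phi\ge \min_a V_a/\Phi > (\beta^*/K)\max_a V_a \ge (\beta^*/K)\cdot\OPT$ goes through without any ``for sufficiently large $n$''), whereas you propose absorbing $K$ into $\beta^*$. These are equivalent; note however that your stated inequality $R_a(0)\ge \beta^*\cdot\OPT$ genuinely fails for $p=-\infty$ and $K>2$, so the adjustment you mention parenthetically is not optional. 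After that, the integral argument and the $p$-th root computation are identical, with $K^{|p|/(|p|+1)}\le K=O(1)$ absorbed as you anticipated.
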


\subsubsection{Properties of the Regularized Utilities}

\begin{definition}[Critical Agents, \Cref{def:critical} restated]
	For any time $t \in I$, and for any $\beta  \geq 0$, let $C_\beta(t)$ be the set of \emph{$\beta$-critical} agents at time $t$, whose regularized utilities are at most a $\beta$ fraction of the optimal $p$-mean welfare, i.e.:
	\[
	C_{\beta}(t) = \Big\{a\in A:\,\utility_a(t)+\regularizer_a(t) \leq \beta \cdot \OPT \Big\}
	~.
	\]
\end{definition}

\begin{lemma}[c.f.\ \Cref{lem:critical-agent-number}]
	\label{lem:critical-agent-number-app}
	For any time $t\in I$ and any $\beta > 0$, the fraction of $\beta$-critical agents at time $t$ is upper bounded by:
	\[
	\frac{|C_\beta(t)|}{n} \le \max \Big\{ \, \big(2 \beta \log(n+1) \big)^{\frac{|p|}{|p|+1}} \,,\, (2\Phi\beta)^{|p|} \,\Big\}
	~.
	\]
\end{lemma}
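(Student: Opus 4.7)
The proof plan is to mirror the argument for \Cref{lem:critical-agent-number} essentially verbatim, relying on two ingredients that survive the passage to non-unit monopolist utilities: the extended Fundamental Lemma (\Cref{lem:nashian-app}) and its consequence \Cref{lem:bad-agents-optimal-utility-app}. The crucial algebraic identity that makes the argument go through is that, under the extended definition of the regularizer, we still have $\Phi \cdot \regularizer_a(t) = V_a - \int_0^t v_a(s)\dif{s}$, so the ``remaining monopolist utility'' term produced by \Cref{lem:bad-agents-optimal-utility-app} continues to collapse cleanly into a sum of regularizers.

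Concretely, I would start by observing that since $p < 0$ and $C_\beta(t) \subseteq A$,
\[
	\OPT^p = \frac{1}{n}\sum_{a \in A}(U_a^*)^{-|p|} \ge \frac{1}{n}\sum_{a \in C_\beta(t)}(U_a^*)^{-|p|}.
\]
Next, every $\beta$-critical agent is $\beta$-bad (since $U_a(t) \le U_a(t)+\regularizer_a(t) \le \beta \cdot \OPT$), so \Cref{lem:bad-agents-optimal-utility-app} applied to $S = C_\beta(t)$ gives
\[
	\frac{1}{n}\sum_{a \in C_\beta(t)} U_a^* \le \beta \log(n+1)\cdot \OPT + \frac{1}{n}\sum_{a \in C_\beta(t)}\Phi \cdot \regularizer_a(t).
\]
The second term is bounded above by $\frac{|C_\beta(t)|}{n}\cdot \Phi \beta \cdot \OPT$, using $\regularizer_a(t) \le \beta \cdot \OPT$ for critical agents.

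The remaining step is to combine these two displayed inequalities via Hölder's inequality,
\[
	\bigg(\frac{1}{n}\sum_{a \in C_\beta(t)}(U_a^*)^{-|p|}\bigg)^{\frac{1}{|p|+1}}\bigg(\frac{1}{n}\sum_{a \in C_\beta(t)}U_a^*\bigg)^{\frac{|p|}{|p|+1}} \ge \frac{|C_\beta(t)|}{n},
\]
yielding
\[
	\bigg(\beta\log(n+1)+\frac{|C_\beta(t)|}{n}\Phi\beta\bigg)^{\frac{|p|}{|p|+1}} \ge \frac{|C_\beta(t)|}{n}.
\]
Finally, I would close by the same two-case split as in the original proof: either the $\beta\log(n+1)$ term dominates, giving $\frac{|C_\beta(t)|}{n}\le(2\beta\log(n+1))^{|p|/(|p|+1)}$, or the $\frac{|C_\beta(t)|}{n}\Phi\beta$ term dominates, giving $\frac{|C_\beta(t)|}{n}\le(2\Phi\beta)^{|p|}$.

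I do not anticipate any genuine obstacle: the only place where the non-unit monopolist utilities could have caused trouble is in the identification $\sum \Phi \regularizer_a(t) = \sum (V_a - \int_0^t v_a(s)\dif{s})$, and this is exactly how the extended regularizer has been defined in the new Mixed Greedy. The bound $\regularizer_a(t) \le \beta\cdot \OPT$ for critical agents is still automatic from the definition of $C_\beta(t)$, and neither Hölder's inequality nor the case split depends on the normalization of monopolist utilities, so the argument produces the same bound verbatim.
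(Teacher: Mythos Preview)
Your proposal is correct and follows essentially the same approach as the paper's own proof: both use the observation that critical agents are bad, apply \Cref{lem:bad-agents-optimal-utility-app} and the identity $\Phi\cdot\regularizer_a(t) = V_a - \int_0^t v_a(s)\dif{s}$, combine with the $\OPT^p$ bound via H\"older's inequality, and finish with the two-case split. The paper rearranges slightly before splitting (writing the inequality as a sum of two terms bounded below by $1$, then arguing one of them is at least $\tfrac12$), but this is cosmetically the same as your ``which term dominates'' argument.
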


\begin{proof}    
	On one hand, by the definition of $p$-mean welfare and $p < 0$, we have:
	\[
	\OPT^p = \frac{1}{n} \sum_{a\in A} \big(\utility_a^*\big)^{-|p|} \ge \frac{1}{n} \sum_{a \in C_\beta(t)} \big(\utility_a^*\big)^{-|p|}
	~.
	\]
	
	On the other hand, all $\beta$-critical agents are $\beta$-bad.
	We have:
	\begin{align*}
		\frac{1}{n} \sum_{a \in C_\beta(t)} \utility_a^*
		&
		\leq \beta \log (n+1) \cdot \OPT + \frac{1}{n} \sum_{a \in C_\beta(t)} \Big( V_a - \int_0^t v_a(s) \dif{s} \Big)
		\tag{\Cref{lem:bad-agents-optimal-utility-app}} \\
		&
		= \beta \log (n+1) \cdot \OPT + \frac{1}{n} \sum_{a \in C_\beta(t)} \Phi \cdot \regularizer_a(t) \\
		&
		\le \bigg(\beta \log (n+1) + \frac{|C_\beta(t)|}{n} \cdot \Phi \beta \bigg) \cdot \OPT
		~.
		\tag{$\regularizer_a(t) \le \utility_a(t) + \regularizer_a(t) \le \beta \cdot \OPT$}
	\end{align*}
	
	By H\"{o}lder's inequality:
	\[
	\bigg( \frac{1}{n} \sum_{a \in C_\beta(t)} \big(\utility_a^*\big)^{-|p|} \bigg)^{\frac{1}{|p|+1}} \bigg( \frac{1}{n} \sum_{a \in C_\beta(t)} \utility_a^* \bigg)^{\frac{|p|}{|p|+1}} \ge \frac{1}{n} \sum_{a \in C_\beta(t)} 1 = \frac{|C_\beta(t)|}{n}
	~.
	\]
	
	Combining these inequalities gives:
	\[
	\bigg( \beta \log (n+1) + \frac{|C_\beta(t)|}{n} \cdot \Phi \beta \bigg)^{\frac{|p|}{|p|+1}} \ge \frac{|C_\beta(t)|}{n}
	~.
	\]
	
	Rearranging terms, we have:
	\[
	\beta \log (n+1) \cdot \bigg( \frac{n}{|C_\beta(t)|} \bigg)^{\frac{|p|+1}{|p|}} + \Phi\, \beta \cdot \bigg( \frac{n}{|C_\beta(t)|} \bigg)^{\frac{1}{|p|}} \ge 1
	~.
	\]
	
	Hence, either the first part is at least a half, in which case:
	\[
	\frac{|C_\beta(t)|}{n} \le \big(2 \beta \log(n+1) \big)^{\frac{|p|}{|p|+1}}
	~,
	\]
	or the second part is at least a half, in which case:
	\[
	\frac{|C_\beta(t)|}{n} \le (2\Phi\beta)^{|p|}
	~.
	\]
	
	In either case, the lemma follows.
\end{proof}

\begin{corollary}[c.f.\ \Cref{cor:critical-threshold}]
	\label{cor:critical-threshold-app}
	For any time $t \in I$, and:
	\begin{equation}
		\label{eq:beta-star-app}
		\beta^* = \frac{1}{2} \cdot  n^{-\frac{1}{2}-\frac{1}{2|p|}} (\log (n+1))^{-\frac{1}{2}+\frac{1}{2|p|}}
		~,	
	\end{equation}
	we have:
	\[
	\big|C_{\beta^*} (t)\big| \le \sqrt{n \log (n+1)}
	~.
	\]
\end{corollary}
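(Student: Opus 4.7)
The plan is to invoke \Cref{lem:critical-agent-number-app} with $\beta = \beta^*$ and verify that both branches inside the $\max$ simplify to the same bound, after which multiplying by $n$ gives the desired inequality. Concretely, I will compute each branch separately, substituting the specific form of $\beta^*$ given in \Cref{eq:beta-star-app} and the value $\Phi = \sqrt{n\log(n+1)}$.

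First I would handle the second branch $(2\Phi\beta^*)^{|p|}$. Substituting, the factor of $\tfrac{1}{2}$ in $\beta^*$ cancels the factor of $2$, and one finds
\[
2\Phi\beta^* \;=\; \sqrt{n\log(n+1)}\cdot n^{-\frac{1}{2}-\frac{1}{2|p|}}\,(\log(n+1))^{-\frac{1}{2}+\frac{1}{2|p|}} \;=\; n^{-\frac{1}{2|p|}}(\log(n+1))^{\frac{1}{2|p|}}.
\]
Raising this to the $|p|$-th power yields $n^{-1/2}(\log(n+1))^{1/2}$. Next, for the first branch $(2\beta^*\log(n+1))^{|p|/(|p|+1)}$, a similar substitution gives
\[
2\beta^*\log(n+1) \;=\; n^{-\frac{|p|+1}{2|p|}}\,(\log(n+1))^{\frac{|p|+1}{2|p|}},
\]
and raising to the $|p|/(|p|+1)$-th power again produces $n^{-1/2}(\log(n+1))^{1/2}$. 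So the choice of $\beta^*$ is precisely the one that balances the two branches.

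Thus both branches of the $\max$ in \Cref{lem:critical-agent-number-app} evaluate to $\sqrt{\log(n+1)/n}$, and multiplying by $n$ gives $|C_{\beta^*}(t)| \le \sqrt{n\log(n+1)}$. There is no real obstacle here; the only thing to watch is bookkeeping of exponents, since the formula for $\beta^*$ is designed to equate the two branches and the computation is a direct verification.
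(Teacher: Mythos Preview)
Your proposal is correct and matches the paper's intended argument: the corollary is stated without proof precisely because it follows by plugging $\beta^*$ into \Cref{lem:critical-agent-number-app} and checking that both branches of the $\max$ evaluate to $\sqrt{\log(n+1)/n}$, exactly as you compute. The exponent bookkeeping is accurate.
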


Using \Cref{cor:critical-threshold-app}, we derive a universal lower bound for all agents' utilities.

\begin{lemma}[c.f.\ \Cref{lem:negative-infinity-main-lemma}]
	\label{lem:negative-infinity-main-lemma-app}
	For the choice of $\beta^*$ in \Cref{eq:beta-star-app}, and any agent $a \in A$, we have:
	\[
	\utility_a \ge \frac{\beta^*}{K} \cdot \OPT
	~.
	\]
\end{lemma}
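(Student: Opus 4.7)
The plan is to prove the stronger invariant
\[
U_a(t) + R_a(t) \geq \frac{\beta^*}{K} \cdot \OPT \quad \text{for every } a \in A \text{ and every } t \in I,
\]
from which the lemma follows by taking $t = T$ and noting $R_a(T) = 0$. This mirrors the structure of the proof of \Cref{lem:negative-infinity-main-lemma}, with an extra $1/K$ factor that absorbs the gap between the smallest monopolist utility (which controls each $R_a(0)$) and the largest (which can drive $\OPT$).

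First I would verify the invariant at $t = 0$. Each agent's utility in any feasible allocation is bounded by its monopolist utility, and the $p$-mean of non-negative numbers is at most their maximum, so $\OPT \leq \max_{a' \in A} U_{a'}^* \leq \max_{a' \in A} V_{a'}$. Combining this with $V_a \geq \min_{a' \in A} V_{a'} = (\max_{a' \in A} V_{a'})/K$ yields $V_a \geq \OPT/K$, hence $R_a(0) = V_a/\Phi \geq \OPT/(K\Phi)$. Plugging in the explicit expression for $\beta^*$ in \Cref{eq:beta-star-app} gives $\Phi \beta^* = \tfrac{1}{2}\bigl(\log(n+1)/n\bigr)^{1/(2|p|)} < 1$ for sufficiently large $n$, so $R_a(0) \geq (\beta^*/K) \cdot \OPT$ and the invariant holds initially (even before adding the non-negative term $U_a(0) = V_a/n$).

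Next I would show the invariant is preserved by Mixed Greedy. Suppose at some time $t$ an agent $a$ satisfies $U_a(t) + R_a(t) \leq (\beta^*/K) \cdot \OPT$. Since $\beta^*/K \leq \beta^*$, such an agent lies in $C_{\beta^*}(t)$, and by \Cref{cor:critical-threshold-app} at most $\Phi = \sqrt{n\log(n+1)}$ agents are $\beta^*$-critical at time $t$. For every such $a$, $\tfrac{d}{dt} R_a(t) = -v_a(t)/\Phi$, so distributing the egalitarian copy of item $t$ equally among the at most $\Phi$ critical agents would yield $\tfrac{d}{dt} U_a(t) \geq v_a(t)/\Phi$ for each of them, making $U_a(t) + R_a(t)$ non-decreasing. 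Since Mixed Greedy always assigns the egalitarian copy to the agents with the smallest regularized utility, it does at least as well at keeping the minimum regularized utility above the threshold, so the invariant is maintained.

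The one step that genuinely requires care is the initial bound on $R_a(0)$: the critical-agent bound from \Cref{cor:critical-threshold-app} is insensitive to $K$, but the regularizer of an agent with small $V_a$ starts out only a $1/K$ fraction of the one with the largest $V_a$, and since $\OPT$ can be as large as $\max_{a'} V_{a'}$, this unavoidably forces the weaker bound $\beta^*/K$ in the statement. Once this reduction from $\beta^*$ to $\beta^*/K$ is in place, every other part of the original argument transfers verbatim, including the relaxation of the greedy allocation to equal distribution among critical agents.
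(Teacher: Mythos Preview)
Your proof is correct and follows essentially the same approach as the paper: prove the stronger invariant $U_a(t)+R_a(t)\ge(\beta^*/K)\cdot\OPT$, verify it at $t=0$ via $R_a(0)=V_a/\Phi\ge\OPT/(K\Phi)>(\beta^*/K)\cdot\OPT$, and maintain it by showing that whenever an agent is at the threshold it is $\beta^*$-critical (using $\beta^*/K\le\beta^*$), so \Cref{cor:critical-threshold-app} bounds the number of such agents by $\Phi$ and the egalitarian copy keeps the minimum regularized utility from decreasing. Your presentation even makes the step $\beta^*/K\le\beta^*$ explicit, which the paper leaves implicit.
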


\begin{proof}
	We will prove a stronger claim that for any agent $a \in A$ and any time $t \in I$:
	\begin{equation}
		\label{eqn:regularized-utility-app}
		\utility_a(t) + \regularizer_a(t) \ge \frac{\beta^*}{K} \cdot \OPT
		~.
	\end{equation}
	
	Then, the lemma holds as the special case when $t = T$ because $\regularizer_a(T) = 0$.
	
	Initially at time $t = 0$, we have:
	\[
	\regularizer_a(0) = \frac{V_a}{\Phi} \ge \frac{1}{\Phi} \min_{a\in A} V_a > \frac{\beta^*}{K} \cdot \max_{a\in A} V_a \ge \frac{\beta^*}{K} \cdot \OPT
	~.
	\]
	
	To prove that \Cref{eqn:regularized-utility-app} holds at all time $t$, it suffices to show that for any time $t \in I$ when there is at least one $\beta^*$-critical agent $a \in C_{\beta^*}(t)$, the allocation of the egalitarian copy of item $t$ weakly increases the regularized egalitarian welfare.
	
	For any critical agent $a \in C_{\beta^*}(t)$, we have:
	\[
	\frac{\dif}{\dif{t}} \regularizer_a (t) = - \frac{v_a(t)}{\Phi}
	~.
	\] 
	
	Further, \Cref{cor:critical-threshold-app} asserts that at most $\sqrt{n \log(n+1)}$ agents are $\beta^*$-critical.
	Hence, allocating the egalitarian copy of item $t$ equally among these $\beta^*$-critical agents would have yielded:
	\[
	\frac{\dif}{\dif{t}} \utility_a(t) \ge \frac{v_a(t)}{\sqrt{n \log (n+1)}} = \frac{v_a(t)}{\Phi}
	~,
	\]
	and weakly increased the regularized egalitarian welfare. 
	The greedy allocation of the algorithm would only do better.
\end{proof}

\subsubsection{Proof of \Cref{thm:harmonic-to-egalitarian-app}}

The proof is almost verbatim to that of \Cref{thm:nashian-to-harmonic-app}, except that we will use the newly developed \Cref{lem:negative-infinity-main-lemma-app} to lower bound the agents' utilities, replacing the basic bound $\utility_a \ge \nicefrac{1}{n}$.

Consider the $p$-th power of the Mixed Greedy algorithm's $p$-mean welfare, normalized by the $p$-th power of $\OPT$:
\[
\bigg( \frac{\ALG}{\OPT} \bigg)^p = \frac{1}{n} \sum_{a\in A} \bigg( \frac{\utility_a}{\OPT} \bigg)^p
~.
\]

By \Cref{lem:negative-infinity-main-lemma-app} and $p < 0$, we have:
\[
\bigg(\frac{\utility_a}{\OPT}\bigg)^p \le \bigg(\frac{\beta^*}{K}\bigg)^p
~.
\]

The above ratio can therefore be written as:
\begin{align*}
	\bigg( \frac{\ALG}{\OPT} \bigg)^p
	& = \int_0^{(\nicefrac{\beta^*}{K})^p} (\text{fraction of agents with $\big( \nicefrac{\utility_a}{\OPT} \big)^p \ge \alpha$)} \dif{\alpha} \\
	&
	\le \int_0^{(\nicefrac{\beta^*}{K})^p} \big( \alpha^{\frac{1}{p}} \log(n+1) \big)^{\frac{|p|}{|p|+1}} \dif{\alpha} 
	\tag{\Cref{lem:bad-agents-number-app}} \\
	&
	= \int_0^{(\nicefrac{\beta^*}{K})^p} \big( \log(n+1) \big)^{\frac{|p|}{|p|+1}} \alpha^{-\frac{1}{|p|+1}} \dif{\alpha} \\
	&
	= \frac{|p|+1}{|p|} \bigg(\frac{\beta^*}{K}\bigg)^{-\frac{|p|^2}{|p|+1}} \big(\log(n+1) \big)^{\frac{|p|}{|p|+1}} \\
	&
	= \frac{|p|+1}{|p|} (2K)^{{\frac{|p|^2}{|p|+1}}} \big( n \log(n+1) \big)^{\frac{|p|}{2}}
	~.
\end{align*}

Taking $p$-th root on both sides gives:
\[
\frac{\ALG}{\OPT} ~\ge~ \underbrace{\bigg( \frac{|p|}{|p|+1} \bigg)^{\frac{1}{|p|}} (2K)^{-{\frac{|p|}{|p|+1}}}}_{\text{$\Omega(1)$ for $p \le -1$}} ~\cdot~ {\Big(\sqrt{n \log(n+1)}\Big)}^{-1}
~.
\]

\end{document}